\numberwithin{equation}{section}
\newtheorem{theorem}{Theorem}[section]
\newtheorem{corollary}[theorem]{Corollary}
\newtheorem{lemma}[theorem]{Lemma}
\newtheorem{proposition}[theorem]{Proposition}
\theoremstyle{definition}
\newtheorem{assumption}[theorem]{Assumption}
\newtheorem{definition}[theorem]{Definition}
\newtheorem{notation}[theorem]{Notation}
\newtheorem{problem}[theorem]{Problem}
\newtheorem{remark}[theorem]{Remark}
\makeatletter\renewenvironment{proof}[1][\proofname] {\par\pushQED{\qed}\normalfont\topsep6\p@\@plus6\p@\relax\trivlist\item[\hskip\labelsep\bfseries#1\@addpunct{.}]\ignorespaces}{\popQED\endtrivlist}
\newcommand\al{\alpha}
\newcommand\be{\beta}
\newcommand\dd{\mathrm d}
\newcommand\de{\delta}
\newcommand\eps{\varepsilon}
\newcommand\Ga{\Gamma}
\newcommand\ga{\gamma}
\newcommand\ka{\kappa}
\newcommand\La{\Lambda}
\newcommand\la{\lambda}
\newcommand\Om{\Omega}
\newcommand\om{\omega}
\newcommand\si{\sigma}
\newcommand\ze{\zeta}
\renewcommand\d{~\mathrm d}
\renewcommand\phi{\varphi}
\renewcommand\rho{\varrho}
\newcommand\mbb{\mathbb}
\newcommand\mbf{\mathbf}
\newcommand\mc{\mathcal}
\newcommand\mf{\mathfrak}
\newcommand\mr{\mathrm}
\newcommand\ms{\mathscr}
\newcommand\msf{\mathsf}
\begin{document}

\title[Spectral Conditioning of Discrete Random Schr\"odinger Operators]
{On Spatial Conditioning of the Spectrum of Discrete Random Schr\"odinger Operators}
\author{Pierre Yves Gaudreau Lamarre}
\address{University of Chicago, Chicago, IL\space\space60637, USA}
\email{pyjgl@uchicago.edu}
\author{Promit Ghosal}
\address{Massachusetts Institute of Technology, Cambridge, MA\space\space02139, USA}
\email{promit@mit.edu}
\author{Yuchen Liao}
\address{University of Warwick, Coventry\space\space CV4 7AL, UK}
\email{Yuchen.Liao@warwick.ac.uk}
\subjclass[2010]{Primary 60G55; Secondary 47D08, 82B44}
\keywords{Random Schr\"odinger operators, Feynman-Kac formulas, number rigidity, eigenvalue point process,
Markov Processes}
\maketitle

\begin{abstract}
Consider a random Schr\"odinger-type operator of the form $H:=-H_X+V+\xi$ acting on a general
graph $\ms G=(\ms V,\ms E)$, where $H_X$ is the generator of a Markov process $X$ on $\ms G$, $V$ is a deterministic
potential with sufficient growth (so that $H$ has a purely
discrete spectrum), and $\xi$ is a random noise with at-most-exponential tails.
We prove that
the eigenvalue point process of $H$ is number rigid in the sense of Ghosh and Peres \cite{GP17}; that is,
the number of eigenvalues in any bounded domain $B\subset\mbb C$ is determined by the configuration
of eigenvalues outside of $B$.
Our general setting allows to treat cases where $X$ could be non-symmetric (hence $H$ is non-self-adjoint) and $\xi$ has long-range dependence.
Our strategy of proof consists of controlling the variance of the
trace of the semigroup $\mr e^{-t H}$ using the Feynman-Kac formula.
\end{abstract}

\section{Introduction}

Let $\ms G=(\ms V,\ms E)$ be a countably infinite connected graph
with uniformly bounded degrees and a distinguished vertex $0\in\ms V$, which we call the root.
For example, $\ms G$ could be the integer lattice $\mbb Z^d$,
any semiregular tessellation/honeycomb of $\mbb R^d$ that includes the
origin, or a much more general graph.

In this paper, we are interested in the spectral theory of
random Schr\"odinger-type operators of the form
\[Hf(v)=-H_Xf(v)+\big(V(v)+\xi(v)\big)f(v),\qquad v\in\ms V,~f:\ms V\to\mbb R,\]
where we assume that
\begin{enumerate}
\item $H_X$ is the infinitesimal generator of some continuous-time
Markov process $X$ on $\ms G$ (which need not be symmetric);
\item $\xi:\ms V\to\mbb R$ is a random noise (which may have long-range
dependence); and
\item $V:\ms V\to\mbb R\cup\{\infty\}$ is a deterministic potential with sufficient growth
at infinity (as measured by the size of $V(v)$ as $v$ grows farther away from
the root), ensuring that $H$ has a purely discrete spectrum.
\end{enumerate}
More specifically, we are interested in studying the {\it spatial conditioning} of the spectrum
of $H$, i.e., understanding the random configuration of $H$'s eigenvalues in some domain
$B\subset\mbb C$ conditional on the configuration of eigenvalues outside of $B$.
As a first step in this direction, we establish that under general assumptions on $H_X$, $\xi$, and $V$,
$H$'s spectrum is {\it number rigid} in the sense of Ghosh and Peres \cite{GP17}; that is,
the number of eigenvalues of $H$ in bounded domains $B\subset\mbb C$
is a measurable function of the configuration of $H$'s eigenvalues outside of $B$
(we point to Definition \ref{Definition: Rigidity} for a precise definition).      
To the best of our knowledge, ours is the first work to study the occurrence of
such a phenomenon in the spectrum of random Schr\"odinger operators acting on discrete spaces.

The spectral theory of differential operators (including non-self-adjoint operators; e.g.,
\cite{AAD01,Bo17,davies1980,DN02,DHK09,FKV18,H17,KL18,LS09}) is among the
most promiment research programs in mathematical physics; see, for instance, 
\cite{HislopSigal,Teschl}. In particular, starting from the pioneering
work of Anderson \cite{Anderson58}, the study of Sch\"odinger
operators perturbed by irregular noise has attracted a lot of attention; we refer
to \cite{AizenmanWarzel,CarLac} for general introductions to the subject. A particularly active program in this direction
is the work on {\it Anderson localization}, which concerns the appearance of
pure point spectrum and eigenfunction decay; see the survey articles \cite{H08,K08,St11}
for more details.

In contrast to localization and similar questions, in this paper we investigate the {\it transport
of spectral information} from one region to another,
whereby observing the configuration of $H$'s eigenvalues in some domain $D\subset\mbb C$
allows to recover nontrivial information about the spectrum in $D$'s complement.
Such questions of spatial conditioning in general point processes
have long been of interest due to their
natural applications in mathematics and physics; see, e.g., \cite{ApplicationsPtProcess,Ka17}.
In recent years, there has been a renewed interest in such investigations coming from
the seminal work of Ghosh and Peres \cite{GP17} on {\it rigidity and tolerance},
culminating in a now active field of research (e.g.,
\cite{Buf16,Buf18,BDQ18,BNQ18,G15,SG16,GhoshKrishnapur,Ghosh17,PeresSly};
see also \cite{AM80}).
In \cite{GGL20}, we studied the occurrence of number rigidity in the
spectrum of a class of random Schr\"odinger operators on one-dimensional continuous space.
In this paper, we study a similar problem for discrete random Schr\"odinger operators.

\subsection{Organization}
In the remainder of this introduction, we provide an outline of our main results
and proof strategy, we compare the results in this paper to previous investigations
in a similar vein, and we discuss a few natural open questions raised by our work.

In Section \ref{Section: Outline}, we provide a high-level
outline of the proof of our main results. We take this opportunity to
explain how our technical assumptions arise from our computations.
In Section \ref{sec:Main Result}, we state our assumptions
and main results in full details, namely,
Assumptions \ref{Assumption: Graph} and \ref{Assumption: Potential and Noise}
and Theorems \ref{Theorem: Upper}, \ref{Theorem: Rigidity}, and \ref{Theorem: Lower}.
Then, we prove Theorem \ref{Theorem: Upper} in Section \ref{sec: Proof of Upper},
we prove Theorem \ref{Theorem: Rigidity} in Sections \ref{sec: Multiplicity} and \ref{Section: Rigidity},
and we prove Theorem \ref{Theorem: Lower} in Section \ref{sec: Proof of Lower}.

\subsection{Outline of Main Results}

Let $\msf d$ denote the graph distance on $\ms G$.
For every $v\in\ms V$, we use $\msf c_n(v)$, $n\geq0$, to denote $v$'s coordination sequence
in $\ms G$; that is, for every $n\in\mbb N$,
$\msf c_n(v)$ is the number of vertices $u\in\ms V$ such that $\msf d(u,v)=n$. Stated informally,
our main result is as follows:

\begin{theorem}[Informal Statement]\label{thm:Informal}
Suppose that there exists $d\geq1$ such that
\begin{align}
\label{Equation: Coordination 1}
\sup_{v\in\ms V}\msf c_n(v)=O(n^{d-1})\qquad\text{as }n\to\infty.
\end{align}
Under mild technical assumptions on the Markov process $X$ and
the noise $\xi$,
there exists a constant $d/2\leq\al\leq d$ (which, apart from $d$, depends on the
the range of the covariance in $\xi$) such that if $V(v)$ grows
faster than $\msf d(0,v)^\al$ as $\msf d(0,v)\to\infty$,
then
the eigenvalue point process of $H$ is number rigid.
\end{theorem}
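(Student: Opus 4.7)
The plan is to invoke a Ghosh--Peres criterion \cite{GP17}: for each bounded $B\subset\mbb C$, produce a sequence of test functions $\phi_n:\mbb C\to\mbb R$ with $\phi_n\equiv 1$ on $B$, whose linear statistics $\sum_k\phi_n(\la_k)$ have variance tending to zero. The natural way to manufacture such $\phi_n$ is through the functional calculus of $H$ built from the semigroup $\mr e^{-tH}$, so that the Feynman--Kac formula becomes available.

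The starting point is the Feynman--Kac identity
\[
\mr{tr}\big(\mr e^{-tH}\big)\;=\;\sum_{v\in\ms V}\mbb E_v\Big[\mr e^{-\int_0^t(V+\xi)(X_s)\,\dd s}\,\mbf 1_{X_t=v}\Big],
\]
which, conditional on the path $X$, is an explicit functional of $\xi$. Squaring and averaging against an independent copy $Y$ of $X$ yields
\[
\mr{Var}_\xi\big(\mr{tr}(\mr e^{-tH})\big)=\sum_{v,w}\mbb E_{v,w}^{X,Y}\Big[\mr{Cov}_\xi\big(\mr e^{-\int_0^t\xi(X_s)\dd s},\mr e^{-\int_0^t\xi(Y_s)\dd s}\big)\,\mr e^{-\int_0^t(V(X_s)+V(Y_s))\dd s}\mbf 1_{X_t=v,Y_t=w}\Big].
\]
The at-most-exponential tails of $\xi$ allow one to dominate the covariance factor by an exponential moment times a time-integrated covariance of $\xi$ along the two paths, while the potential factor $\mr e^{-\int V}$ concentrates both paths near the root.

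The central quantitative task is then to show that this double sum can be made small. The weight $\mr e^{-\int V}$ localizes paths near the root, turning the contribution of vertices at distance $n$ into something like $n^{d-1}\mr e^{-cn^\al}$ per copy of $X$. The value of $\al$ is dictated by how quickly these contributions must be suppressed against the polynomial growth from the coordination sequence and, crucially, against the correlations of $\xi$: in the short-range (essentially independent) case the covariance collapses the double sum to a near-diagonal, and $\al\approx d/2$ suffices; in the long-range case the covariance adds a further polynomial factor in $n$ to the number of relevant pairs, forcing $\al$ as large as $d$. This quantitative variance estimate is what I expect to be the content of Theorem \ref{Theorem: Upper}.

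The hardest step will be converting this vanishing variance of \emph{semigroup traces} into the vanishing variance of a linear statistic that is identically $1$ on an arbitrary bounded $B\subset\mbb C$. In the self-adjoint case, the exponentials $\mr e^{-tx}$ on $[E_-,\infty)$ can be linearly combined (Stone--Weierstrass) to approximate $\mbf 1_B$ uniformly. In the general non-self-adjoint setting one should instead work with the Riesz projector $\tfrac{1}{2\pi\mr i}\oint_{\partial B}(z-H)^{-1}\,\dd z$, write the resolvent as a Laplace transform $\int_0^\infty\mr e^{-tz}\mr e^{-tH}\,\dd t$ on a contour surrounding $B$, and discretize to obtain a linear statistic built from finitely many values of $\mr e^{-t_jH}$. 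Tracking algebraic and geometric multiplicities along the way (presumably the role of Section \ref{sec: Multiplicity}) should then let Theorem \ref{Theorem: Rigidity} follow by applying the variance estimate of Theorem \ref{Theorem: Upper} to this finite combination.
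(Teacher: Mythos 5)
Your account of the variance computation—Feynman--Kac, squaring against an independent copy of the walk, and balancing the potential decay against the coordination-sequence growth and the covariance range of $\xi$—is an accurate sketch of the content of Theorem \ref{Theorem: Upper}, and it matches the paper's Proposition \ref{Proposition: Variance Formula} and Lemmas \ref{Lemma: Variance Upper Bound 1}--\ref{Lemma: Variance Upper Bound 3} in substance.

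However, you misread the Ghosh--Peres criterion and this sends the second half of your proposal down an unnecessary and in fact unworkable path. The criterion from \cite[Theorem 6.1]{GP17}, as quoted in the paper, requires only that $f_n\to 1$ \emph{uniformly} on $B$, not that $f_n\equiv 1$ on $B$. The whole point of the semigroup method from \cite{GGL20} is that the functions $z\mapsto \mr e^{-tz}$ already satisfy this: they converge uniformly to $1$ on any bounded subset of $\mbb C$ as $t\to 0$. So $\mr{Tr}[K_t]=\sum_\la m_a(\la,H)\,\mr e^{-t\la}$ \emph{is} the linear statistic, and once $\mbf{Var}[\mr{Tr}[K_t]]\to 0$, the rigidity argument is a short decomposition (paper, Section \ref{Section: Rigidity}, Step 5): split $\mc X_H(B)$ into the centered trace, the term $\sum_{\la\in\si(H)\cap B}m_a(\la,H)(1-\mr e^{-t\la})$ which vanishes by uniform convergence and a.s.\ finiteness of $\mc X_H(B)$, and a remainder that only depends on the configuration outside $B$. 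There is no need—and no profit—in building a function identically equal to $1$ on $B$ via Riesz projectors and contour-discretized Laplace transforms; that route would require uniform control of the resolvent near the (random, possibly complex) spectrum, quantitative bounds on the discretization error against the random eigenvalue locations, and a stability argument for the contour, none of which you could supply from the variance estimate alone.

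The step you \emph{did} underestimate is the one that actually consumes a full section of the paper: justifying the trace identity $\mr{Tr}[K_t]=\sum_{\la\in\si(H)}m_a(\la,H)\,\mr e^{-t\la}$ with algebraic multiplicities in the non-self-adjoint case. This is not a routine Stone--Weierstrass or spectral-theorem matter; it requires Lidskii's theorem, the spectral mapping theorem for semigroups, and a passage-to-the-limit argument through finite-dimensional truncations to match $m_a(\mu,K_t)$ with $\sum_{\mr e^{-t\la}=\mu}m_a(\la,H)$ (Proposition \ref{Proposition: Operator Theory}). Your one-line nod to ``tracking algebraic and geometric multiplicities'' gestures at this but does not identify the actual obstruction, which is that the spectral mapping theorem for semigroups does not automatically carry multiplicities, and one needs Kato-style generalized convergence of the finite-volume approximations to transfer the finite-dimensional Jordan-form identity to $H$.
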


See Theorems \ref{Theorem: Upper}
and \ref{Theorem: Rigidity} for a formal statement.
Our technical assumptions are stated in Assumptions \ref{Assumption: Graph}
and \ref{Assumption: Potential and Noise}; roughly speaking,
our assumptions are that
\begin{enumerate}
\item the jump rates of $X$ (which may be site-dependent) are uniformly bounded; and
\item the tails of $\xi$ are not worse than exponential.
\end{enumerate}
In particular, our assumptions allow for $X$ to be non-symmetric
(hence, the operator $H$ need not be self-adjoint) and for $\xi$
to have a variety of covariance structures, including long-range
dependence.

\begin{remark}
\label{Remark: Dimension}
The constant $d$ in \eqref{Equation: Coordination 1}, which quantifies the growth rate of the number of vertices,
can be thought of as the {\it dimension} of $\ms G$ (or, at least, an upper bound of the dimension).
To illustrate this, if $\ms G$ is for example $\mbb Z^d$ or a semiregular tessellation of $\mbb R^d$,
then it is easy to see that $cn^{d-1}\leq \msf c_n(v)\leq Cn^{d-1}$ for some $C,c>0$. More generally,
the constant $d$ is closely related to the {\it intrinsic dimension} of $\ms G$, which is the minimal number
$k$ such that $\ms G$ can be embedded in $\mbb Z^k$. We refer to, e.g., \cite{KL07,LLR95} for more details.
\end{remark}

\begin{remark}
In Theorem \ref{Theorem: Lower}, we provide concrete examples showing
that the growth lower bound of $\msf d(0,v)^\al$ that we impose on $V$ to
get rigidity is the best general sufficient condition that can be obtained with our proof method.
The question of whether or not this is actually necessary for rigidity is addressed in
Section \ref{sec:new methods}.
\end{remark}

\subsection{Proof Strategy and Previous Results}

Our method to prove number rigidity follows the general scheme introduced by Ghosh and Peres
in \cite{GP17}: Let $\mc X=\sum_{k\in\mbb N}\de_{\la_k}$ be a point process on $\mbb C$.
As per \cite[Theorem 6.1]{GP17}, for any bounded set $B\subset\mbb C$, if there exists
a sequence of functions $(f_n)_{n\in\mbb N}$ such that, as $n\to\infty$,
\begin{enumerate}
\item $f_n\to1$ uniformly on $B$, and
\item the variance of the linear statistics
$\int f_n\d\mc X=\sum_{k\in\mbb N}f_n(\la_k)$ vanish,
\end{enumerate}
then $\mc X(B)$ is measurable
with respect to the configuration of $\mc X$ outside of $B$.

One of the main difficulties involved with carrying out the above program lies in
the computation of upper bounds for the variances of linear statistics $\mbf{Var}[\int f\d\mc X]$.
For this reason, much of the previous literature on number rigidity exploits special
properties that make the computations more manageable, such as
determinantal/Pfaffian or other inetegrable structure \cite{Buf16,BNQ18,G15,GL18,GP17}, translation
invariance and hyperuniformity \cite{GS19,Ghosh17}, and finite dimensional approximations \cite{RN18}.

Among those works, the only result that is related to the spectrum of random Schr\"odinger operators
is the proof of rigidity of the Airy-2 point process in \cite{Buf16}. Thanks to the work of Edelman,
Ram{\'\i}rez, Rider, Sutton, and Vir\'ag \cite{EdelmanSutton,RamirezRiderVirag}, this implies that the spectrum
of the {\it stochastic Airy operator} with parameter $\be=2$ is number rigid. Given that the
method of proof in \cite{Buf16} relies crucially on special algebraic structure only present in that
one particular case, however, the result cannot be extended to general Schr\"odinger operators.

More recently, in \cite{GGL20} we proposed to study number rigidity in
the spectrum of random Schr\"odinger operators using a new {\it semigroup method}: Given that
the exponential functions $\mr e_n(z):=\mr e^{-z/n}$ converge uniformly to 1 on any bounded
set as $n\to\infty$, in order to prove number rigidity of any point process, it suffices to prove
that $\mbf{Var}[\int \mr e_n\d\mc X]\to0$ (though the requirement that $\int \mr e_n\d\mc X$
is finite imposes strong conditions on $\mc X$).
If $\mc X$ happens to be
the eigenvalue point process of a random Schr\"odinger operator $H$, then $\int \mr e_n\d\mc X$
is the trace of the operator $\mr e^{-H/n}$. Thus, in order to prove the number rigidity of the
spectrum of any random Schr\"odinger operator $H$, it suffices to prove that
\[\lim_{t\to0}\mbf{Var}\big[\mr{Tr}[\mr e^{-t H}]\big]=0.\]
The reason why this is a particularly attractive strategy to prove number rigidity of general
random Schr\"odinger operators is that, thanks to the Feynman-Kac formula, there
exists an explicit probabilistic representation of the semigroup $(\mr e^{-t H})_{t>0}$ in terms
of elementary stochastic processes, making the variance $\mbf{Var}\big[\mr{Tr}[\mr e^{-t H}]\big]$
amenable to computation.

In \cite{GGL20}, this strategy was used to prove number rigidity
for a class of random Schr\"odinger operators acting on one-dimensional continuous
space (i.e., an interval of the form $I=(a,b)$ with $-\infty\leq a<b\leq\infty$). In this paper,
we apply the same methodology to prove number rigidity for a general class of discrete
random Schr\"odinger operators.

Despite the fact that the general strategy of proof used in the present paper is the same as
\cite{GGL20}, the differences between the two settings are such that
virtually none of the work carried out in \cite{GGL20} can be directly
extended to the present paper. For example:
\begin{enumerate}
\item Since we consider operators acting on general graphs $\ms G$,
the treatment of the geometry of the space on which our operators are
defined requires a much more careful analysis than that carried out in
\cite{GGL20}. In particular (as per Remark
\ref{Remark: Dimension}), in this paper we uncover
that the dimension of the space plays an important role in the proof
of rigidity using the semigroup method.
\item In \cite{GGL20}, we only consider Schr\"odinger
operators whose kinetic energy operator is the standard Laplacian and whose noise
is a Gaussian process. As a result, the operators considered therein are all self-adjoint
and upper bounds of $\mbf{Var}\big[\mr{Tr}[\mr e^{-t H}]\big]$ can mostly be reduced to the analysis
of self-intersection local times of standard Brownian motion.
In contrast, in this paper we allow for much more general generators $H_X$ and noises $\xi$.
Most notably, the assumptions of this paper allow for non-self-adjoint operators, which
increases the technical difficulties involved (e.g., Sections \ref{sec: Multiplicity}
and \ref{Section: Rigidity}).
\end{enumerate}

\subsection{Future Directions}

Given that our main theorems apply to a very general class of operators,
the results of this paper provide substantial evidence of the universality
of number rigidity in discrete random Schr\"odinger operators.
That being said, we feel that our results raise a number of interesting
follow-up questions. We now discuss three such directions.

\subsubsection{New Methods}
\label{sec:new methods}

It is natural to wonder if the growth condition $V(v)\gg\msf d(0,v)^\al$
that we impose on the potential to get number rigidity is close to optimal. As we
show in Theorem \ref{Theorem: Lower}, our main result is optimal in the sense that
we can find concrete examples of operators such that
\begin{align}
\label{Equation: Non-Vanishing of Exponential}
\liminf_{t\to0}\mbf{Var}\big[\mr{Tr}[\mr e^{-t H}]\big]>0
\end{align}
when $V(v)\asymp\msf d(0,v)^\al$. That being said, the vanishing of the variance of the
trace of the semigroup is only a sufficient condition for number rigidity, and, in fact,
it was observed in \cite[Proposition 2.27]{GGL20} that there
exists at least one random Schr\"odinger operator whose spectrum is known to
be number rigid and such that \eqref{Equation: Non-Vanishing of Exponential}
holds. For example, the following simple question appears to be outside the scope
of the methods used in this paper:

\begin{problem}
\label{Problem: Rigidity vs. Dimension}
Suppose that $X$ is the simple symmetric random walk on $\ms G=\mbb Z^d$,
that $V(v)=\msf d(0,v)^\de$ for some $\de>0$, and that $\big(\xi(v)\big)_{v\in\mbb Z^d}$
are i.i.d. standard Gaussians (or any other simple distribution). Is $H$'s spectrum always number rigid in this case?
\end{problem}

More specifically, given that $\msf c_n(v)\asymp n^{d-1}$ on $\mbb Z^d$,
our main theorem only implies number rigidity in the above when $\de>d/2$.
We expect that solving Problem \ref{Problem: Rigidity vs. Dimension}
will require developing new methods to study number rigidity in random
Schr\"odinger operators.

\subsubsection{The Mechanism of Rigidity}
\label{Section: Mechanism}

Our main result implies that for every bounded measurable set $B\subset\mbb C$,
there exists a deterministic function $\mc N_B$ such that
the identity
\[\text{number of $H$'s eigenvalues in $B$}=\mc N_B(\text{configuration of $H$'s eigenvalues outside $B$})\]
holds with probability one. That being said, the argument that we use to prove the
existence of $\mc N_B$
gives little information on its exact form. In other words, the precise nature
of the mechanism that makes the number of eigenvalues in $B$ a deterministic function of the configuration
on the outside remains largely unknown. In light of this, an interesting future direction for investigation would be
along the following lines:

\begin{problem}
Let $B\subset\mbb C$ be a ``simple" bounded subset of the complex plane
(e.g., a closed or open ball).
Does $\mc N_B$ admit an explicit representation?
\end{problem}

We point to Remark \ref{Remark: Mechanism} for more details on the construction of $\mc N_B$.

\subsubsection{Spatial Conditioning Beyond Number Rigidity}

When $H$'s spectrum is number rigid, we know that if we condition
$H$ on having a specific eigenvalue configuration outside of a bounded set
$B$, then $H$'s spectrum inside of $B$ is a point process with a fixed total number
of points. It would be interesting to see if more can be learned about
the conditional distribution of the eigenvalues in $B$. For instance, the
following problem (related to the notion of tolerance introduced in
\cite{GP17}) might be a good starting point:

\begin{problem}
Suppose that, after conditioning on the outside configuration,
$H$ has $M\in\mbb N$ random eigenvalues in some bounded
set $B\subset\mbb C$. Let $\La\in\mbb C^M$ be the random
vector whose components are the random eigenvalues of $H$ in $B$
(conditional on the configuration outside $B$),
taken in a uniformly random order. What is the support of $\La$'s
probability distribution on the set $B^M$?
\end{problem}

\section{Proof Outline}
\label{Section: Outline}

In this section, we present a sketch of the proof of
our main theorem in two simple special cases.
We take this opportunity to explain how our technical
assumptions arise in our computations.
For simplicity of exposition, we assume in this outline that $\ms G$ is the integer lattice
$\mbb Z^d$ (i.e., $(u,v)\in\ms E$ if and only if $\|u-v\|_\infty=1$, where $\|\cdot\|_\infty$ denotes the
usual $\ell^\infty$ norm), $X$ is the simple
symmetric random walk on $\mbb Z^d$, and $\xi$ is a centered stationary Gaussian process
with covariance function
\[\ga(v):=\mbf E[\xi(v)\xi(0)],\qquad v\in\mbb Z^d.\]

As alluded to in the introduction (and proved in Section \ref{Section: Rigidity}),
to prove that the eigenvalue point process of $H$ is number rigid, it suffices to show that
$\mr{Tr}[\mr e^{-t H}]$'s variance vanishes as $t\to0$.
According to the Feynman-Kac formula, we have that
\[\mr{Tr}[\mr e^{-t H}]=\sum_{v\in\mbb Z^d}\mbf E_X\left[\exp\left(\int_0^t V\big(X(s)\big)+\xi\big(X(s)\big)\d s\right)\mbf 1_{\{X(t)=X(0)\}}\bigg|X(0)=v\right],\]
where $\mbf E_X$ means that we are only averaging with respect to the randomness in the path of $X$,
and we assume that $X$ is independent of the noise $\xi$.
In order to ensure that $\mr e^{-t H}$ is trace class (or even bounded) in the general case,
we assume that $\ms G$ has uniformly bounded degrees; see Section \ref{Section: Boundedness}
for more details.

Our first step in the analysis of $\mr{Tr}[\mr e^{-t H}]$ is to note that if
$t$ is small, then the probability that there exists some $0\leq s\leq t$ such that $X(s)\neq X(0)$
is close to zero (i.e., $1-\mr e^{-t}\sim t$). Thus, by working only with the complement of this event,
we have that
\begin{align}
\label{Equation: Heuristic 1}
\mr{Tr}[\mr e^{-t H}]\approx\sum_{v\in\mbb Z^d}\mr e^{-tV(v)-t\xi(v)}.
\end{align}
A rigorous version of this heuristic is carried out in
the proof of Lemma \ref{Lemma: Variance Upper Bound 3}.
The latter relies on controlling how far $X$ can travel from its initial value $X(0)$ after a small time
(e.g., the tail  bound \eqref{Equation: Tail Bound}), which itself depends on the
assumptions that the jump rates of $X$ are uniformly bounded.

Our second step is to identify the leading order asymptotics in the variance
of the expression on the right-hand side of \eqref{Equation: Heuristic 1}.
In the special case where $\xi$ is a stationary Gaussian process with covariance $\ga$,
an application of Tonelli's theorem yields
\begin{align}
\nonumber
\mbf{Var}\left[\sum_{v\in\mbb Z^d}\mr e^{-tV(v)-t\xi(v)}\right]
&=\sum_{u,v\in\mbb Z^d}\mr e^{-tV(u)-tV(v)}\mbf{Cov}[\mr e^{-t\xi(u)},\mr e^{-t\xi(v)}]\\
\nonumber
&=\sum_{u,v\in\mbb Z^d}\mr e^{-tV(u)-tV(v)}\mr e^{t^2\ga(0)}\left(\mr e^{t^2\ga(u-v)}-1\right)\\
\label{Equation: Heuristic 2}
&\approx t^2\,\sum_{u,v\in\mbb Z^d}\mr e^{-tV(u)-tV(v)}\ga(u-v),
\end{align}
where the last line follows from a Taylor expansion.
A bound of this type can be achieved in the general case thanks to our assumption that
$\xi$'s tails are not worse than exponential. We refer to Proposition \ref{Proposition: Variance Formula}
for the general form of the variance formula. See Lemmas
\ref{Lemma: Variance Upper Bound 1} and \ref{Lemma: Variance Upper Bound 2}
for quantitative bounds on the vanishing of the covariance of the exponential random field $\mr e^{-t\xi}$
as $t\to0$ in terms of the strength of $\xi$'s covariance.

Our third and final step is to identify conditions such that the quantity
\begin{align}
\label{Equation: Heuristic 3}
\sum_{u,v\in\mbb Z^d}\mr e^{-tV(u)-tV(v)}\ga(u-v)
\end{align}
does not blow up at a faster rate than $t^{-2}$ as $t\to0$. As advertised
in our informal statement, this depends on the growth rate of the potential $V$
and the decay rate (if any) of the covariance $\ga$ at infinity. To give an
illustration of how this is carried out in this paper, we consider the two simplest
(and most extreme) cases of covariance structure:
\begin{enumerate}
\item $\big(\xi(v)\big)_{v\in\mbb Z^d}$ are i.i.d., i.e., $\ga(v)=0$ whenever $v\neq 0$; and
\item $\big(\xi(v)\big)_{v\in\mbb Z^d}$ are all equal to each other, i.e., $\ga(v)=\ga(0)$ for all $v\in\ms V$.
\end{enumerate}
The quantity \eqref{Equation: Heuristic 3} then becomes
\[\sum_{u,v\in\mbb Z^d}\mr e^{-tV(u)-tV(v)}\ga(u-v)=\begin{cases}
\displaystyle
\ga(0)\sum_{v\in\mbb Z^d}\mr e^{-2tV(v)}&\text{i.i.d. case,}
\vspace{10pt}\\
\displaystyle
\ga(0)\left(\sum_{v\in\mbb Z^d}\mr e^{-tV(v)}\right)^2&\text{all equal case.}
\end{cases}\]
If we assume that $V(v)\gg\msf d(0,v)^\al$ for some $\al>0$,
then for any $\theta>0$ we have that
\begin{align}
\label{Equation: Heuristic 4}
\sum_{v\in\mbb Z^d}\mr e^{-\theta tV(v)}\ll\sum_{v\in\mbb Z^d}\mr e^{-\theta t\msf d(0,v)^\al}=\sum_{n\in\mbb N\cup\{0\}}\msf c_n(0)\mr e^{-\theta tn^\al},
\end{align}
where we recall that $\msf c_n(0)$ denotes for every $n\in\mbb N$ the number of vertices in $\ms G$ such that $\msf d(0,v)=n$.
For the $d$-dimensional integer lattice $\mbb Z^d$, it is easy to check that there exists a constant $C>0$ such that $\msf c_n(0)\leq Cn^{d-1}$
for every $n\in\mbb N$, whence \eqref{Equation: Heuristic 4} yields
\begin{align}
\label{Equation: Heuristic 5}
\sum_{v\in\mbb Z^d}\mr e^{-\theta tV(v)}\ll\sum_{n\in\mbb N\cup\{0\}}n^{d-1}\mr e^{-\theta tn^\al}\approx\int_0^\infty x^{d-1}\mr e^{-\theta tx^\al}\d x=O(t^{-d/\al}).
\end{align}

Summarizing our argument so far in \eqref{Equation: Heuristic 1}--\eqref{Equation: Heuristic 5},
we are led to the $t\to0$ asymptotic
\[\mbf{Var}\big[\mr{Tr}[\mr e^{-t H}]\big]
\ll
\begin{cases}
t^{2-d/\al}&\text{i.i.d. case,}\\
t^{2-2d/\al}&\text{all equal case.}
\end{cases}\]
Thus, the eigenvalue point process of $H$ is proved to be number rigid if $V(v)\gg\msf d(0,v)^{d/2}$ in the i.i.d case
and $V(v)\gg\msf d(0,v)^d$ in the all equal case. If $\ga$ has a less extreme decay rate
(such as $\ga(v)=O(\msf d(0,v)^{-\be})$ as $\msf d(0,v)\to\infty$ for some $\be>0$), then the eigenvalue point process of $H$ is number rigid if $V(v)\gg\msf d(0,v)^\al$ for some $d/2\leq\al\leq d$, where the exact value
of $\al$ depends on $\ga$'s decay rate. We refer to Theorems \ref{Theorem: Upper} and \ref{Theorem: Rigidity}
for the details.

\section{Main Results}\label{sec:Main Result}

\subsection{Basic Definitions and Notations}

We begin by introducing basic/standard notations
that will be used throughout the paper.

\begin{notation}[Function Spaces]
We use $\ell^p(\ms V)$ to denote the space of real-valued
absolutely $p$-summable (or bounded if $p=\infty$) functions on $\ms V$; we
denote the associated norm by $\|\cdot\|_{p}$.
We use $\langle\cdot,\cdot\rangle$ to denote the inner
product on $\ell^2(\ms V)$.
Given a subset $\ms U\subset\ms V$, we denote
\[\ell^p_\ms U(\ms V):=\{f\in\ell^p(\ms V):f(u)=0\text{ for every }u\in\ms U\}.\]
\end{notation}

\begin{notation}[Operator Theory]\label{def:Multiplicities}
Given a linear operator $T$ on $\ell^2_\ms U(\ms V)$ (or a dense domain $D(T)\subset\ell^2_\ms U(\ms V)$), we use $\si(T)$ to denote its spectrum,
and $\si_p(T)\subset\si(T)$ to denote its point spectrum. If $T$ is bounded, we denote its operator norm by
\[\|T\|_{\mr{op}}:=\sup_{f\in\ell^2_\ms U(\ms V),~\|f\|_{2}=1}\|Tf\|_{2}.\]
We use $\mf R(z,T):=(T-z)^{-1}$
to denote the resolvent of $T$ for all $z\in\mbb C\setminus\si(T)$.
If $\la$ is an isolated eigenvalue of $T$, then we let
\[m_a(\la,T):=\dim\left(\mr{rg}\left(\frac1{2\pi\mr i}\oint_{\Ga_\la}\mf R(z,T)\d z\right)\right)\]
denote the algebraic multiplicity of $\la$, where $\dim$ denotes the dimension of a linear space,
$\mr{rg}$ denotes the range of an operator, and
$\Ga_\la$ denotes a Jordan curve that encloses $\la$ and excludes the remainder
of the spectrum of $T$.
\end{notation}

\begin{definition}[Rigidity]
\label{Definition: Rigidity}
Let $\mc X=\sum_{k\in\mbb N}\de_{\la_k}$ be an infinite point process on $\mbb C$.
We say that $\mc X$ is real-bounded below by a random variable $\om\in\mbb R$
if $\Re(\la_k)\geq\om$ almost surely for every $k\in\mbb N$.
We say that such a point process is number rigid if for every Borel set $B\subset\mbb C$
such that $B\subset(-\infty,\de]+\mr i[-\tilde \de,\tilde \de]$
for some $\de,\tilde\de>0$, the random variable $\mc X(B)$ is measurable with
respect to the completion (under the law of the point process $\mc X$) of the sigma algebra generated by the set
\[\big\{\mc X(A):A\subset\mbb C\text{ is Borel and }B\cap A=\varnothing\big\}.\]
\end{definition}

\begin{remark}
In previous works in the literature, it is most common to define number rigidity
as the requirement that $\mc X(B)$ is measurable with respect to the configuration in $\mbb C\setminus B$
for every bounded Borel set $B$. This is in part due to the fact that most point processes
that have been proved to be number rigid thus far are such that $\mc X(B)=\infty$ almost surely
whenever $B$ is unbounded.

That being said, the fact that we are considering the spectrum of Schr\"odinger
operators whose potentials have a strong growth at infinity means that we are considering
eigenvalue point processes that are real-bounded below, in which case a more general
notion of number rigidity makes sense. We note that a similarly generalized notion of rigidity appeared
in the work of Bufetov on the stochastic Airy operator in \cite[Proposition 3.2]{Buf16}.
\end{remark}

\subsection{Markov Process}

Next, we introduce the Markov processes on the graph
$\ms G$ that generate
our random operators, as well as some of the notions we
need to describe them. We recall that $\ms G=(\ms V,\ms E)$ is a countably infinite
connected graph with uniformly bounded degrees and a root $0\in\ms V$.

\begin{definition}[Markov Process]
Let $\Pi:\ms V\times\ms V\to[0,1]$ be a matrix such that
\begin{enumerate}
\item $\Pi$ is stochastic, that is, for every $u\in \ms V$,
\[\sum_{v\in\ms V}\Pi(u,v)=1;\]
\item $\Pi(v,v)=0$ for all $v\in\ms V$; and
\item If $(u,v)\not\in\ms E$, then $\Pi(u,v)=\Pi(v,u)=0$.
\end{enumerate}
Let $q:\ms V\to(0,\infty)$ be a positive vector and let
$X:[0,\infty)\to\ms V$ denote the continuous-time Markov process on $\ms V$
defined as follows. If $X$ is in
state $u\in\ms V$, it waits for a random time with
an exponential distribution with rate $q(u)$, and then jumps
to another state $v\neq u$ with probability $\Pi(u,v)$, independently of the wait time.
Once at the new state, $X$ repeats this procedure independently of all previous jumps.
\end{definition}

\begin{remark}
We note that condition (3) in the above definition implies that
$X$ is a Markov process on the graph $\ms G$, in the sense that
jumps can only occur between vertices that are connected by edges.
\end{remark}

\begin{notation}
\label{Notation: Markov Probability and Expectation}
For every $v\in\ms V$, we use
$X^v$ to denote the process $X$
conditioned on the starting point $X(0)=v$.
We use $\mbf P^v$ to denote the
law of $X^v$, and $\mbf E^v$ to denote expectation
with respect to $\mbf P^v$.
\end{notation}

We assume throughout that the Markov process $X$
and the graph $\ms G$ satisfy the following.

\begin{assumption}[Graph Geometry and Jump Rates]
\label{Assumption: Graph}
The following two conditions hold:
\begin{enumerate}
\item There exists constants $d\geq1$ and $\mf c>0$ such that
\begin{align}
\label{Equation: Coordination Sequence}
\sup_{v\in\ms V}\msf c_n(v):=\sup_{v\in\ms V}|\{u\in\ms V:\msf d(u,v)=n\}|\leq \mf c\,n^{d-1}\qquad\text{for all }n\in\mbb N\cup\{0\},
\end{align}
recalling that $\msf d$ is the graph distance in $\ms G$, that is, $\msf d(u,v)$
is the length of the shortest path (in terms of number of edges) connecting
$u$ and $v$, and with the convention that $\msf d(v,v)=0$ for all $v\in\ms V$.
\item $X$ has uniformly bounded jump rates, that is,
\[\displaystyle\mf q:=\sup_{v\in\ms V}q(v)<\infty.\]
\end{enumerate}
\end{assumption}

\begin{remark}
We note that the assumption \eqref{Equation: Coordination Sequence}
simultaneously takes care of the requirement that $\ms G$ has uniformly
bounded degrees (since $\msf c_1(v)=\deg(v)$) and of the asymptotic
growth rate \eqref{Equation: Coordination 1} stated in our informal theorem.
\end{remark}

\subsection{Feynman-Kac Kernel}

We are now in a position to introduce the central objects of study of this paper,
namely, the Feynman-Kac semigroups of the Schr\"odinger operators we
are interested in.

\begin{notation}[Local Time]
For every $t\geq0$, we let $L_t:\ms V\to[0,t]$ denote the local time of $X$:
\[L_t(v):=\int_0^t\mbf 1_{\{X(s)=v\}}\d s,\qquad v\in\ms V.\]
\end{notation}

\begin{definition}[Potential and Noise]
Let $V:\ms V\to\mbb R\cup\{\infty\}$ be a deterministic function,
and let $\xi:\ms V\to\mbb R$ be a random function.
We denote the set
\begin{align}
\label{Equation: Dirichlet Set}
\ms Z:=\{v\in\ms V:V(v)=\infty\}.
\end{align}
\end{definition}

Throughout, we make the following assumptions on the noise and potential.

\begin{assumption}[Potential Growth and Noise Tails]
\label{Assumption: Potential and Noise}
There exists $\al>0$ such that
\begin{align}
\label{Equation: Potential Growth}
\liminf_{\msf d(0,v)\to\infty}\frac{V(v)}{\msf d(0,v)^\al}=\infty.
\end{align}
Moreover, $\xi$ satisfies the following conditions:
\begin{enumerate}
\item $\mbf E[\xi(v)]=0$ for every $v\in\ms V$.
\item There exists $\mf m>0$ such that
for every $p\in\mbb N$,
\begin{align}
\label{Equation: Exponential Moments}
\sup_{v\in\ms V}\mbf E[|\xi(v)|^p]\leq p!\mf m^p.
\end{align}
\end{enumerate}
\end{assumption}

In the sequel, it will be useful to characterize noises in terms
of the decay rate of their covariances. For this purpose, we make
the following definition.

\begin{definition}[covariance decay]
\label{Definition: covariance decay}
We say that $\xi$ has covariance decay of order (at least) $\be>0$ if there
exists a constant $\mf C>0$ such that
\begin{align}
\label{Equation: covariance decay of Order Two}
\big|\mbf E[\xi(u)\xi(v)]\big|\leq\mf C\,\big(\msf d(u,v)+1\big)^{-\be}
\end{align}
for every $u,v\in\ms V$, and such that
\begin{align}
\label{Equation: covariance decay of Order Three}
\big|\mbf E[\xi(u)\xi(v)\xi(w)]\big|\leq\mf C\,\min_{a,b\in\{u,v,w\}}\big(\msf d(a,b)+1\big)^{-\be}
\end{align}
for every $u,v,w\in\ms V$.
\end{definition}

\begin{definition}[Feynman-Kac Kernel]
Define the Feynman-Kac kernel
\begin{align}
\label{Equation: Kernel}
K_t(u,v):=
\mbf E^u\left[\mr e^{-\langle L_t,V+\xi\rangle}\mbf 1_{\{X(t)=v\}}\right],\qquad u,v\in\ms V,
\end{align}
where we assume that $X$ is independent of $\xi$,
and that $\mbf E^v$ denotes the expectation with respect to
the Markov process $X^v$, conditional on $\xi$.
We denote the trace of $K_t$ as
\[\mr{Tr}[K_t]:=\sum_{v\in\ms V}K_t(v,v).\]
\end{definition}

\begin{remark}
In the above definition,
we use the convention that $\mr e^{-\infty}:=0$ whenever $V(v)=\infty$,
in particular, $K_t(u,v)=0$ whenever $u\in\ms Z$ or $v\in\ms Z$.
\end{remark}

\subsection{Main Results: Variance Upper Bound and Rigidity}

We now state our main results. First, we have the following sufficient condition
for the vanishing of the variance of the trace of $K_t$ as $t\to0$:

\begin{theorem}
\label{Theorem: Upper}
Suppose that Assumptions \ref{Assumption: Graph} and \ref{Assumption: Potential and Noise} hold.
In order to have
\[\lim_{t\to0}\mbf{Var}\big[\mr{Tr}[K_t]\big]=0,\]
it is sufficient that the constant $\al$ in \eqref{Equation: Potential Growth}
satisfies the following:
\begin{enumerate}
\item if $\xi$ has covariance decay of order $\be>0$, then
\begin{align}
\label{Equation: Polynomial Growth Condition}
\al\begin{cases}
\geq d/2&\text{when }\be>d,\\
>d/2&\text{when }\be=d,\\
\geq d-\be/2&\text{when }\be<d;
\end{cases}
\end{align}
\item otherwise, $\al\geq d$.
\end{enumerate}
\end{theorem}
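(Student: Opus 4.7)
The plan is to make rigorous the three-step heuristic of Section \ref{Section: Outline} under the actual hypotheses of the theorem. First I would replace the trace of $K_t$ by the ``no-jump'' approximation
\[S_t := \sum_{v\in\ms V \setminus \ms Z} \mr e^{-t(V(v)+\xi(v))}.\]
On the event that the Markov process $X^v$ does not jump during $[0,t]$, which has probability $\mr e^{-t q(v)}$, the Feynman-Kac integrand in \eqref{Equation: Kernel} collapses exactly to $\mr e^{-t(V(v)+\xi(v))}$ and the indicator $\mbf 1_{\{X(t)=v\}}$ equals $1$; the complementary event has probability at most $\mf q t$ by Assumption \ref{Assumption: Graph}(2). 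To estimate the jump contribution I would stochastically dominate $\msf d(v,X(s))$ by a Poisson counter of rate $\mf q$, bound the exponential of $\langle L_t,\xi\rangle$ via the moment estimate \eqref{Equation: Exponential Moments}, and use \eqref{Equation: Potential Growth} to keep the sum over $v$ convergent. A careful bookkeeping should then show that $\mbf{Var}\big[\mr{Tr}[K_t]-S_t\big]$ is of strictly smaller order than the main term $\mbf{Var}[S_t]$ proved below.

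Since $\xi$ need not be Gaussian, the second step replaces the closed-form Gaussian covariance computation of \eqref{Equation: Heuristic 2} by a Taylor expansion
\[\mr e^{-t \xi(v)} = 1 - t\xi(v) + \tfrac{t^2}{2}\xi(v)^2 + r_t(v), \qquad |r_t(v)| \leq C\, t^3 |\xi(v)|^3 \mr e^{t|\xi(v)|},\]
whose moments are controlled by \eqref{Equation: Exponential Moments} once $t$ is sufficiently small. Taking products and subtracting the product of means yields
\[\mbf{Cov}\big[\mr e^{-t\xi(u)},\mr e^{-t\xi(v)}\big] = t^2\,\mbf E[\xi(u)\xi(v)] + R_t(u,v),\]
where $R_t(u,v)$ is a finite sum of terms each bounded by $t^3$ times a mixed second or third absolute moment of $\xi$. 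Here the third-order covariance bound \eqref{Equation: covariance decay of Order Three} is used precisely to keep $\sum_{u,v}\mr e^{-tV(u)-tV(v)}|R_t(u,v)|$ subleading to the $t^2$ main term, reducing the problem to bounding
\[t^2 \sum_{u,v\in\ms V\setminus\ms Z} \mr e^{-tV(u)-tV(v)}\big|\mbf E[\xi(u)\xi(v)]\big|.\]

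In the third step I would combine \eqref{Equation: Potential Growth} with \eqref{Equation: Coordination Sequence} to obtain the basic asymptotic
\[\sum_{v\in\ms V\setminus\ms Z}\mr e^{-\theta t V(v)} = o\big(t^{-d/\al}\big) \qquad\text{as }t\to 0,\]
for each $\theta>0$, by approximating the sum by the Gamma integral $\int_0^\infty x^{d-1}\mr e^{-c t x^\al}\dd x$. For part (2), the crude bound $|\mbf E[\xi(u)\xi(v)]| \leq \mf m^2$ from \eqref{Equation: Exponential Moments} allows the double sum to factor, yielding $\mbf{Var}[S_t] = o(t^{2-2d/\al})$, which tends to $0$ as soon as $\al \geq d$. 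For part (1), I would fix $u$ and decompose the sum over $v$ by spheres $\{\msf d(u,v)=n\}$; using \eqref{Equation: covariance decay of Order Two} and \eqref{Equation: Coordination Sequence}, the inner sum is $O(1)$ when $\be>d$, logarithmic when $\be=d$, and $O(t^{-(d-\be)/\al})$ when $\be<d$ after cutting the $n$-sum at the critical scale $n \asymp t^{-1/\al}$. Multiplying by the outer sum $O(t^{-d/\al})$ and the prefactor $t^2$ reproduces exactly the three regimes in \eqref{Equation: Polynomial Growth Condition}.

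The main obstacle is Step 2: in the absence of the Gaussian identities exploited in \cite{GGL20}, one must control the non-Gaussian remainder $R_t(u,v)$ and identify the correct triple-correlation input \eqref{Equation: covariance decay of Order Three} that keeps the $O(t^3)$ corrections dominated by the $t^2$ main term after the geometric sum of Step 3 is performed. The remaining two steps reduce, once the correct scalings are in place, to essentially routine estimates on Poisson tails and incomplete Gamma integrals.
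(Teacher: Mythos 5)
Your decomposition is genuinely different from the paper's. The paper never forms the difference $\mr{Tr}[K_t]-S_t$: instead it starts from an exact variance formula (Proposition \ref{Proposition: Variance Formula}), expressing $\mbf{Var}[\mr{Tr}[K_t]]$ as a double sum over $u,v\in\ms V$ involving two \emph{independent} copies $X^u$, $\tilde X^v$ of the Markov process and their local times, and then applies H\"older's inequality to each summand so that the factor $\mbf E[\mr e^{-2\langle L^u_t+\tilde L^v_t,V\rangle}]^{1/2}$ (depending only on $V$ and the paths) and the factor $\mbf E[\mbf{Cov}_\xi[\ldots]^2]^{1/2}$ (depending only on $\xi$ and the paths) can be estimated separately and then re-summed. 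This keeps the $V$-contribution and the $\xi$-contribution disjoint at every stage. Your no-jump approximation mixes them: the jump term involves local times that spread the noise over a random set of vertices, so its variance couples the randomness of $X$ and $\xi$ in a way that the H\"older split deliberately avoids.

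The genuine gap is in your Step~1. You assert that ``a careful bookkeeping should then show that $\mbf{Var}[\mr{Tr}[K_t]-S_t]$ is of strictly smaller order than $\mbf{Var}[S_t]$,'' but you give no mechanism, and this is not a routine estimate. Notice that a crude bound on the jump error by its probability $O(t)$ only downgrades each summand from $O(t^2)$ to $O(t^3)$, which when paired with the potential sum $\asymp t^{-2d/\al}$ gives $O(t^{3-2d/\al})$; at the boundary $\al=d/2$ (where Theorem \ref{Theorem: Upper} still claims rigidity for $\be>d$) this is $O(t^{-1})$, i.e.\ it diverges. To rescue the argument you would need to observe that the jump error term, being indexed by two \emph{independent} copies of the process, picks up a factor $O(t)$ from each copy separately, \emph{and} also inherits the $(\msf d(u,v)+1)^{-\be}$ decay in the covariance (which requires a Poisson tail bound on the range of each copy, exactly as in the paper's proof of Lemma \ref{Lemma: Variance Upper Bound 2}). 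In other words, the ``bookkeeping'' of Step~1 requires essentially all of the machinery of Lemmas \ref{Lemma: Variance Upper Bound 2} and \ref{Lemma: Variance Upper Bound 3} applied to a more awkward quantity, so it is not a shortcut.

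Your Step~2 is correct in substance: the Taylor expansion with a third-order remainder controlled by \eqref{Equation: Exponential Moments}, and the identification of where \eqref{Equation: covariance decay of Order Three} enters (the $p=3$ term), match the role of the paper's Proposition \ref{Proposition: General Covariance Formula} and its use in Section \ref{Section: Covariance Decay Step 1}. Your Step~3 is essentially the geometric calculation of Lemma \ref{Lemma: Variance Upper Bound 3}; two small remarks. First, the claim $\sum_v\mr e^{-\theta t V(v)}=o(t^{-d/\al})$ is correct but should be justified via the rescaling $\ka$: from \eqref{Equation: Potential Growth} one gets, for every $\ka>0$, the bound $V(v)\geq(\ka\,\msf d(0,v))^\al-\mu(\ka)$, which makes $\limsup_{t\to0}t^{d/\al}\sum_v\mr e^{-\theta tV(v)}\leq C\ka^{-d}$, and taking $\ka\to\infty$ gives the little-$o$. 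Second, your handling of the critical case $\be=d$ by a logarithmic correction is a valid alternative to the paper's device of passing to $\tilde\be<d$ close to $d$; both yield the strict inequality $\al>d/2$.
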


As a consequence of the above theorem, we have the following result,
which states some properties of the infinitesimal generator of $K_t$, including number rigidity.

\begin{theorem}
\label{Theorem: Rigidity}
Suppose that Assumptions \ref{Assumption: Graph} and \ref{Assumption: Potential and Noise} hold,
and that we take the constant $\al$ in \eqref{Equation: Potential Growth} as in Theorem \ref{Theorem: Upper}.
The following conditions hold almost surely.
\begin{enumerate}
\item For every $t>0$, $K_t$ is a trace class linear operator
on $\ell^2_\ms Z(\ms V)$. There exists a random variable $\om\leq0$
such that $\|K_t\|_{\mr{op}}\leq\mr e^{-\om t}$ for all $t>0$.
\item The family of operators
$(K_t)_{t>0}$ is a strongly continuous semigroup
on $\ell^2_\ms Z(\ms V)$.
\item The infinitesimal generator
\begin{align}
\label{Equation: Infinitesimal Generator}
H:=\lim_{t\to0}\frac{K_0-K_t}{t}
\end{align}
is closed on some dense domain $D(H)\subset\ell^2_\ms Z(\ms V)$,
and its action on functions is given by the following matrix:
\begin{align}
\label{Equation: Schrodinger Generator}
H(u,v):=\begin{cases}
-q(u)\Pi(u,v)&\text{if }u\neq v\text{ and }u,v\not\in\ms Z,\\
q(u)+V(u)+\xi(u)&\text{if }u=v\text{ and }u\not\in\ms Z,\\
0&\text{if }u\in\ms Z\text{ or }v\in\ms Z.
\end{cases}
\end{align}
(So, if $f\in D(H)$, then $f(v)=0$ for every $v\in\ms Z$.)
\end{enumerate}
In particular, almost surely,
$H$ has a pure point spectrum without accumulation point,
and the eigenvalue point process (counting algebraic multiplicities)
\begin{align}
\label{Equation: Eigenvalue Point Process}
\mc X_H:=\sum_{\la\in\si(H)} m_a(\la,H)\,\de_\la
\end{align}
is real-bounded below by $\om$
and number rigid in the sense of Definition \ref{Definition: Rigidity}.
\end{theorem}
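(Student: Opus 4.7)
The plan is to establish the functional-analytic claims (1)--(3) for $K_t$ and $H$ first, and then read off the pure-point spectrum and rigidity statements as consequences, invoking Theorem~\ref{Theorem: Upper} only at the very end.

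First, I would derive pointwise bounds on $K_t(u,v)$ directly from the Feynman--Kac formula \eqref{Equation: Kernel}. Set $\om:=\inf_{v\not\in\ms Z}\big(V(v)+\xi(v)\big)\wedge 0$; a Borel--Cantelli argument based on the potential growth \eqref{Equation: Potential Growth} and the exponential tail bound \eqref{Equation: Exponential Moments} shows $\om$ is finite almost surely, and the deterministic lower bound $\langle L_t,V+\xi\rangle\geq t\om$ then gives the operator norm bound $\|K_t\|_{\mr{op}}\leq\mr e^{-\om t}$. Trace class can be obtained by writing $K_t=K_{t/2}K_{t/2}$ and estimating the Hilbert--Schmidt norm of $K_{t/2}$ using the diagonal growth $V(v)\gg\msf d(0,v)^\al$ together with the polynomial coordination bound \eqref{Equation: Coordination Sequence}. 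The semigroup identity $K_{s+t}=K_sK_t$ follows from the Markov property of $X$ and the additivity of the local time, while strong continuity at $t=0$ follows by dominated convergence using the right-continuity of $X$'s sample paths.

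For the explicit form \eqref{Equation: Schrodinger Generator} of $H$, I would decompose the Feynman--Kac expectation according to whether $X^u$ jumps during $[0,t]$. The no-jump event has probability $\mr e^{-tq(u)}$ and forces $X(t)=u$, contributing $\mr e^{-tq(u)}\mr e^{-t(V(u)+\xi(u))}\de_{u,v}$, while a single-jump event contributes $t\,q(u)\Pi(u,v)+o(t)$ for $u\neq v$. Dividing $(K_0-K_t)/t$ and passing to the limit $t\to0$ yields exactly the matrix \eqref{Equation: Schrodinger Generator}, and closedness on the dense domain $\{f\in\ell^2(\ms V):Hf\in\ell^2(\ms V),~f|_{\ms Z}\equiv0\}$ is automatic from $C_0$-semigroup theory. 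Since $K_t$ is compact, its spectrum consists of isolated eigenvalues of finite algebraic multiplicity accumulating only at $0$; the spectral mapping theorem for the holomorphic functional calculus identifies $\si(H)$ with $\{-t^{-1}\log\mu:\mu\in\si(K_t)\setminus\{0\}\}$, yielding a discrete spectrum of $H$ without accumulation, matching algebraic multiplicities, and the real-boundedness $\Re(\la)\geq\om$ for every $\la\in\si(H)$.

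For number rigidity I would implement the semigroup method of \cite{GGL20}. Fix a bounded Borel set $B\subset(-\infty,\de]+\mr i[-\tilde\de,\tilde\de]$ and consider the test functions $\mr e_n(z):=\mr e^{-z/n}$, which converge uniformly to $1$ on $B$ as $n\to\infty$. Using the multiplicity identification from the preceding step, the linear statistic $\int\mr e_n\d\mc X_H=\sum_{\la\in\si(H)}m_a(\la,H)\mr e^{-\la/n}$ equals $\mr{Tr}[K_{1/n}]$, whose variance tends to zero by Theorem~\ref{Theorem: Upper}. A complex-valued adaptation of the Ghosh--Peres criterion \cite[Theorem~6.1]{GP17} (which goes through by separating real and imaginary parts of $\mr e_n$, since $\mc X_H$ is real-bounded below and has no accumulation in $\mbb C$) then yields measurability of $\mc X_H(B)$ with respect to the outside configuration, completing the proof. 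The step I expect to be hardest is the multiplicity matching above: since $H$ is generally non-self-adjoint, one must rigorously tie the algebraic multiplicities of $H$ (defined via the Riesz projector $\frac1{2\pi\mr i}\oint_{\Ga_\la}\mf R(z,H)\d z$ of Notation~\ref{def:Multiplicities}) to those of the corresponding eigenvalues $\mr e^{-t\la}$ of $K_t$, ruling out any Jordan-block mismatch between $H$ and $\log K_t$. This requires a careful Dunford-calculus argument based on the Laplace-transform identity $\mf R(z,H)=\int_0^\infty\mr e^{tz}K_t\d t$ valid for $\Re(z)<\om$, which is presumably the content of the dedicated Section~\ref{sec: Multiplicity}.
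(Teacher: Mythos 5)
Your proposal follows the same overall architecture as the paper's proof---establishing boundedness, semigroup property, trace class, the explicit generator formula, and then applying the semigroup method of \cite{GGL20}---but two points in the proposal are not airtight as written, and one of them is the central technical hurdle of the proof.

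First, a smaller issue in Step~1. From $\om_0:=\inf_v(V(v)+\xi(v))>-\infty$ one gets the pointwise bound $K_t(u,v)\leq\mr e^{-\om_0t}\Pi_t(u,v)$, but this only yields $\|K_t\|_{\mr{op}}\leq\mr e^{-\om_0t}\|\Pi_t\|_{\mr{op}}$. Since $X$ is allowed to be non-symmetric, the transition semigroup $\Pi_t$ is a contraction on $\ell^1$ and $\ell^\infty$ but need not be a contraction on $\ell^2$, so the claimed bound $\|K_t\|_{\mr{op}}\leq\mr e^{-\om_0t}$ does not follow directly. The paper spends some effort proving $\|\Pi_t\|_{\mr{op}}\leq\mr e^{ct}$ for a constant $c$ depending on $\mf q$ and $\mf c$ (via estimates on $\|H_X\|_{\mr{op}}$ and the bounded-degree assumption), and the $\om$ in the statement of the theorem absorbs this extra exponential. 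Your argument needs this supplementary estimate.

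Second, and more significantly, the multiplicity-matching step: you correctly flag this as the hardest part, but the route you sketch is not the one the paper uses, and as described it does not close the gap. You invoke ``the spectral mapping theorem for the holomorphic functional calculus,'' but $H$ is an unbounded operator, so $K_t=\mr e^{-tH}$ is not given by a Dunford integral and the holomorphic functional calculus does not directly apply. The usual $C_0$-semigroup spectral mapping theorem gives only the set-level identities $\{\mr e^{-t\la}:\la\in\si_p(H)\}=\si_p(K_t)\setminus\{0\}$ and the inclusion $\mr e^{-t\si(H)}\subset\si(K_t)$; it says nothing about algebraic multiplicities, and for non-self-adjoint operators with nontrivial Jordan structure there is no ready-made theorem that transfers $m_a(\la,H)$ to $m_a(\mr e^{-t\la},K_t)$. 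The paper's Section~\ref{sec: Multiplicity} resolves this by a fundamentally different mechanism: it first proves the multiplicity identity for \emph{finite-dimensional} operators via the Jordan canonical form (Lemma~\ref{Lemma: Spectral Mapping in Finite Dimensions}), then constructs finite-rank Dirichlet truncations $H_n$ of $H$, proves both norm-resolvent and norm-semigroup convergence $H_n\to H$ and $\mr e^{-tH_n}\to K_t$, and finally passes to the limit using Kato's theory of generalized convergence and continuity of spectral projections. The Laplace-transform identity $\mf R(z,H)=\int_0^\infty\mr e^{tz}K_t\,\dd t$ that you mention does appear in the paper, but only as a tool to derive resolvent convergence from semigroup convergence of the $H_n$; by itself it does not identify Riesz-projector multiplicities of $H$ with those of $K_t$. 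Without the finite-dimensional approximation scheme your proof of the trace identity \eqref{Equation: Trace Identity}, and hence of the rigidity conclusion, is incomplete.

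The rest (generator computation via no-jump/one-jump decomposition, trace class via Hilbert--Schmidt factorization, strong continuity, and the final rigidity argument via $\mr e_n(z)=\mr e^{-z/n}$ and vanishing variance) all align with the paper and are fine at the level of detail you give.
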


\subsection{Questions of Optimality}
\label{Section: Optimality}

In this section, we study the optimality of the growth assumptions
we make on $V$ in Theorem \ref{Theorem: Upper} by considering three
counterexamples.

\begin{theorem}\label{Theorem: Lower}
Suppose that $X$ is the nearest-neighbor symmetric random walk on the integer lattice $\mbb Z^d$,
that $V(v):=\msf d(0,v)^\de$ for some $\de>0$, and that $\xi$ is a centered stationary Gaussian
process whose covariance function $\ga(v):=\mbf E[\xi(v)\xi(0)]$ is nonnegative.
If one of the following conditions hold:
\begin{enumerate}
\item $\de\leq d/2$ and $\ga(v)=\mbf 1_{\{v=0\}}$;
\item $\de\leq d-\be/2$ for some $0<\be<d$, and there exists a constant $\mf L>0$
such that $\ga(v)\geq\mf L\big(\msf d(0,v)+1\big)^{-\be}$ for every $v\in\ms V$; or
\item $\de\leq d$ and $\inf_{v\in\mbb Z^d}\ga(v)>\mf L$ for some constant $\mf L>0$;
\end{enumerate}
then we have the variance lower bound
\[\liminf_{t\to0}\mbf{Var}\big[\mr{Tr}[K_t]\big]>0.\]
\end{theorem}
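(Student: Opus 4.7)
The plan is to derive an explicit lower bound on $\mbf{Var}\big[\mr{Tr}[K_t]\big]$ by integrating out the Gaussian noise via its moment generating function and then restricting the resulting path integrals to the ``stay-put'' event on which both walks remain at their initial vertex throughout $[0,t]$. Writing $\Ga(u,v):=\ga(u-v)$ for the covariance matrix of $\xi$, a direct computation using $\mbf E[\mr e^{-\langle\ell,\xi\rangle}]=\mr e^{\tfrac12\langle\ell,\Ga\ell\rangle}$ yields
\[
\mbf{Cov}\big[K_t(u,u),K_t(v,v)\big]=\mbf E^u\otimes\mbf E^v\Big[\mr e^{-\langle L_t+L'_t,V\rangle+\tfrac12\langle L_t,\Ga L_t\rangle+\tfrac12\langle L'_t,\Ga L'_t\rangle}\big(\mr e^{\langle L_t,\Ga L'_t\rangle}-1\big)\mbf 1_{\{X(t)=u,\,X'(t)=v\}}\Big],
\]
where $L_t,L'_t$ are the local times of two independent copies of $X$. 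Since $\ga\geq0$, the integrand is nonnegative, and restricting to the event that both walks stay put (which has probability $\mr e^{-(q(u)+q(v))t}$, bounded below uniformly in $u,v$ because $X$ has uniformly bounded jump rates on $\mbb Z^d$) gives $L_t=t\de_u$, $L'_t=t\de_v$, so
\[
\mbf{Var}\big[\mr{Tr}[K_t]\big]\;\geq\;c_t\sum_{u,v\in\mbb Z^d}\mr e^{-t(V(u)+V(v))}\big(\mr e^{t^2\ga(u-v)}-1\big),
\]
with $c_t\to1$ as $t\to0$.

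I would then analyze this sum in the three cases using the bound $\mr e^{t^2\ga(u-v)}-1\geq\tfrac12t^2\ga(u-v)$ valid for small $t$. Case (1) collapses the double sum to $t^2\sum_v\mr e^{-2tV(v)}\asymp t^{2-d/\de}$ by a coordination-sequence estimate analogous to \eqref{Equation: Heuristic 5} but with a matching lower bound (available since $\msf c_n(0)\asymp n^{d-1}$ on $\mbb Z^d$); this does not vanish when $\de\leq d/2$. Case (3) factors as $\mf L\,t^2\big(\sum_v\mr e^{-tV(v)}\big)^2\asymp t^{2-2d/\de}$, not vanishing when $\de\leq d$. Case (2) is the substantive one: the Taylor bound and the hypothesis on $\ga$ give
\[
\mbf{Var}\big[\mr{Tr}[K_t]\big]\;\gtrsim\;t^2\sum_{u,v\in\mbb Z^d}\mr e^{-t(V(u)+V(v))}\big(\msf d(u,v)+1\big)^{-\be},
\]
and I would lower-bound this by restricting to the double ball $\{\msf d(0,u),\msf d(0,v)\leq R\}$ with $R=t^{-1/\de}$, where $\mr e^{-tV(u)}\geq\mr e^{-1}$. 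For each $u$ with $\msf d(0,u)\leq R/2$ the triangle inequality yields
\[
\sum_{\msf d(0,v)\leq R}\big(\msf d(u,v)+1\big)^{-\be}\geq\sum_{\msf d(u,w)\leq R/2}\big(\msf d(u,w)+1\big)^{-\be}\asymp\sum_{n=0}^{\lfloor R/2\rfloor}n^{d-1-\be}\asymp R^{d-\be},
\]
where the polynomial growth uses $\be<d$; summing over $u$ contributes another factor of $R^d$, producing $\mbf{Var}\big[\mr{Tr}[K_t]\big]\gtrsim t^2\,R^{2d-\be}=t^{2-(2d-\be)/\de}$, which fails to vanish when $\de\leq d-\be/2$.

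The main technical hurdle is producing a lower bound on the variance at all. For signed covariance kernels, the integrand $\mr e^{\langle L_t,\Ga L'_t\rangle}-1$ could change sign and the one-sided ``stay-put'' restriction would be illegitimate; the hypothesis $\ga\geq0$, combined with $\mr e^x-1\geq x$ for $x\geq0$, is precisely what rescues the argument. A secondary point is the ball-counting estimate in Case (2), where the assumption $\be<d$ is essential to ensure that $\sum_{n\leq R/2}n^{d-1-\be}\asymp R^{d-\be}$; the borderline and super-summable regimes $\be\geq d$ would yield only logarithmic or bounded contributions, which would not suffice for the matching lower bound and indeed corresponds to the regime $\al=d/2$ on the upper-bound side of Theorem \ref{Theorem: Upper}.
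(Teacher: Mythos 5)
Your proposal is correct and mirrors the paper's proof in all of its essential structure: both invoke the variance formula of Proposition \ref{Proposition: Variance Formula}, use the nonnegativity of $\ga$ so that the Gaussian-conditional covariance $\mr e^{\frac12\langle L,\Ga L\rangle+\frac12\langle L',\Ga L'\rangle}(\mr e^{\langle L,\Ga L'\rangle}-1)$ is nonnegative, restrict to the stay-put event (with probability $\mr e^{-2t}$, uniform in $u,v$) to collapse the local times to point masses, apply $\mr e^{x}-1\geq x$ together with the two-sided coordination-sequence bound $\msf c_n(0)\asymp n^{d-1}$ on $\mbb Z^d$, and finish with Riemann-sum asymptotics. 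The only minor technical variant is in case (2): the paper uses the triangle inequality $\msf d(u,v)\leq\msf d(0,u)+\msf d(0,v)$ to decouple the double sum over $(u,v)$ into a sum over $(m,n)=(\msf d(0,u),\msf d(0,v))$ whose Riemann limit is the finite double integral $\int_0^\infty\int_0^\infty (xy)^{d-1}(x+y)^{-\be}\mr e^{-x^\de-y^\de}\d x\dd y$, whereas you truncate to the ball of radius $R=t^{-1/\de}$ and count lattice points; both give the same $t^{2-(2d-\be)/\de}$ scaling, hence the same conclusion.
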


Thus, given that $\msf c_n(v)\asymp n^{d-1}$ as $n\to\infty$ on $\mbb Z^d$,
if one is interested in providing a general sufficient condition for number rigidity on graphs
using semigroups, then Theorem \ref{Theorem: Upper} is essentially the optimal result
one could hope for.

\begin{remark}
An examination of the proof of Theorem \ref{Theorem: Lower}
reveals that similar lower bounds can be proved for more general
examples with little effort; we restrict our attention to this elementary setting for
simplicity of exposition.
\end{remark}

\section{Proof of Theorem \ref{Theorem: Upper}}
\label{sec: Proof of Upper}

Throughout this section, we suppose that Assumptions
\ref{Assumption: Graph} and \ref{Assumption: Potential and Noise} hold.
This section is organized as follows: In Section \ref{Section: Outline 2}, we outline the
main steps of the proof of Theorem \ref{Theorem: Upper}.
That is, we state a number of technical propositions and lemmas,
which we then use to prove Theorem~\ref{Theorem: Upper}. Then, in Sections
\ref{Section: Proof of Variance Formula}--\ref{sec:PrLem3}, we prove the technical results stated
Section \ref{Section: Outline 2}, thus wrapping-up the proof
of Theorem~\ref{Theorem: Upper}.

\subsection{Proof Outline}
\label{Section: Outline 2}

\subsubsection{Step 1. Variance Formula and First Bound}

We begin with some notations.
\begin{notation}
\label{Notation: Conditional Expectation}
Let us denote by $(\Om_\xi,\mbf P_\xi)$ the probability space on which $\xi$
is defined. Let $Y$ be any random element that is independent of $\xi$,
and let $F$ be any measurable function. We denote the random variable
\[\mbf E_\xi\big[F(\xi,Y)\big]:=\int_{\Om_\xi} F(x,Y)\d\mbf P_\xi(x);\]
that is, $\mbf E_\xi$ is the conditional expectation with respect to
$\xi$, given $Y$. Then, for measurable functions $F$ and $G$,
we denote the random variable
\[\mbf{Cov}_\xi\big[F(\xi,Y),G(\xi,Y)\big]:=\mbf E_\xi\big[F(\xi,Y)G(\xi,Y)\big]-\mbf E_\xi\big[F(\xi,Y)\big]\mbf E_\xi\big[G(\xi,Y)\big].\]
\end{notation}
Our main tool in the proof of Theorem \ref{Theorem: Upper} is the following variance formula:

\begin{proposition}
\label{Proposition: Variance Formula}
For every $u,v\in\ms V$,
we let $X^u$ and $\tilde X^v$ be independent
copies of the Markov process $X$ started from $u$ and $v$ respectively.
We assume that $X^u$ and $\tilde X^v$ are independent of the
noise $\xi$, and
we denote their local times as
\[L^u_t(w):=\int_0^t\mbf 1_{\{X^u(s)=w\}}\d s
\qquad\text{and}\qquad
\tilde L^v_t(w):=\int_0^t\mbf 1_{\{\tilde X^v(s)=w\}}\d s\]
for all $w\in\ms V$. It holds that
\[\mbf{Var}\big[\mr{Tr}[K_t]\big]=\sum_{u,v\in\ms V}\mbf E\left[\mr e^{-\langle L^u_t+\tilde L^v_t,V\rangle}
\mbf{Cov}_\xi\left[\mr e^{-\langle L^u_t,\xi\rangle},\mr e^{-\langle\tilde L^v_t,\xi\rangle}\right]\mbf 1_{\{X^u(t)=u,\tilde X^v(t)=v\}}\right].\]
\end{proposition}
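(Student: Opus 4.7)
The formula is essentially a bookkeeping exercise: expand the variance and repeatedly apply Fubini/Tonelli together with the independence of the two ingredients (the Markov process and the noise) of the Feynman-Kac representation. The plan is as follows.

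First, I would fix $\xi$ and view $\mr{Tr}[K_t]$ as a random variable on the probability space of $\xi$ alone. Writing the Feynman-Kac kernel out,
\[\mr{Tr}[K_t]=\sum_{v\in\ms V}\mbf E_X\!\left[\mr e^{-\langle L^v_t,V\rangle}\mr e^{-\langle L^v_t,\xi\rangle}\mbf 1_{\{X^v(t)=v\}}\right],\]
where $\mbf E_X$ denotes expectation over the Markov process only (with $\xi$ frozen). Since $V\ge0$ on $\ms V\setminus\ms Z$ and the indicator vanishes when the local time charges $\ms Z$ (by the convention $\mr e^{-\infty}=0$), the summand is nonnegative and Tonelli applies unproblematically at this stage.

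Second, I would compute the two pieces of $\mbf{Var}\big[\mr{Tr}[K_t]\big]=\mbf E_\xi\!\big[\mr{Tr}[K_t]^2\big]-\big(\mbf E_\xi[\mr{Tr}[K_t]]\big)^2$. For the second moment, I expand
\[\mr{Tr}[K_t]^2=\sum_{u,v\in\ms V}\mbf E_X\!\left[\mr e^{-\langle L^u_t,V\rangle}\mr e^{-\langle L^u_t,\xi\rangle}\mbf 1_{\{X^u(t)=u\}}\right]\mbf E_{\tilde X}\!\left[\mr e^{-\langle\tilde L^v_t,V\rangle}\mr e^{-\langle\tilde L^v_t,\xi\rangle}\mbf 1_{\{\tilde X^v(t)=v\}}\right],\]
and use the independence of $X^u$ and $\tilde X^v$ to fuse the two Markov expectations into a single joint expectation over the product process. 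The same trick, applied to $(\mbf E_\xi[\mr{Tr}[K_t]])^2$ after interchanging $\mbf E_\xi$ with $\mbf E_X$, produces the identical Markov factor multiplied by $\mbf E_\xi[\mr e^{-\langle L^u_t,\xi\rangle}]\mbf E_\xi[\mr e^{-\langle\tilde L^v_t,\xi\rangle}]$ instead of $\mbf E_\xi[\mr e^{-\langle L^u_t,\xi\rangle}\mr e^{-\langle\tilde L^v_t,\xi\rangle}]$. Subtracting, the bracket is precisely $\mbf{Cov}_\xi\!\left[\mr e^{-\langle L^u_t,\xi\rangle},\mr e^{-\langle\tilde L^v_t,\xi\rangle}\right]$, so a final interchange (now over $\xi$ against the joint Markov expectation) gives the claimed formula.

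The main issue to watch is the justification of the many exchanges of sums and expectations. Outside of the first interchange, where all terms are nonnegative and Tonelli is immediate, subsequent manipulations involve the signed integrand $\mbf{Cov}_\xi[\cdots]$; I would therefore carry out the interchanges on the two pieces $\mbf E_\xi[\mr{Tr}[K_t]^2]$ and $(\mbf E_\xi[\mr{Tr}[K_t]])^2$ separately (each of which is again a sum of nonnegative quantities) and subtract only at the very end. Absolute integrability of each piece follows from Jensen applied to the conditional expectations and the exponential moment bound \eqref{Equation: Exponential Moments} on $\xi$, together with the trivial estimate $\mr e^{-\langle L^u_t,V\rangle}\le 1$ off $\ms Z$; the growth condition \eqref{Equation: Potential Growth} on $V$ then guarantees that the $u,v$-sum of $\mbf P(X^u(t)=u)\mbf P(\tilde X^v(t)=v)$ against these bounds is summable. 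Once this is in place, the identity reduces to rearranging the resulting absolutely convergent double series, and the proposition follows.
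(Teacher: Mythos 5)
Your proof is essentially the same as the paper's: the paper also writes $\mbf{Var}[\mr{Tr}[K_t]]=\mbf E[\mr{Tr}[K_t]^2]-\mbf E[\mr{Tr}[K_t]]^2$, introduces the independent copies $X^u,\tilde X^v$ to turn the squared single sum into a double sum over $u,v$, and identifies $\mbf{Cov}_\xi$ after three applications of Fubini. Your proposal is a bit more careful than the paper about flagging where the interchange of sum and expectation needs justification, which is a sensible addition.

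One small caveat about your integrability sketch: you cannot simultaneously discard $\mr e^{-\langle L^u_t,V\rangle}$ via the bound $\leq 1$ and then appeal to the growth of $V$ --- once the potential factor is gone, $\sum_{u,v}\mbf P(X^u(t)=u)\,\mbf P(\tilde X^v(t)=v)$ diverges on any countably infinite graph (it equals $\big(\sum_v\mbf P^v[X(t)=v]\big)^2$, a sum of return probabilities bounded below). The potential factor must be retained: on the high-probability event that the walk stays near its starting point for time $t$, $\mr e^{-\langle L^u_t,V\rangle}\approx\mr e^{-tV(u)}$, and it is precisely the decay of this quantity as $\msf d(0,u)\to\infty$, furnished by \eqref{Equation: Potential Growth}, that makes the double sum finite. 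The paper makes this quantitative later, in Lemma \ref{Lemma: Variance Upper Bound 3}.
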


The proof of this proposition, which we provide in Section \ref{Section: Proof of Variance Formula} below,
is essentially a direct consequence of the definition of $K_t$ in \eqref{Equation: Kernel}.
In order to find sufficient conditions for $\mbf{Var}\big[\mr{Tr}[K_t]\big]\to0$ as $t\to0$ using this formula, it is
convenient to control the contributions coming from $V$ and $\xi$ separately. To this end,
we use H\"older's inequality, as well as the elementary fact that $\mbf 1_E\leq 1$ for every event $E$,
which yields
\begin{multline*}
\mbf E\Big[\mr e^{-\langle L^u_t+\tilde L^v_t,V\rangle}
\mbf{Cov}_\xi\big[\mr e^{-\langle L^u_t,\xi\rangle},\mr e^{-\langle\tilde L^v_t,\xi\rangle}\big]\mbf 1_{\{X^u(t)=u,\tilde X^v(t)=v\}}\Big]\\
\leq\mbf E\Big[\mr e^{-2\langle L^u_t+\tilde L^v_t,V\rangle}
\Big]^{1/2}\mbf E\Big[\mbf{Cov}_{\xi}\Big[\mr e^{-\langle L^u_t,\xi\rangle},\mr e^{-\langle\tilde L^v_t,\xi\rangle}\Big]^2\Big]^{1/2}
\end{multline*}
for every fixed $u,v\in\ms V$.
Then, by summing both sides of the above inequality over $u,v \in \ms V$,
we obtain our first upper bound for the variance:
\begin{align}
\label{Equation: Variance Upper Bound Holder}
\mbf{Var}\big[\mr{Tr}[K_t]\big]\leq
\sum_{u,v\in\ms V}\mbf E\Big[\mr e^{-2\langle L^u_t+\tilde L^v_t,V\rangle}
\Big]^{1/2}\mbf E\Big[\mbf{Cov}_{\xi}\Big[\mr e^{-\langle L^u_t,\xi\rangle},\mr e^{-\langle\tilde L^v_t,\xi\rangle}\Big]^2\Big]^{1/2}.
\end{align}

\subsubsection{Step 2. Controlling the Contributions from $\xi$ and $V$}

We now state the technical results that we use to control the right-hand side of \eqref{Equation: Variance Upper Bound Holder}.
Our first such result is as follows:

\begin{lemma}
\label{Lemma: Variance Upper Bound 1}
Recall the definition of the constant $\mf m>0$ in \eqref{Equation: Exponential Moments}.
There exists a constant $C_1>0$ (which only depends on $\mf m$) such that
for every $t<1/C_1$, one has
\begin{align*}
\sup_{u,v\in\ms V}\mbf E\Big[\mbf{Cov}_{\xi}\big[\mr e^{-\langle L^u_t,\xi\rangle},\mr e^{-\langle\tilde L^v_t,\xi\rangle}\big]^2\Big]^{1/2}\leq C_1t^2.
\end{align*}
\end{lemma}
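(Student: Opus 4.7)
The plan is to establish the stronger pathwise (in the Markov randomness) bound
\[
\big|\mbf{Cov}_\xi[\mr e^{-\langle L^u_t,\xi\rangle},\mr e^{-\langle\tilde L^v_t,\xi\rangle}]\big|\le C_1 t^2
\]
holding almost surely and uniformly in $u,v$; the $L^2$ estimate of the lemma then follows trivially. Write $A:=\langle L^u_t,\xi\rangle$ and $B:=\langle\tilde L^v_t,\xi\rangle$. The key preparatory step is a Jensen-type moment bound. Since $\|L^u_t\|_1=t$, the vector $L^u_t/t$ is a probability measure on $\ms V$; applying Jensen's inequality to $|\cdot|^k$ and invoking \eqref{Equation: Exponential Moments} yields
\[
\mbf E_\xi[|A|^k]\le t^{k-1}\sum_{w\in\ms V}L^u_t(w)\,\mbf E_\xi[|\xi(w)|^k]\le k!\,\mf m^k\,t^k\qquad (k\ge 1),
\]
and the same bound holds for $B$. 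In particular, for $t<1/(2\mf m)$ the Taylor series of $\mr e^{-A}$ converges absolutely in $L^2(\mbf P_\xi)$.

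Next, I exploit $\mbf E_\xi[\xi(w)]=0$ (hence $\mbf E_\xi[A]=0$) to separate the linear contribution:
\[
\mr e^{-A}-\mbf E_\xi[\mr e^{-A}]=-A+R_A,\qquad R_A:=\sum_{k\ge 2}\frac{(-1)^k}{k!}\big(A^k-\mbf E_\xi[A^k]\big),
\]
and analogously for $B$. Multiplying the two centered exponentials and taking $\mbf E_\xi$ produces four terms in $\mbf{Cov}_\xi[\mr e^{-A},\mr e^{-B}]$: the leading contribution $\mbf E_\xi[AB]$ and three remainders $\mbf E_\xi[AR_B]$, $\mbf E_\xi[R_AB]$, $\mbf E_\xi[R_AR_B]$. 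Cauchy--Schwarz combined with the $k=2$ instance of the moment bound gives $|\mbf E_\xi[AB]|\le\mbf E_\xi[A^2]^{1/2}\mbf E_\xi[B^2]^{1/2}\le 2\mf m^2 t^2$. For the remainders, Minkowski combined with $\mbf E_\xi[(A^k-\mbf E_\xi[A^k])^2]\le\mbf E_\xi[A^{2k}]$ yields
\[
\mbf E_\xi[R_A^2]^{1/2}\le\sum_{k\ge 2}\frac{\mbf E_\xi[A^{2k}]^{1/2}}{k!}\le\sum_{k\ge 2}\frac{t^k\sqrt{(2k)!}\,\mf m^k}{k!}\le\sum_{k\ge 2}(2\mf m t)^k,
\]
where the last step uses the elementary inequality $\sqrt{(2k)!}\le 2^k k!$ (equivalent to $\binom{2k}{k}\le 4^k$). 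Restricting to $t$ small enough that $2\mf m t<1/2$, this tail is $O(t^2)$, so $\mbf E_\xi[AR_B],\mbf E_\xi[R_AB]=O(t^3)$ and $\mbf E_\xi[R_AR_B]=O(t^4)$ by further Cauchy--Schwarz. Summing, $|\mbf{Cov}_\xi[\mr e^{-A},\mr e^{-B}]|\le C_1 t^2$ for some constant $C_1$ depending only on $\mf m$.

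The delicate point is the first step: the leading term $\mbf E_\xi[AB]$ must genuinely be of order $t^2$ rather than $t$. This is what the Jensen trick with the probability measure $L^u_t/t$ accomplishes; a crude bound such as $|A|\le\|L^u_t\|_\infty\|\xi\|_1$ would lose a power of $t$ and collapse the estimate. With this input in place, everything else is routine Taylor-series bookkeeping with geometric tails controlled by the exponential-moment assumption, and the bound is uniform in the realizations of $L^u_t$ and $\tilde L^v_t$, so the $L^2$ expectation on the outside costs nothing.
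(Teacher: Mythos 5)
Your argument is correct, and it reaches the same pathwise bound the paper needs, but it is organized differently. The paper's route (Proposition~\ref{Proposition: Noise Decay 1}) expands $\mbf E_\xi[\mr e^{\langle f,\xi\rangle}]$ as a power series in the multi-index moments, bounds each coefficient by H\"older together with \eqref{Equation: Exponential Moments} to get $|\mbf E_\xi[\mr e^{\langle f,\xi\rangle}]-1|\le 2\mf m^2\|f\|_1^2$, and then feeds the three choices $f=-L^u_t$, $g=-\tilde L^v_t$, $f+g$ into the algebraic identity $\mbf{Cov}[Y,Z]=(\mbf E[YZ]-1)-(\mbf E[Y]-1)(\mbf E[Z]-1)-(\mbf E[Y]-1)-(\mbf E[Z]-1)$. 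You instead isolate the mean-zero linear part explicitly, writing $\mr e^{-A}-\mbf E_\xi[\mr e^{-A}]=-A+R_A$, and use Jensen with the probability measure $L^u_t/t$ to get the one-shot moment bound $\mbf E_\xi[|A|^k]\le k!\mf m^k t^k$, which replaces the paper's H\"older step on the $p$-fold sums. Your decomposition makes the source of the $t^2$ order transparent: the leading contribution is exactly $\mbf E_\xi[AB]$, a product of two mean-zero $O(t)$ linear forms, controlled by Cauchy--Schwarz. This is in fact closer in spirit to the paper's Proposition~\ref{Proposition: General Covariance Formula} and Lemma~\ref{Lemma: Variance Upper Bound 2}, where the $p=2$ term of the covariance expansion is singled out; you have in effect given Lemma~\ref{Lemma: Variance Upper Bound 1} the same quadratic-term treatment, whereas the paper's proof of Lemma~\ref{Lemma: Variance Upper Bound 1} is the cruder ``MGF minus one'' argument. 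Both are valid, both yield an almost-surely deterministic bound on $\mbf{Cov}_\xi$ so that the outer $L^2$ expectation costs nothing, and both produce a constant $C_1$ depending only on $\mf m$. The one small imprecision in your write-up is the parenthetical remark contrasting with ``a crude bound $|A|\le\|L^u_t\|_\infty\|\xi\|_1$'', which is not meaningful since $\|\xi\|_1$ is almost surely infinite; but this is motivational language, not part of the argument, and does not affect correctness.
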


The proof of Lemma \ref{Lemma: Variance Upper Bound 1}, which we provide in Section \ref{sec:PrLem1},
follows from estimating expectations of the form $\mbf E^v\big[\mr e^{-\theta\langle L_t,\xi\rangle}\big]$
using our assumption that $\xi$'s tails are not worse than exponential (i.e., \eqref{Equation: Exponential Moments}).
Next, we have the following result, which provides a tighter decay rate
in the case where $\xi$ has covariance decay:

\begin{lemma}
\label{Lemma: Variance Upper Bound 2}
Suppose that $\xi$ has covariance decay of order $\beta$,
as per Definition \ref{Definition: covariance decay}.
Recall the definitions of the constants $\mf q$, $\mf m$, and $\mf C$
in Assumption \ref{Assumption: Graph} (3), \eqref{Equation: Exponential Moments},
\eqref{Equation: covariance decay of Order Two}, and \eqref{Equation: covariance decay of Order Three}.
There exists a constant $C_2>0$
(which only depends on $\mf q$, $\mf m$, $\mf C$, and $\be$)
such that for every $t<1/C_2$ and $u,v\in\ms V$, one has
\[\mbf E\left[\mbf{Cov}_\xi\left[\mr e^{-\langle L^u_t,\xi\rangle},\mr e^{-\langle\tilde L^v_t,\xi\rangle}\right]^2\right]^{1/2}\leq C_2\left(t^2\big(\msf d(u,v)+1\big)^{-\be}+t^4\right).\]
\end{lemma}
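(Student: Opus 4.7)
The plan is to Taylor-expand both exponentials in $\xi$ and study the resulting bilinear structure of the conditional covariance. Setting $A:=\langle L^u_t,\xi\rangle$ and $B:=\langle \tilde L^v_t,\xi\rangle$, the vanishing of the mean of $\xi$ and standard manipulations give
\[
\mbf{Cov}_\xi\big[\mr e^{-A},\mr e^{-B}\big]=\sum_{m,n\geq 1}\frac{(-1)^{m+n}}{m!n!}\,\mbf{Cov}_\xi\big[A^m,B^n\big].
\]
I would split this sum into three groups. The $(m,n)=(1,1)$ term is the principal contribution and carries the full $(\msf d(u,v)+1)^{-\be}$ decay from the two-point function of $\xi$. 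The $(2,1)$ and $(1,2)$ terms still benefit from covariance decay via the three-point bound \eqref{Equation: covariance decay of Order Three}, but at one additional power of $t$. The tail $\sum_{m+n\geq 4}$ is handled purely with the exponential moment estimate \eqref{Equation: Exponential Moments} and will produce the $t^4$ correction in the claimed bound.

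For the leading piece, multilinearity in $\xi$ rewrites
\[
T_{(1,1)}=\int_0^t\!\int_0^t\Ga\big(X^u(s),\tilde X^v(s')\big)\d s\d s',\qquad\Ga(w,w'):=\mbf E[\xi(w)\xi(w')],
\]
with $|\Ga(w,w')|\leq\mf C(\msf d(w,w')+1)^{-\be}$. Applying Cauchy--Schwarz on the time integral together with Fubini, the computation reduces to the pointwise estimate
\[
\sup_{s,s'\in[0,t]}\mbf E\big[\Ga(X^u(s),\tilde X^v(s'))^2\big]\leq C(\msf d(u,v)+1)^{-2\be},
\]
which I would establish by decomposing $\sum_{w,w'}\mbf P(X^u(s){=}w)\mbf P(\tilde X^v(s'){=}w')(\msf d(w,w')+1)^{-2\be}$ according to whether the total displacement $\msf d(u,w)+\msf d(v,w')$ is less than $\tfrac12\msf d(u,v)$. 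On the first event, $\msf d(w,w')$ is already comparable to $\msf d(u,v)$; on the complement, the Poisson-type tail estimate $\mbf P(X^u\text{ makes }\geq k\text{ jumps by time }t)\leq(\mf q t)^k/k!$, afforded by Assumption \ref{Assumption: Graph}(2), absorbs the possible loss of decay. A parallel argument treats the $(2,1)$ and $(1,2)$ terms: using $\mbf E[\xi(v)]=0$, these reduce to the triple integral of $\mbf E[\xi(X^u(s_1))\xi(X^u(s_2))\xi(\tilde X^v(s'))]$, to which \eqref{Equation: covariance decay of Order Three} applies and yields an $L^2$ bound of order $t^3(\msf d(u,v)+1)^{-\be}$, dominated by the $(1,1)$ contribution for small $t$. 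For $\sum_{m+n\geq 4}$, no decay of $\xi$'s moments is required: from $\|L^u_t\|_1=\|\tilde L^v_t\|_1=t$ and \eqref{Equation: Exponential Moments} one obtains $\mbf E[|A|^p]^{1/p}\leq C\mf m\,t$, after which Cauchy--Schwarz and summation of a geometric series yield the deterministic bound $|\text{tail}|\leq Ct^4$ for $t<1/C$.

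The main technical obstacle is the uniform estimate on $\mbf E[\Ga(X^u(s),\tilde X^v(s'))^2]$: a crude bound using only $\|\Ga\|_\infty\leq\mf C$ on the event that the walks wander would lose a factor of $\sqrt{t}$ after squaring and multiplying by the jump probability, producing a final answer of order $t^{5/2}$ rather than $t^4$. What rescues the argument is the quantitative, factorially-decaying Poisson tail on the number of jumps made by $X^u$ and $\tilde X^v$ in time $t$: this tail beats the polynomial loss $(\msf d(u,v)+1)^{-\be}$ whenever the walks could stray close to the opposite vertex, and it is the source of the constant $C_2$'s joint dependence on $\mf q$, $\mf m$, $\mf C$, and $\be$.
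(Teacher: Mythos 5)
Your proposal is correct and follows essentially the same strategy as the paper's proof: the same $p=m+n$ grouping of the covariance expansion (this is precisely what Proposition \ref{Proposition: General Covariance Formula} delivers), the same use of \eqref{Equation: covariance decay of Order Two}--\eqref{Equation: covariance decay of Order Three} on the $p=2,3$ terms and the raw moment bound \eqref{Equation: Exponential Moments} on the $p\geq4$ tail, and the same Poisson/Chernoff tail on the jump count from Assumption \ref{Assumption: Graph}(2).

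The one point where you diverge organizationally: for the leading term you apply Cauchy--Schwarz to the double time integral and push the displacement estimate into a \emph{pointwise} bound $\sup_{s,s'\le t}\mbf E[\Ga(X^u(s),\tilde X^v(s'))^2]\le C(\msf d(u,v)+1)^{-2\be}$, whereas the paper first derives the deterministic (conditional) inequality $|\mbf{Cov}_\xi[\cdots]|\le 2\mf Ct^2(\mf D^{u,v}_t+1)^{-\be}+64\mf m^4t^4$ in terms of the random range-distance $\mf D^{u,v}_t$, and only then takes the $L^2$ norm, controlling $\mbf E[(\mf D^{u,v}_t+1)^{-2\be}]^{1/2}$ by a good-event/bad-event split. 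The two routes encode the same idea and produce the same final answer; the range-distance formulation is slightly more convenient because a single control on $\mf D_t^{u,v}$ handles all the $p=2$ and $p=3$ terms at once, while your formulation requires carrying the Cauchy--Schwarz/Fubini reduction through the $(2,1)$ and $(1,2)$ triple integrals as well. One small thing left implicit in your write-up, which the paper handles explicitly, is that the constant in your pointwise bound must cover both the regime of small $\msf d(u,v)$ (where the trivial bound $\mbf E[\Ga^2]\le\mf C^2$ must be matched against a bounded-away-from-zero $(\msf d(u,v)+1)^{-2\be}$) and the regime of large $\msf d(u,v)$ (where the factorial tail $\sup_{n}(n+1)^{2\be}(1/n)^{n/4}<\infty$ is what saves you); the paper makes this split at $\msf d(u,v)=16$. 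Your ``main technical obstacle'' paragraph slightly overstates the threat from a one-jump event, since traveling distance $\asymp\msf d(u,v)$ requires order-$\msf d(u,v)$ jumps and hence already a probability $\lesssim t^{\msf d(u,v)/4}$, but the conclusion that the Chernoff tail is the mechanism making the estimate $t$-uniform is correct.
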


Lemma \ref{Lemma: Variance Upper Bound 2} is proved in Section \ref{sec:PrLem2}.
The proof of this lemma is rather more subtle than that of Lemma \ref{Lemma: Variance Upper Bound 1},
and depends on a careful control of how much $X^u$ and $\tilde X^v$ deviate from their
respective starting points $u$ and $v$. We note that the uniform upper bound on
the jump rates of $X$ in Assumption \ref{Assumption: Graph} (3) is crucial for this lemma.

\begin{remark}
The proofs of Lemmas \ref{Lemma: Variance Upper Bound 1} and \ref{Lemma: Variance Upper Bound 2} both rely on some elementary formulas and estimates of the moment generating functions of the noises and their covariances, which will be stated and proved in Section~\ref{Section: Moment Generating Functions}.
\end{remark}

With Lemmas \ref{Lemma: Variance Upper Bound 1} and \ref{Lemma: Variance Upper Bound 2}
in hand, it now only remains to control the contribution of the potential $V$ in \eqref{Equation: Variance Upper Bound Holder}.
For this, we have the following result:

\begin{lemma}
\label{Lemma: Variance Upper Bound 3}
Recall the definition of $d\geq1$ and $\mf c>0$ in \eqref{Equation: Coordination Sequence}.
Suppose that we can find some constants $\ka,\mu>0$ such that
\begin{align}
\label{Equation: V Pointwise Lower Bound}
V(v)\geq\big(\ka\,\msf d(0,v)\big)^\al-\mu,\qquad v\in\ms V.
\end{align}
Then, there exists a constant $C_3>0$
(which only depends on $\al$, $\be$, $d$, and $\mf c$)
such that
\begin{align}
\label{Equation: Variance Upper Bound 3.1}
&\limsup_{t\to0}t^{2d/\al}\sum_{u,v\in\ms V}
\mbf E\Big[\mr e^{-2\langle L^u_t+\tilde L^v_t,V\rangle}\Big]^{1/2}
\leq C_3\ka^{-2d};\\
\label{Equation: Variance Upper Bound 3.2}
&\limsup_{t\to0}t^{(2d-\be)/\al}\sum_{u,v\in\ms V}
\mbf E\Big[\mr e^{-2\langle L^u_t+\tilde L^v_t,V\rangle}\Big]^{1/2}\big(\msf d(u,v)+1\big)^{-\be}
\leq C_3\ka^{-2d+\be}
\end{align}
for every $0<\be<d$; and
\begin{align}
\label{Equation: Variance Upper Bound 3.3}
\limsup_{t\to0}t^{d/\al}\sum_{u,v\in\ms V}
\mbf E\Big[\mr e^{-2\langle L^u_t+\tilde L^v_t,V\rangle}\Big]^{1/2}\big(\msf d(u,v)+1\big)^{-\be}
\leq C_3\ka^{-d}
\end{align}
for every $\be>d$.
\end{lemma}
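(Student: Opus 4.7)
Since $X^u$ and $\tilde X^v$ are independent of each other, the expectation factors as $\mbf{E}[e^{-2\langle L^u_t+\tilde L^v_t,V\rangle}]^{1/2}=\mbf{E}[e^{-2\langle L^u_t,V\rangle}]^{1/2}\cdot\mbf{E}[e^{-2\langle\tilde L^v_t,V\rangle}]^{1/2}$, so the first task is to establish a clean pointwise bound on the single-particle expectation. Setting $R_u:=\msf{d}(0,u)$ and letting $N_t$ denote the number of jumps of $X^u$ in $[0,t]$, the reverse triangle inequality yields $\msf{d}(0,X^u(s))\geq(R_u-N_t)_+$ for every $s\in[0,t]$, and hence by \eqref{Equation: V Pointwise Lower Bound},
$$\langle L^u_t,V\rangle\geq t\bigl((\kappa(R_u-N_t)_+)^\alpha-\mu\bigr).$$
By Assumption \ref{Assumption: Graph}(3), $N_t$ is stochastically dominated by a Poisson random variable of rate $\mf{q}t$, so a Chernoff tail estimate applied to $\{N_t\geq R_u/2\}$ gives, for all sufficiently small $t$ and for $R_u\geq 4e\mf{q}t$,
$$\mbf{E}\bigl[e^{-2\langle L^u_t,V\rangle}\bigr]^{1/2}\leq C_0\bigl(e^{-t(\kappa R_u/2)^\alpha}+e^{-c R_u}\bigr)$$
for constants $C_0,c>0$ depending only on $\mf{q},\mu,\alpha,\kappa$; the remaining bounded range of $R_u$ contributes $O(1)$ uniformly in $t$.

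Combining this pointwise bound with the coordination sequence bound \eqref{Equation: Coordination Sequence} and the change of variables $y=t(\kappa n/2)^\alpha$ in the integral comparison to $\sum_n n^{d-1}e^{-t(\kappa n/2)^\alpha}$ yields
$$\sum_{u\in\ms V}\mbf{E}\bigl[e^{-2\langle L^u_t,V\rangle}\bigr]^{1/2}=O\bigl(\kappa^{-d}\,t^{-d/\alpha}\bigr)\qquad\text{as }t\to 0,$$
and squaring this estimate immediately proves \eqref{Equation: Variance Upper Bound 3.1}. For \eqref{Equation: Variance Upper Bound 3.3} with $\beta>d$, the symmetrization $ab\leq(a^2+b^2)/2$ together with the symmetry of $(\msf{d}(u,v)+1)^{-\beta}$ reduces the bound to
$$\sum_{u\in\ms V}\mbf{E}\bigl[e^{-2\langle L^u_t,V\rangle}\bigr]\cdot\sup_{u\in\ms V}\sum_{v\in\ms V}\bigl(\msf{d}(u,v)+1\bigr)^{-\beta};$$
the inner supremum is finite by \eqref{Equation: Coordination Sequence} when $\beta>d$, and the outer sum is again $O(\kappa^{-d}t^{-d/\alpha})$ by the same integral comparison.

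The main obstacle is \eqref{Equation: Variance Upper Bound 3.2} for $0<\beta<d$, where the above symmetrization fails since $\sum_v(\msf{d}(u,v)+1)^{-\beta}$ diverges. I would instead split the double sum into two regimes according to whether $R_u\leq t^{-1/\alpha}$ or $R_u>t^{-1/\alpha}$. In the first regime, the shell-by-shell estimate $\sum_{v:\msf{d}(u,v)=k}e^{-t(\kappa R_v/2)^\alpha}\leq\mf{c}k^{d-1}$ combined with the weight $(k+1)^{-\beta}$ and compared to $\int_0^{t^{-1/\alpha}}x^{d-1-\beta}\,dx\asymp t^{-(d-\beta)/\alpha}$ controls the inner $v$-sum. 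In the second regime, the exponential $e^{-t(\kappa R_v/2)^\alpha}$ confines $v$ (up to a negligible tail) to a ball of radius $O(t^{-1/\alpha})$ about the origin, so by the reverse triangle inequality $\msf{d}(u,v)\geq R_u-R_v\gtrsim R_u$ whenever $R_u>t^{-1/\alpha}$; the weight then effectively factors as $R_u^{-\beta}$ and the inner sum is $O(R_u^{-\beta}\,t^{-d/\alpha})$. Multiplying by $e^{-t(\kappa R_u/2)^\alpha}$ and summing the outer $u$-variable via another integral substitution absorbs the extra $R_u^{-\beta}$ into the volume, yielding the required $O(\kappa^{-2d+\beta}t^{-(2d-\beta)/\alpha})$. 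The technical difficulty is that this entire two-region analysis must be performed using only the intrinsic graph bounds \eqref{Equation: Coordination Sequence} and the (reverse) triangle inequality for $\msf{d}$, without any appeal to translation invariance or Euclidean volume estimates on $\ms G$.
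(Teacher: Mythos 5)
Your approach is genuinely different from the paper's in several places, and parts of it are correct and arguably cleaner, but the treatment of \eqref{Equation: Variance Upper Bound 3.2} has a real gap in the $\kappa$-tracking that prevents it from recovering the exponent $\kappa^{-2d+\beta}$.

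\textbf{Where you diverge from the paper, successfully.} The paper does not use a Chernoff split on the jump count; instead it writes $-\langle L^u_t,V\rangle\leq-(\kappa t^{1/\alpha}\msf d(0,u))^{\min\{\alpha,1\}}+\max_{0\leq s\leq t}(\kappa t^{1/\alpha}\msf d(u,X^u(s)))^{\min\{\alpha,1\}}-1+\mu t$ and then disposes of the fluctuation term by showing that its exponential moment tends to $1$ uniformly via H\"older. Your split into $\{N_t<R_u/2\}$ and its complement yields an additive pointwise bound $\mbf E[e^{-2\langle L^u_t,V\rangle}]^{1/2}\lesssim e^{-t(\kappa R_u/2)^\alpha}+e^{-cR_u}$, which is a perfectly valid alternative; the $e^{-cR_u}$ term sums to an $O(1)$ constant that disappears under $\limsup_{t\to 0}$ after multiplying by the prefactor $t^{2d/\alpha}$, so the dependence on $\mf q$ and $\mu$ also washes out, as required for the constant $C_3$. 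For \eqref{Equation: Variance Upper Bound 3.3} your symmetrization $a_ua_v\leq(a_u^2+a_v^2)/2$ combined with $\sup_u\sum_v(\msf d(u,v)+1)^{-\beta}<\infty$ (for $\beta>d$) is cleaner than the paper's single double-cutoff argument, which handles $\beta>d$ and $\beta<d$ together by splitting on $\msf d(u,v)\lessgtr\kappa^{-1}t^{-1/\alpha}$.

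\textbf{The gap in \eqref{Equation: Variance Upper Bound 3.2}.} Your proposed two-regime split is on $R_u\lessgtr t^{-1/\alpha}$, but to recover the factor $\kappa^{-2d+\beta}$ the natural scale is $\kappa^{-1}t^{-1/\alpha}$ (which is the scale at which the weight $e^{-t(\kappa R/2)^\alpha}$ actually begins to decay, and it is exactly the cutoff the paper uses on $\msf d(u,v)$). With the unscaled cutoff $t^{-1/\alpha}$, the ``near'' part of the inner $v$-sum in regime 1 is controlled only by $\mf c\sum_{k\lesssim t^{-1/\alpha}}k^{d-1}(k+1)^{-\beta}\asymp t^{-(d-\beta)/\alpha}$ with no $\kappa$-decay, so the regime-1 contribution comes out $O(\kappa^{-d}t^{-(2d-\beta)/\alpha})$ rather than $O(\kappa^{-2d+\beta}t^{-(2d-\beta)/\alpha})$; since $\beta<d$ this is strictly weaker for large $\kappa$ and does not prove the inequality as stated. (It would still suffice for the application in Theorem \ref{Theorem: Upper}, where only $\lim_{\kappa\to\infty}=0$ is needed, but the lemma claims more.) Relatedly, in regime 2 your assertion that the exponential confines $v$ to a ball of radius $O(t^{-1/\alpha})$ and that $\msf d(u,v)\gtrsim R_u$ whenever $R_u>t^{-1/\alpha}$ fails near the boundary $R_u\approx t^{-1/\alpha}$, where $R_v$ can be comparable to $R_u$; this again is fixed by the $\kappa$-scaled cutoff together with restricting to, say, $R_u>2\kappa^{-1}t^{-1/\alpha}$ and using that $a_v$ decays sharply past $R_v\sim\kappa^{-1}t^{-1/\alpha}$. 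The paper sidesteps all of this by splitting on $\msf d(u,v)$ itself, which removes the boundary bookkeeping between the $u$- and $v$-balls: in the far regime one bounds $(\msf d(u,v)+1)^{-\beta}\leq(\kappa t^{1/\alpha})^{\beta}$ outright, and in the near regime one replaces $\msf d(0,v)$ by $\msf d(u,v)$ up to a factor $e$ using subadditivity of $x\mapsto x^{\min\{\alpha,1\}}$.
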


Lemma \ref{Lemma: Variance Upper Bound 3}, which is proved in Section \ref{sec:PrLem3},
follows the strategy outlined in \eqref{Equation: Heuristic 4} and \eqref{Equation: Heuristic 5}:
The first step of the proof of Lemma \ref{Lemma: Variance Upper Bound 3} relies on a rigorous
implementation of the intuition that, for very small $t>0$, one expects that
\begin{align}
\label{Equation: Lemma 3 Intuition}
\mbf E\big[\mr e^{-2\langle L^u_t+\tilde L^v_t,V\rangle}\big]^{1/2}\approx\mr e^{-tV(u)-tV(v)}.
\end{align}
This once again relies on controlling how much  $X^u$ and $\tilde X^v$ deviate from their starting points.
Once a quantitative version of \eqref{Equation: Lemma 3 Intuition} is established, 
we can then use \eqref{Equation: V Pointwise Lower Bound}, which allows to control
$\mbf E\big[\mr e^{-2\langle L^u_t+\tilde L^v_t,V\rangle}\big]^{1/2}$
in terms of quantities that only depend on the geometry of $\ms G$ (more precisely, the graph distance).
We then wrap up the proof of the lemma by using the upper bound on the coordination sequences in
\eqref{Equation: Coordination Sequence}, in similar fashion to \eqref{Equation: Heuristic 5}.

\subsubsection{Step 3. Conclusion of Proof}

We now combine the technical results stated above to conclude the proof of Theorem
\ref{Theorem: Upper}.
By applying Lemmas \ref{Lemma: Variance Upper Bound 1} and \ref{Lemma: Variance Upper Bound 2}
to our upper bound \eqref{Equation: Variance Upper Bound Holder}, we get that for every $t<1/C_1$, one has
\begin{align}
\label{Equation: Main Proof General}
\mbf{Var}\big[\mr{Tr}[K_t]\big]\leq C_1 t^2\sum_{u,v\in\ms V}\mbf E\Big[\mr e^{-2\langle L^u_t+\tilde L^v_t,V\rangle}\Big]^{1/2},
\end{align}
and if $\xi$ has covariance decay of order $\be>0$, then for every $t<1/C_2$, one has
\begin{multline}
\label{Equation: Main Proof Decay}
\mbf{Var}\big[\mr{Tr}[K_t]\big]\leq C_2 t^2\sum_{u,v\in\ms V}\mbf E\Big[\mr e^{-2\langle L^u_t+\tilde L^v_t,V\rangle}\Big]^{1/2}\big(\msf d(u,v)+1\big)^{-\be}\\
+C_2t^4\sum_{u,v\in\ms V}\mbf E\Big[\mr e^{-2\langle L^u_t+\tilde L^v_t,V\rangle}\Big]^{1/2}.
\end{multline}
Thanks to our growth assumption in \eqref{Equation: Potential Growth}, for any choice of $\ka>0$, we know that there exists
a large enough $\mu>0$ so that \eqref{Equation: V Pointwise Lower Bound} holds. We may then complete the proof
of Theorem \ref{Theorem: Upper} by an application of Lemma \ref{Lemma: Variance Upper Bound 3}. We do this on a case-by-case
basis:

Suppose first that $\xi$ has covariance decay of order $0<\be<d$ and that $\al\geq d-\be/2>d/2$. Then,
the fact that $2-(2d-\be)/\al\geq0$ implies by \eqref{Equation: Variance Upper Bound 3.2} that
\begin{multline*}
\limsup_{t\to0}t^2\sum_{u,v\in\ms V}\mbf E\Big[\mr e^{-2\langle L^u_t+\tilde L^v_t,V\rangle}\Big]^{1/2}\big(\msf d(u,v)+1\big)^{-\be}\\
=\limsup_{t\to0}t^{2-(2d-\be)/\al}t^{(2d-\be)/\al}\sum_{u,v\in\ms V}\mbf E\Big[\mr e^{-2\langle L^u_t+\tilde L^v_t,V\rangle}\Big]^{1/2}\big(\msf d(u,v)+1\big)^{-\be}
\leq C_3\ka^{-2d+\be};
\end{multline*}
and the fact that $4-2d/\al>0$ implies by \eqref{Equation: Variance Upper Bound 3.1} that
\begin{multline}
\label{Equation: Main Proof Decay 2}
\limsup_{t\to0}t^4\sum_{u,v\in\ms V}\mbf E\Big[\mr e^{-2\langle L^u_t+\tilde L^v_t,V\rangle}\Big]^{1/2}\\
=\limsup_{t\to0}t^{4-2d/\al}t^{2d/\al}\sum_{u,v\in\ms V}\mbf E\Big[\mr e^{-2\langle L^u_t+\tilde L^v_t,V\rangle}\Big]^{1/2}
=0.
\end{multline}
Combining this with \eqref{Equation: Main Proof Decay} implies that
\[\limsup_{t\to0}\mbf{Var}\big[\mr{Tr}[K_t]\big]\leq C_2C_3\ka^{-2d+\be},\]
where we recall that $C_2,C_3>0$ do not depend on $\ka$ or $\mu$. Since
\eqref{Equation: V Pointwise Lower Bound} holds for any choice of $\ka>0$, we can take
$\ka\to\infty$, which then yields $\mbf{Var}\big[\mr{Tr}[K_t]\big]\to0$ as $t\to0$.

Next, suppose that $\xi$ has covariance decay of order $\be=d$ and that $\al>d/2$.
We note that this implies that $\xi$ also has correlation
decay of order $\tilde\be$ for any choice of $0<\tilde\be<d$.
Since $\al>d/2$ implies that $2d-2\al<d$, we can choose $\tilde\be$ close enough to
$d$ so that
$2d-2\al<\tilde\be$, which we can rearrange into $2>(2d-\tilde\be)/\al$.
Thus, \eqref{Equation: Variance Upper Bound 3.2} implies that
\begin{multline*}
\limsup_{t\to0}t^2\sum_{u,v\in\ms V}\mbf E\Big[\mr e^{-2\langle L^u_t+\tilde L^v_t,V\rangle}\Big]^{1/2}\big(\msf d(u,v)+1\big)^{-\be}\\
=\limsup_{t\to0}t^{2-(2d-\tilde\be)/\al}t^{(2d-\tilde\be)/\al}\sum_{u,v\in\ms V}\mbf E\Big[\mr e^{-2\langle L^u_t+\tilde L^v_t,V\rangle}\Big]^{1/2}\big(\msf d(u,v)+1\big)^{-\tilde\be}
=0.
\end{multline*}
Combining this with \eqref{Equation: Main Proof Decay 2}, we directly prove
that $\mbf{Var}\big[\mr{Tr}[K_t]\big]\to0$ as $t\to0$ in this case.

Suppose now that $\xi$ has covariance decay of order $\be>d$ and that $\al\geq d/2$.
Then, the fact that $2-d/\al\geq0$ implies by \eqref{Equation: Variance Upper Bound 3.3} that
\begin{multline*}
\limsup_{t\to0}t^2\sum_{u,v\in\ms V}\mbf E\Big[\mr e^{-2\langle L^u_t+\tilde L^v_t,V\rangle}\Big]^{1/2}\big(\msf d(u,v)+1\big)^{-\be}\\
=\limsup_{t\to0}t^{2-d/\al}t^{d/\al}\sum_{u,v\in\ms V}\mbf E\Big[\mr e^{-2\langle L^u_t+\tilde L^v_t,V\rangle}\Big]^{1/2}\big(\msf d(u,v)+1\big)^{-\be}
\leq C_3\ka^{-d};
\end{multline*}
and the fact that $4-2d/\al\geq0$ implies by \eqref{Equation: Variance Upper Bound 3.1} that
\begin{multline*}
\limsup_{t\to0}t^4\sum_{u,v\in\ms V}\mbf E\Big[\mr e^{-2\langle L^u_t+\tilde L^v_t,V\rangle}\Big]^{1/2}\\
=\limsup_{t\to0}t^{4-2d/\al}t^{2d/\al}\sum_{u,v\in\ms V}\mbf E\Big[\mr e^{-2\langle L^u_t+\tilde L^v_t,V\rangle}\Big]^{1/2}
\leq C_3\ka^{-2d}.
\end{multline*}
Combining this with \eqref{Equation: Main Proof Decay} and taking $\ka\to\infty$ then implies that
$\mbf{Var}\big[\mr{Tr}[K_t]\big]\to0$ as $t\to0$.

Finally, consider the general case where we simply assume that $\al\geq d$.
Then, $2-2d/\al\geq0$, and thus \eqref{Equation: Variance Upper Bound 3.1} implies that
\begin{multline*}
\limsup_{t\to0} t^2\sum_{u,v\in\ms V}\mbf E\Big[\mr e^{-2\langle L^u_t+\tilde L^v_t,V\rangle}\Big]^{1/2}\\
=\limsup_{t\to0} t^{2-2d/\al}t^{2d/\al}\sum_{u,v\in\ms V}\mbf E\Big[\mr e^{-2\langle L^u_t+\tilde L^v_t,V\rangle}\Big]^{1/2}\leq C_3\ka^{-2d}.
\end{multline*}
Since the constants $C_1,C_3>0$ are independent of $\ka$ and $\mu$, combining this with \eqref{Equation: Main Proof General}
and taking $\ka\to\infty$ then implies that $\mbf{Var}\big[\mr{Tr}[K_t]\big]\to0$ as $t\to0$ in this case.
This then completes the proof of Theorem \ref{Theorem: Upper}.

\subsection{Proof of Proposition \ref{Proposition: Variance Formula}}
\label{Section: Proof of Variance Formula}

Since the random walk $X$ is assumed independent of $\xi$, by applying Fubini's theorem
to the definition of $K_t$ in \eqref{Equation: Kernel},
we have that
\[\mbf E\big[\mr{Tr}[K_t]\big]=\sum_{v\in\ms V}\mbf E^v\left[\mr e^{-\langle L_t,V\rangle}\mbf E_\xi\left[\mr e^{-\langle L_t,\xi\rangle}\right]\mbf 1_{\{X(t)=v\}}\right],\]
where we recall the definition of $\mbf E_\xi$ in Notation \ref{Notation: Conditional Expectation}.
Taking the square of this expression, we then get once again by Fubini's theorem that
\[\mbf E\big[\mr{Tr}[K_t]\big]^2=\sum_{u,v\in\ms V}\mbf E\left[\mr e^{-\langle L^u_t+\tilde L^v_t,V\rangle}
\mbf E_\xi\left[\mr e^{-\langle L^u_t,\xi\rangle}\right]\mbf E_\xi\left[\mr e^{-\langle\tilde L^v_t,\xi\rangle}\right]\mbf 1_{\{X^u(t)=u,\tilde X^v(t)=v\}}\right].\]
Thanks to \eqref{Equation: Kernel}, it is easy to check that
\[\mr{Tr}[K_t]^2=\sum_{u,v\in\ms V}\mbf E_{\xi}\left[\mr e^{-\langle L^u_t+\tilde L^v_t,V+\xi\rangle}\mbf 1_{\{X^u(t)=u,\tilde X^v(t)=v\}}\right].\]
Taking the expectation of this expression using Fubini's theorem then leads to
\[\mbf E\big[\mr{Tr}[K_t]^2\big]=\sum_{u,v\in\ms V}\mbf E\left[\mr e^{-\langle L^u_t+\tilde L^v_t,V\rangle}
\mbf E_\xi\left[\mr e^{-\langle L^u_t+\tilde L^v_t,\xi\rangle}\right]\mbf 1_{\{X^u(t)=u,\tilde X^v(t)=v\}}\right].\]
The proof of Proposition \ref{Proposition: Variance Formula} is then simply a matter of subtracting
$\mbf E\big[\mr{Tr}[K_t]\big]^2$ from the above expression for $\mbf E\big[\mr{Tr}[K_t]^2\big]$,
and using the definition of $\mbf{Cov}_\xi$ in Notation \ref{Notation: Conditional Expectation}.

\subsection{Auxiliary results on estimates of moment generating functions} 
\label{Section: Moment Generating Functions}
Before discussing the proofs of Lemma~\ref{Lemma: Variance Upper Bound 1} and Lemma~\ref{Lemma: Variance Upper Bound 2} in the next two subsections, we list here two simple propositions concerning the tail behaviors of the moment generating functions of the noises and their covariances. The first result is a  straightforward consequence of Taylor expansions and Assumption~\ref{Assumption: Potential and Noise} on the tails of the noises. 
\begin{proposition}
	\label{Proposition: Noise Decay 1}
	Under Assumption \ref{Assumption: Potential and Noise},
	for every finitely-supported deterministic functions $f,g:\ms V\to\mbb R$ such that
	$\|f+g\|_{1},\|f\|_{1},\|g\|_{1}\leq1/2\mf m$,
	it holds that
	\begin{align}\label{eq: NoiseDecay1}
		\Big|\mbf E\big[\mr e^{\langle f,\xi\rangle}\big]-1\Big|\leq 2\mf m^2\|f\|_{1}^2
	\end{align}
	and
	\begin{align}\label{eq: NoiseDecay2}
		\big|\mbf{Cov}\big[\mr e^{\langle f,\xi\rangle},\mr e^{\langle g,\xi\rangle}\big]\Big|\leq2\mf m^2\big(\|f+g\|_{1}^2+\|f\|_{1}^2+\|g\|^2_{1}\big)+4\mf m^4\|f\|_{1}^2\|g\|_{1}^2.
	\end{align}
\end{proposition}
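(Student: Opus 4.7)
The plan is to prove both inequalities by a direct Taylor expansion of the exponential, combined with the moment bound \eqref{Equation: Exponential Moments} and the vanishing-mean assumption. First, I would establish the bound
\[\mbf E\big[|\langle f,\xi\rangle|^k\big]\leq k!\,\mf m^k\|f\|_1^k\qquad(k\geq1)\]
for any finitely-supported $f:\ms V\to\mbb R$. Since only finitely many terms contribute, Minkowski's inequality in $L^k(\mbf P)$ gives
\[\mbf E\big[|\langle f,\xi\rangle|^k\big]^{1/k}\leq\sum_{v\in\ms V}|f(v)|\,\mbf E\big[|\xi(v)|^k\big]^{1/k}\leq(k!)^{1/k}\mf m\,\|f\|_1\]
by \eqref{Equation: Exponential Moments}, and raising to the $k$th power yields the claim.

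Next, I would expand $\mr e^{\langle f,\xi\rangle}$ as a Taylor series. The $k=0$ contribution cancels with the $1$, and the $k=1$ contribution vanishes because $\mbf E[\xi(v)]=0$. The tail is then controlled using the moment bound above:
\[\Big|\mbf E\big[\mr e^{\langle f,\xi\rangle}\big]-1\Big|\leq\sum_{k\geq2}\frac{\mbf E\big[|\langle f,\xi\rangle|^k\big]}{k!}\leq\sum_{k\geq2}(\mf m\|f\|_1)^k=\frac{(\mf m\|f\|_1)^2}{1-\mf m\|f\|_1}.\]
Under the hypothesis $\|f\|_1\leq 1/(2\mf m)$, the denominator is at least $1/2$, and \eqref{eq: NoiseDecay1} follows. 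To justify the exchange of expectation and summation, I would note that the dominating series $\sum_k (\mf m\|f\|_1)^k/k!$ has finite sum under the hypothesis, so Fubini applies.

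For the covariance bound, I would use the algebraic identity
\[\mbf{Cov}\big[\mr e^{\langle f,\xi\rangle},\mr e^{\langle g,\xi\rangle}\big]=\alpha-\beta-\gamma-\beta\gamma,\]
where $\alpha:=\mbf E[\mr e^{\langle f+g,\xi\rangle}]-1$, $\beta:=\mbf E[\mr e^{\langle f,\xi\rangle}]-1$, and $\gamma:=\mbf E[\mr e^{\langle g,\xi\rangle}]-1$. Applying \eqref{eq: NoiseDecay1} to each of $\alpha$, $\beta$, and $\gamma$ (which is valid because the hypothesis on the $\ell^1$-norms applies to all three of $f+g$, $f$, and $g$) and using the triangle inequality gives \eqref{eq: NoiseDecay2} at once. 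There is no serious obstacle here; the only mild point of care is that the Minkowski step relies on the finite support of $f$ so that the sum over $\ms V$ is a finite linear combination, and that the choice of threshold $1/(2\mf m)$ is exactly what is needed for the geometric tail to be summed to the clean form $2\mf m^2\|f\|_1^2$.
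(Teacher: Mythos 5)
Your proof is correct and follows essentially the same route as the paper: a Taylor expansion of the exponential, cancellation of the $k=0$ and $k=1$ terms (the latter via the zero-mean assumption), a moment bound for the tail, a geometric-series sum, and then the algebraic identity $\mbf{Cov}=\alpha-\beta-\gamma-\beta\gamma$ with $\alpha=\mbf E[\mr e^{\langle f+g,\xi\rangle}]-1$, $\beta=\mbf E[\mr e^{\langle f,\xi\rangle}]-1$, $\gamma=\mbf E[\mr e^{\langle g,\xi\rangle}]-1$. The only cosmetic difference is that you derive the $k$-th moment bound $\mbf E[|\langle f,\xi\rangle|^k]\leq k!\,\mf m^k\|f\|_1^k$ via Minkowski's inequality on $\langle f,\xi\rangle$, whereas the paper expands the product $\xi(v_1)\cdots\xi(v_p)$ and applies H\"older term by term; these yield the same estimate.
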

\begin{proof}
	For every deterministic function $f:\ms V\to\mbb R$, it follows from
	a straightforward Taylor expansion of the exponential that
	\begin{align}
		\label{Equation: Exponential Moment Expansion}
		\mbf E\left[\mr e^{\langle f,\xi\rangle}\right]=\sum_{p=0}^\infty\frac{1}{p!}\sum_{v_1,\ldots,v_p\in\ms V}\mbf E[\xi(v_1)\cdots\xi(v_p)]f(v_1)\cdots f(v_p),
	\end{align}
	with the convention that the term with $p=0$ above is equal to one.
	Firstly, since $\mbf E[\xi(v)]=0$ for all $v$,
	the term corresponding to $p=1$ in \eqref{Equation: Exponential Moment Expansion} is zero.
	Secondly, thanks to our moment growth assumption $\mbf E[|\xi(v)|^p]\leq p!\mf m^p$, for every $p\geq2$ we have that
	\begin{multline*}
		\left|\sum_{v_1,\ldots,v_p\in\ms V}\mbf E[\xi(v_1)\cdots\xi(v_p)]f(v_1)\cdots f(v_p)\right|\\
		\leq\sum_{v_1,\ldots,v_p\in\ms V}\mbf E[|\xi(v_1)|^p]^{1/p}\cdots\mbf E[|\xi(v_p)|^p]^{1/p}|f(v_1)|\cdots |f(v_p)|
		\leq p!\big(\mf m\|f\|_{1}\big)^p.
	\end{multline*}
	Thus, if $\|f\|_{1}\leq1/2\mf m$, then we have that
	\[\Big|\mbf E\left[\mr e^{\langle f,\xi\rangle}\right]-1\Big|\leq\sum_{p=2}^\infty(\mf m\|f\|_{1})^p=\frac{(\mf m\|f\|_{1})^2}{1-\mf m\|f\|_{1}}\leq2(\mf m\|f\|_{1})^2.\]
	As for the claim regarding the covariance, for any two random variables $Y$ and $Z$,
	we have by the triangle inequality that
	\begin{multline*}
		|\mbf{Cov}[Y,Z]|
		=|\mbf E[YZ]-\mbf E[Y]\mbf E[Z]|\\
		\leq|\mbf E[YZ]-1|-|\mbf E[Y]-1||\mbf E[Z]-1|+|1-\mbf E[Y]|+|1-\mbf E[Z]|
	\end{multline*}
	Thus, whenever $\|f+g\|_{1},\|f\|_{1},\|g\|_{1}\leq1/2\mf m$,
	it follows from \eqref{eq: NoiseDecay1} that
	\[\Big|\mbf{Cov}\big[\mr e^{\langle f,\xi\rangle},\mr e^{\langle g,\xi\rangle}\big]\Big|
	\leq2\mf m^2\big(\|f+g\|_{1}^2+\|f\|_{1}^2+\|g\|^2_{1}\big)+4\mf m^4\|f\|_{1}^2\|g\|_{1}^2,\]
	as desired.
\end{proof}

In cases where we need a more precise control on the covariance, we have the following power series expansion:

\begin{proposition}
	\label{Proposition: General Covariance Formula}
	Suppose that Assumption \ref{Assumption: Potential and Noise} holds.
	For any two finitely supported deterministic functions $f,g:\ms V\to\mbb R$,
	one has
	\[\mbf{Cov}\left[\mr e^{\langle f,\xi\rangle},\mr e^{\langle g,\xi\rangle}\right]=\sum_{p=2}^\infty\frac{\mc A_p(f,g)}{p!},\]
	where, for every $p\geq2$, we denote
	\begin{multline}
		\label{Equation: General Covariance Formula}
		\mc A_p(f,g):=
		\sum_{v_1,\ldots,v_p\in\ms V}\Bigg(\sum_{m=1}^{p-1}{p\choose m}
		\mbf{Cov}[\xi(v_1)\cdots\xi(v_m),\xi(v_{m+1})\cdots\xi(v_p)]\\
		\cdot f(v_1)\cdots f(v_m)g(v_{m+1})\cdots g(v_p)\Bigg).
	\end{multline}
\end{proposition}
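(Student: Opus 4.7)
The plan is to write out each exponential as a Taylor series, expand the inner products $\langle f,\xi\rangle^k$ and $\langle g,\xi\rangle^k$ explicitly as sums over vertices, take expectations, and collect terms by the total number $p$ of vertex labels appearing. More precisely, for any finitely supported $h:\ms V\to\mbb R$, the identity
\[\mr e^{\langle h,\xi\rangle}=\sum_{k=0}^\infty\frac{1}{k!}\sum_{v_1,\dots,v_k\in\ms V}h(v_1)\cdots h(v_k)\,\xi(v_1)\cdots\xi(v_k)\]
holds pointwise, the inner sum being finite because $h$ has finite support. I would multiply the expansions for $f$ and $g$ together and re-index the resulting double series in $(m,n)$ by $p:=m+n$ and $m$, which introduces the factor $\binom{p}{m}$ after writing $\frac{1}{m!(p-m)!}=\frac{1}{p!}\binom{p}{m}$.

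Next, I would take $\mbf E[\,\cdot\,]$ termwise in both $\mbf E\bigl[\mr e^{\langle f,\xi\rangle}\mr e^{\langle g,\xi\rangle}\bigr]$ and $\mbf E\bigl[\mr e^{\langle f,\xi\rangle}\bigr]\cdot\mbf E\bigl[\mr e^{\langle g,\xi\rangle}\bigr]$, obtaining for the first the mixed moment $\mbf E[\xi(v_1)\cdots\xi(v_p)]$ and for the second the factorized moment $\mbf E[\xi(v_1)\cdots\xi(v_m)]\,\mbf E[\xi(v_{m+1})\cdots\xi(v_p)]$. Subtracting term by term produces exactly $\mbf{Cov}[\xi(v_1)\cdots\xi(v_m),\xi(v_{m+1})\cdots\xi(v_p)]$, with the same combinatorial prefactor $\frac{1}{p!}\binom{p}{m}$. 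The terms with $m=0$ and $m=p$ drop out because one side of the covariance is then the constant $1$; the case $p\leq 1$ also contributes nothing since no valid $m\in\{1,\dots,p-1\}$ exists. What remains is exactly $\sum_{p\geq 2}\mc A_p(f,g)/p!$.

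The only real work is justifying the two term-by-term interchanges of $\mbf E$ with an infinite sum. For this I would apply Fubini/Tonelli after dominating by absolute values: letting $S$ be the (finite) support of $f$ and $g$ together, H\"older's inequality combined with the moment bound \eqref{Equation: Exponential Moments} gives
\[\sum_{v_1,\dots,v_p\in S}|f(v_1)\cdots f(v_m)g(v_{m+1})\cdots g(v_p)|\,\mbf E\bigl[|\xi(v_1)\cdots\xi(v_p)|\bigr]\leq p!\,\mf m^p\bigl(\|f\|_1+\|g\|_1\bigr)^p,\]
and the analogous bound holds for the factorized moments. After dividing by $p!$ and summing, both double series are absolutely convergent, so Fubini applies and the termwise interchange is legitimate. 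The same bound applies to $\mbf E[\mr e^{\langle f,\xi\rangle}]$ and $\mbf E[\mr e^{\langle g,\xi\rangle}]$ individually, showing that the product of expectations can likewise be expanded termwise.

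The main obstacle, therefore, is not the algebra (which is a clean bookkeeping exercise around $\binom{p}{m}=p!/(m!(p-m)!)$) but the absolute convergence estimate needed to invoke Fubini; this is why Assumption~\ref{Assumption: Potential and Noise}(2), which was already used in the proof of Proposition~\ref{Proposition: Noise Decay 1}, is essential here and why the hypothesis that $f$ and $g$ are finitely supported cannot be relaxed without further work.
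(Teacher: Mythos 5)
Your proposal is correct and follows essentially the same strategy as the paper's proof: Taylor-expand the exponentials, re-index to extract the $\binom{p}{m}=\frac{p!}{m!(p-m)!}$ combinatorics, and subtract termwise to produce the covariances. The only cosmetic difference is that you compute $\mbf E\bigl[\mr e^{\langle f,\xi\rangle}\mr e^{\langle g,\xi\rangle}\bigr]$ as a Cauchy product of two separate series, whereas the paper expands $\mbf E\bigl[\mr e^{\langle f+g,\xi\rangle}\bigr]$ directly and invokes the symmetry of the mixed moments; these yield identical expressions, and your explicit Fubini/Tonelli justification is a careful addition that the paper leaves implicit.
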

\begin{proof}
	Using the same Taylor expansion as in \eqref{Equation: Exponential Moment Expansion},
	we get, on the one hand,
	\begin{align*}
		&\mbf E\left[\mr e^{\langle f+g,\xi\rangle}\right]\\
		&=\sum_{p=0}^\infty\frac{1}{p!}\sum_{v_1,\ldots,v_p\in\ms V}\mbf E[\xi(v_1)\cdots\xi(v_p)]\big(f(v_1)+g(v_1)\big)\cdots\big(f(v_p)+g(v_p)\big)\\
		&=\sum_{p=0}^\infty\frac{1}{p!}\sum_{v_1,\ldots,v_p\in\ms V}\sum_{m=0}^p{p\choose m}\mbf E[\xi(v_1)\cdots\xi(v_p)]f(v_1)\cdots f(v_m)g(v_{m+1})\cdots g(v_p),
	\end{align*}
	and on the other hand
	\begin{align*}
		&\mbf E\left[\mr e^{\langle f,\xi\rangle}\right]\mbf E\left[\mr e^{\langle g,\xi\rangle}\right]\\
		&=\sum_{m_1,m_2=0}^\infty\frac{1}{m_1!m_2!}\Bigg(\sum_{v_1,\ldots,v_{m_1+m_2}\in\ms V}
		\mbf E[\xi(v_1)\cdots\xi(v_{m_1})]\mbf E[\xi(v_{m_1+1})\cdots\xi(v_{m_1+m_2})]\\
		&\hspace{2.5in}\cdot f(v_1)\cdots f(v_{m_1})g(v_{m_1+1})\cdots g(v_{m_1+m_2})\Bigg)\\
		&=\sum_{p=0}^\infty\sum_{m=0}^p\frac{1}{m!(p-m)!}\Bigg(\sum_{v_1,\ldots,v_{p}\in\ms V}
		\mbf E[\xi(v_1)\cdots\xi(v_{m})]\mbf E[\xi(v_{m+1})\cdots\xi(v_{p})]\\
		&\hspace{2.5in}\cdot f(v_1)\cdots f(v_{m})g(v_{m+1})\cdots g(v_{p})\Bigg)\\
		&=\sum_{p=0}^\infty\frac1{p!}\sum_{v_1,\ldots,v_{p}\in\ms V}\Bigg(\sum_{m=0}^p{p\choose m}
		\mbf E[\xi(v_1)\cdots\xi(v_{m})]\mbf E[\xi(v_{m+1})\cdots\xi(v_{p})]\\
		&\hspace{2.5in}\cdot f(v_1)\cdots f(v_{m})g(v_{m+1})\cdots g(v_{p})\Bigg).
	\end{align*}
	We then get the result by subtracting these two expressions.
\end{proof}

\subsection{Proof of Lemma \ref{Lemma: Variance Upper Bound 1}}
\label{sec:PrLem1}

By definition of local time, $\|L_t^u\|_{1}=\|\tilde{L}_t^v\|_{1}= t$,
as well as $\|L_t^u+\tilde{L}_t^v\|_{1}=2t$.
Thus, by \eqref{eq: NoiseDecay2} in Proposition \ref{Proposition: Noise Decay 1}, if $t<1/4\mf m$, then
we have for any $u,v\in \ms V$ that
\[
		\left|\mbf{Cov}_{\xi}\big[\mr e^{-\langle L^u_t,\xi\rangle},\mr e^{-\langle\tilde L^v_t,\xi\rangle}\big]\right|\leq2\mf m^2\big(4t^2+t^2+t^2\big)+4\mf m^4t^4=12\mf m^2t^2+4\mf m^4t^4.
\] 
Since the right-hand side of this inequality is not random, the result then follows
by noting that $t^4\leq t^2$ when $t\leq1$ and taking $C_1:=\max\{1,4\mf m,12\mf m^2,4\mf m^4\}$.

\subsection{Proof of Lemma \ref{Lemma: Variance Upper Bound 2}}
\label{sec:PrLem2}

For every $u,v\in\ms V$ and $t>0$, let us denote by
\[\mf D^{u,v}_t:=\min_{\substack{a,b\in\ms V\\L_t^u(a),\tilde L_t^v(b)\neq0}}\msf d(a,b)\]
the distance between the ranges of $X^u$ and $\tilde X^v$ up to time $t$.
In Section \ref{Section: Covariance Decay Step 1} below we prove the following crude version of Lemma \ref{Lemma: Variance Upper Bound 2}:
For every $t<\min\{1,1/4\mf m\}$ and $u,v\in\ms V$,
\begin{align}
\label{Equation: covariance decay}
\left|\mbf{Cov}_\xi\left[\mr e^{-\langle L_t^u,\xi\rangle},\mr e^{-\langle\tilde L_t^v,\xi\rangle}\right]\right|\leq 2\mf Ct^2(\mf D_t^{u,v}+1)^{-\be}+64\mf m^4t^4.
\end{align}
With this in hand, by Minkowski's inequality, we have that
\begin{align}
\label{Equation: PrLem2 - 1}
\mbf E\left[\mbf{Cov}_\xi\left[\mr e^{-\langle L_t^u,\xi\rangle},\mr e^{-\langle\tilde L_t^v,\xi\rangle}\right]^2\right]^{1/2}\leq
2\mf Ct^2\mbf E\big[(\mf D_t^{u,v}+1)^{-2\be}\big]^{1/2}+64\mf m^4t^4
\end{align}
for every $t<\min\{1,1/4\mf m\}$ and $u,v\in\ms V$.

Next, we control $\mf D^{u,v}_t$ in terms of 
$\msf d(u,v)$. We do this in two cases. Suppose first that $\msf d(u,v)<16$.
In this case, we have the trivial bound
\[\mbf E\big[(\mf D_t^{u,v}+1)^{-2\be}\big]^{1/2}\leq1\leq17^\be\big(\msf d(u,v)+1\big)^{-\be},\]
which, when combined with \eqref{Equation: PrLem2 - 1}, yields
\begin{align}
\label{Equation: PrLem2 - 2}
\mbf E\left[\mbf{Cov}_\xi\left[\mr e^{-\langle L_t^u,\xi\rangle},\mr e^{-\langle\tilde L_t^v,\xi\rangle}\right]^2\right]^{1/2}\leq
2\cdot17^\be\mf Ct^2\big(\msf d(u,v)+1\big)^{-\be}+64\mf m^4t^4
\end{align}
for every $t<\min\{1,1/4\mf m\}$ and $u,v\in\ms V$ such that $\msf d(u,v)<16$.

Suppose then that $\msf d(u,v)\geq16$.
For any $u,v\in \ms V$ and $t>0$, we introduce the event
	\[
		E^{u,v}_t := \left\{\sup_{0\leq s\leq t}\msf d\big(X^u(s),u\big)\leq \frac{\msf d(u,v)}{4}\quad\text{and}\quad\sup_{0\leq s\leq t}\msf d\big(\tilde{X}^v(s),v\big)\leq \frac{\msf d(u,v)}{4}\right\}.
	\]
With this in hand, given that $(\mf D_t^{u,v}+1)^{-\be}\leq1$ and $\sqrt{x+y}\leq\sqrt x+\sqrt y$ for all $x,y\geq0$,
\[\mbf E\big[(\mf D_t^{u,v}+1)^{-2\be}\big]^{1/2}
\leq\mbf E\big[(\mf D_t^{u,v}+1)^{-2\be}\mbf 1_{E^{u,v}_t}\big]^{1/2}+\mbf P\big[(E^{u,v}_t)^c\big]^{1/2}.\]
For any outcome in the event $E^{u,v}_t$, we have by the triangle inequality that
	\[
		\msf d\big(X^u(s),\tilde{X}^v(\tilde s)\big)\geq \msf d(u,v)-\msf d\big(X^u(s),u\big)-\msf d\big(\tilde{X}^v(\tilde s),v\big)\geq  \frac{\msf d(u,v)}{4}
	\]
for every $0\leq s,\tilde s\leq t$. In particular, this means that
$\mf D^{u,v}_t\mbf 1_{E^{u,v}_t}\geq\msf d(u,v)/4$.
In Section \ref{Section: Covariance Decay Step 2} below, we prove that
if $t<\min\{4/\mf q,1/4\mf q\mr e\}$ and $\msf d(u,v)\geq16$,
then
\begin{align}
\label{Equation: Ranges Tail Bound}
\mbf P\big[(E^{u,v}_t)^c\big]^{1/2}\leq \frac{\sqrt 2\,\mf q^2\mr e^2 t^2}{16}.
\end{align}
Combining these bounds with \eqref{Equation: PrLem2 - 1}, we are led to
\begin{multline}
\label{Equation: PrLem2 - 3}
\mbf E\left[\mbf{Cov}_\xi\left[\mr e^{-\langle L_t^u,\xi\rangle},\mr e^{-\langle\tilde L_t^v,\xi\rangle}\right]^2\right]^{1/2}\\
\leq2\cdot 4^\be\mf Ct^2\big(\msf d(u,v)+1\big)^{-\be}+\left(\frac{\sqrt 2\,\mf q^2\mr e^2\mf C}{8}+64\mf m^4\right)t^4
\end{multline}
for all $t<\min\{1,1/4\mf m,4/\mf q,1/4\mf q\mr e\}$ and $u,v\in\ms V$ such that $\msf d(u,v)\geq16$.

With \eqref{Equation: PrLem2 - 2} and \eqref{Equation: PrLem2 - 3} in hand,
in order to prove Lemma \ref{Lemma: Variance Upper Bound 2}, it only remains
to establish \eqref{Equation: covariance decay} and \eqref{Equation: Ranges Tail Bound}.
We do this in the next two subsections.

\subsubsection{Proof of \eqref{Equation: covariance decay}}
\label{Section: Covariance Decay Step 1}

Our main tool to prove \eqref{Equation: covariance decay} consists of the power series expansion
proved in Proposition \ref{Equation: General Covariance Formula}:
\begin{align}
\label{Equation: Covariance Expansion}
\mbf{Cov}_\xi\left[\mr e^{-\langle L_t^u,\xi\rangle},\mr e^{-\langle\tilde L_t^v,\xi\rangle}\right]=\sum_{p=2}^\infty\frac{\mc A_p(-L^u_t,-\tilde L^v_t)}{p!},
\end{align}
where the terms $\mc A_p$ are defined in \eqref{Equation: General Covariance Formula}.
Thanks to our moment growth assumptions in \eqref{Equation: Exponential Moments}, for every $p\geq4$ and $1\leq m\leq p-1$, we have that
\begin{align*}
&\big|\mbf{Cov}[\xi(v_1)\cdots\xi(v_m),\xi(v_{m+1})\cdots\xi(v_p)]\big|\\
&\leq\big|\mbf E[\xi(v_1)\cdots\xi(v_p)]\big|+\big|\mbf E[\xi(v_1)\cdots\xi(v_m)]\mbf E[\xi(v_{m+1})\cdots\xi(v_p)]\big|\\
&\leq\mbf E[|\xi(v_1)|^p]^{1/p}\cdots\mbf E[|\xi(v_p)|^p]^{1/p}\\
&\qquad+\mbf E[|\xi(v_1)|^m]^{1/m}\cdots\mbf E[|\xi(v_m)|^m]^{1/m}\mbf E[|\xi(v_{m+1})|^{p-m}]^{1/(p-m)}\cdots\mbf E[|\xi(v_p)|^{p-m}]^{1/(p-m)}\\
&\leq p!\mf m^p+m!(p-m)!\mf m^p\\
&\leq2p!\mf m^p.
\end{align*}
Therefore, by combining \eqref{Equation: General Covariance Formula}
with the fact that $\sum_{m=0}^p{p\choose m}=2^p$, one has
\[\frac{|\mc A_p(-L_t^u,-\tilde L_t^v)|}{p!}\leq2\mf m^p\sum_{m=1}^{p-1}{p\choose m}\|L_t^u\|_{1}^m\|\tilde L_t^v\|_{1}^{p-m}
\leq 2(2\mf m t)^p.\]
Next, if $\xi$ has covariance decay of order $\be$, then
\eqref{Equation: covariance decay of Order Two} implies that
\begin{multline*}
|\mc A_2(-L_t^u,-\tilde L_t^v)|\leq\sum_{w_1,w_2\in\ms V}\big|\mbf{Cov}[\xi(w_1),\xi(w_2)]\big|L_t^u(w_1)\tilde L_t^v(w_2)\\
\leq\mf C(\mf D_t^{u,v}+1)^{-\be}\|L_t^u\|_{1}\|\tilde L_t^v\|_{1}\leq\mf C t^2(\mf D_t^{u,v}+1)^{-\be}.
\end{multline*}
and similarly \eqref{Equation: covariance decay of Order Three} implies that
\[|\mc A_3(-L_t^u,-\tilde L_t^v)|\leq\mf C t^3(\mf D_t^{u,v}+1)^{-\be}.\]
At this point if we take $t<\min\{1,1/4\mf m\}$, then $t^3\leq t^2$,
and thus it follows from the expansion \eqref{Equation: Covariance Expansion}
and the estimates above that
\begin{multline*}
\left|\mbf{Cov}_\xi\left[\mr e^{-\langle L_t^u,\xi\rangle},\mr e^{-\langle\tilde L_t^v,\xi\rangle}\right]\right|
\leq2\mf Ct^2(\mf D_t^{u,v}+1)^{-\be}+2\sum_{p=4}^\infty(2\mf m t)^p\\
=2\mf Ct^2(\mf D_t^{u,v}+1)^{-\be}+\frac{32\mf m^4t^4}{1-2\mf m t}
\leq2\mf Ct^2(\mf D_t^{u,v}+1)^{-\be}+64\mf m^4t^4.
\end{multline*}

\subsubsection{Proof of \eqref{Equation: Ranges Tail Bound}}
\label{Section: Covariance Decay Step 2}

Let us denote by $\mc S_t(X)$ the number of jumps that $X$ makes
in the time interval $[0,t]$. For every $x>0$ and $v\in\ms V$, it is easy to see that
\begin{align}
\label{Equation: Jumps Tail Bound 1}
\mbf P^v\left[\max_{0\leq s\leq t} \msf d\big(v,X(s)\big)\geq x\right]
\leq\mbf P^v\big[\mc S_t(X)\geq x\big].
\end{align}
For every $v\in\ms V$ and $t\geq0$,
the number of jumps $\mc S_t(X)$ is stochastically dominated by
a poisson random variable with parameter $t\mf q$.
Therefore, applying the Chernoff bound for the tails of Poisson
random variables, we obtain that
\begin{align}
\label{Equation: Tail Bound}
\sup_{v\in\ms V}\mbf P^v\left[\max_{0\leq s\leq t} \msf d\big(v,X(s)\big)\geq x\right]
\leq\sup_{v\in\ms V}\mbf P^v\big[\mc S_t(X)\geq x\big]
\leq\mr e^{-\mf q t}\left(\frac{\mf q\mr e t}{x}\right)^{x}
\end{align}
for every $x>\mf q t$.
In order to specialize this to \eqref{Equation: Ranges Tail Bound},
we use the parameter $x:=\msf d(u,v)/4$. If $t<\min\{4/\mf q,1/4\mf q\mr e\}$ and $\msf d(u,v)\geq16$,
then we have that $4\mf q\mr e t<1$ and $x>\mf qt$,
and thus it follows by a union bound that
\begin{multline*}
\mbf P\big[(E^{u,v}_t)^c\big]^{1/2}\leq\left(\mbf P^u\left[\mc S_t(X)
\geq\frac{\msf d(u,v)}{4}\right]+\mbf P^v\left[\mc S_t(X)\geq\frac{\msf d(u,v)}{4}\right]\right)^{1/2}\\
\leq \sqrt 2\mr e^{-\mf q t/2}\left(\frac{4\mf q\mr e t}{\msf d(u,v)}\right)^{\msf d(u,v)/8}\leq\frac{\sqrt 2\,\mf q^2\mr e^2 t^2}{16},
\end{multline*}
as desired.

\subsection{Proof of Lemma \ref{Lemma: Variance Upper Bound 3}}
\label{sec:PrLem3}

\begin{notation}
Throughout this proof, we use $C>0$ to denote a constant
whose exact value may change from one display to the next.
If $C>0$ depends on some other parameters, this will be explicitly
stated.
\end{notation}

\subsubsection{Step 1. General Upper Bound}

Our first step in this proof is to provide a general upper bound for $\mbf E[\mr e^{-2\langle L^u_t+\tilde L^v_t,V\rangle}]^{1/2}$
that formalizes the intuition \eqref{Equation: Lemma 3 Intuition}.
To this effect, we claim that if \eqref{Equation: V Pointwise Lower Bound} holds, then
\begin{align}
\label{Equation: Stay Same Place Argument}
-\langle L^u_t,V\rangle
\leq-\big(\ka t^{1/\al}\msf d(0,u)\big)^{\min\{\al,1\}}+\max_{0\leq s\leq t}\Big(\ka t^{1/\al}\msf d\big(u,X^u(s)\big)\Big)^{\min\{\al,1\}}-1+\mu t
\end{align}
for every $u\in\ms V$ and $t>0$, and similarly for $-\langle\tilde L^v_t,V\rangle$.
To see this, we note that
\begin{align}\label{eq:VBound}
\nonumber
-\langle L^u_t,V\rangle
&\leq-\int_0^t\Big(\ka\,\msf d\big(0,X^u(s)\big)\Big)^\al\d s+\mu t\\
\nonumber
&=-\int_0^t\Big|\ka\Big(\msf d(0,u)-\msf d(0,u)+\msf d\big(0,X^u(s)\big)\Big)\Big|^\al\d s+\mu t\\
&=-\int_0^1\Big|\ka t^{1/\al}\Big(\msf d(0,u)-\msf d(0,u)+\msf d\big(0,X^u(ut)\big)\Big)\Big|^\al\d u+\mu t,
\end{align}
where the first line follows directly from \eqref{Equation: V Pointwise Lower Bound},
and the last line follows from a change of variables. For any $x,y\in \mathbb{R}$, the triangle
inequality implies that
\[|x-y|^{\al}\geq|x-y|^{\min\{\al,1\}}-1\geq|x|^{\min\{\al,1\}}-|y|^{\min\{\al,1\}}-1.\]
Applying this to \eqref{eq:VBound} yields
\[-\langle L^u_t,V\rangle\leq-\big(\ka t^{1/\al}\msf d(0,u)\big)^{\min\{\al,1\}}+\max_{0\leq s\leq t}\Big|\ka t^{1/\al}\Big(\msf d\big(0,X^u(s)\big)-\msf d(0,u)\Big)\Big|^{\min\{\al,1\}}-1+\mu t.\]
We then obtain \eqref{Equation: Stay Same Place Argument} by combining the fact that $x\mapsto x^{\min\{\al,1\}}$ is increasing for $x>0$
with the reverse triangle inequality $\big|\msf d\big(0,X^u(s)\big)-\msf d(0,u)\big|\leq\msf d\big(u,X^u(s)\big)$.

With \eqref{Equation: Stay Same Place Argument} in hand, we see that $\mbf E[\mr e^{-2\langle L^u_t+\tilde L^v_t,V\rangle}]^{1/2}$
is bounded above by
\begin{multline}
\label{Equation: Holder in Exponential Moment}
\mr e^{2(\mu t-1)-(\ka t^{1/\al}\msf d(0,u))^{\min\{\al,1\}}-(\ka t^{1/\al}\msf d(0,v))^{\min\{\al,1\}}}\\
\cdot\mbf E\left[\exp\left(\max_{0\leq s\leq t}\Big(\ka t^{1/\al}\msf d\big(u,X^u(s)\big)\Big)^{\min\{\al,1\}}+\max_{0\leq s\leq t}\Big(\ka t^{1/\al}\msf d\big(v,\tilde X^v(s)\big)\Big)^{\min\{\al,1\}}\right)\right]^{1/2}.
\end{multline}
On the one hand, $\mr e^{2(\mu t-1)}\to\mr e^{-2}$ as $t\to0$ for any choice of $\mu>0$. On the other hand,
thanks to the tail bound \eqref{Equation: Tail Bound}, we know that for every $\theta,\ka>0$, one has
\[\limsup_{t\to0}\sup_{u\in\ms V}\mbf E\left[\exp\left(\theta\max_{0\leq s\leq t}\Big(\ka t^{1/\al}\msf d\big(u,X^u(s)\big)\Big)^{\min\{\al,1\}}\right)\right]=1,\]
and similarly for $\tilde X$. Therefore, by a straightforward application of H\"older's inequality
on the second line of \eqref{Equation: Holder in Exponential Moment}, in order to prove Lemma \ref{Lemma: Variance Upper Bound 3}, it suffices to prove that
there exists a constant $C>0$ (which only depends on $\al$, $\be$, $d$, and $\mf c$) such that
\begin{align}
\label{Equation: Variance Upper Bound 3.1 in Lemma}
&\limsup_{t\to0}t^{2d/\al}\sum_{u,v\in\ms V}
\mr e^{-(\ka t^{1/\al}\msf d(0,u))^{\min\{\al,1\}}-(\ka t^{1/\al}\msf d(0,v))^{\min\{\al,1\}}}
\leq C\ka^{-2d};\\
\label{Equation: Variance Upper Bound 3.2 in Lemma}
&\limsup_{t\to0}t^{(2d-\be)/\al}\sum_{u,v\in\ms V}
\frac{\mr e^{-(\ka t^{1/\al}\msf d(0,u))^{\min\{\al,1\}}-(\ka t^{1/\al}\msf d(0,v))^{\min\{\al,1\}}}}{\big(\msf d(u,v)+1\big)^{\be}}
\leq C\ka^{-2d+\be}
\end{align}
for every $0<\be<d$; and
\begin{align}
\label{Equation: Variance Upper Bound 3.3 in Lemma}
&\limsup_{t\to0}t^{d/\al}\sum_{u,v\in\ms V}
\frac{\mr e^{-(\ka t^{1/\al}\msf d(0,u))^{\min\{\al,1\}}-(\ka t^{1/\al}\msf d(0,v))^{\min\{\al,1\}}}}{\big(\msf d(u,v)+1\big)^{\be}}
\leq C\ka^{-d}
\end{align}
for every $\be>d$.
We now prove these claims in two steps.

\subsubsection{Step 2. Proof of \eqref{Equation: Variance Upper Bound 3.1 in Lemma}}
\label{Section: Proof of 3.1}

Recalling the definition and upper bound of $\ms G$'s coordination sequences $\msf c_n(v)$ in
\eqref{Equation: Coordination Sequence}, we have that
\begin{align}
\label{Equation: 3.1 Proof}
\nonumber
&\sum_{u,v\in\ms V}\mr e^{-(\ka t^{1/\al}\msf d(0,u))^{\min\{\al,1\}}-(\ka t^{1/\al}\msf d(0,v))^{\min\{\al,1\}}}
=\left(\sum_{v\in\ms V}\mr e^{-(\ka t^{1/\al}\msf d(0,v))^{\min\{\al,1\}}}\right)^2\\
\nonumber
&=\left(\sum_{n\in\mbb N\cup\{0\}}\msf c_n(0)\,\mr e^{-(\ka t^{1/\al}n)^{\min\{\al,1\}}}\right)^2
\leq \mf c^2\left(\sum_{n\in\mbb N\cup\{0\}}n^{d-1}\mr e^{-(\ka t^{1/\al}n)^{\min\{\al,1\}}}\right)^2\\
&=\mf c^2t^{(-2d+2)/\al}\left(\sum_{n\in t^{1/\al}\mbb N\cup\{0\}}n^{d-1}\mr e^{-(\ka n)^{\min\{\al,1\}}}\right)^2.
\end{align}
By a Riemann sum, we have that
\begin{multline}
\label{Equation: 3.1 Proof 2}
\lim_{t\to\infty}t^{2/\al}\left(\sum_{n\in t^{1/\al}\mbb N\cup\{0\}}n^{d-1}\mr e^{-(\ka n)^{\min\{\al,1\}}}\right)^2\\
=\left(\int_0^\infty x^{d-1}\mr e^{-(\ka x)^{\min\{\al,1\}}}\d x\right)^2
=\frac{\ka^{-2 d} \Gamma \left(\frac{d}{\min\{1,\al\}}\right)^2}{\min\{1,\al^2\}}.
\end{multline}
Combining this limit with \eqref{Equation: 3.1 Proof}
yields \eqref{Equation: Variance Upper Bound 3.1 in Lemma}, where, as shown
on the right-hand side of \eqref{Equation: 3.1 Proof 2}, the constant $C>0$
only depends on the parameters $\al$, $d$, and $\mf c$.

\subsubsection{Step 3. Proof of \eqref{Equation: Variance Upper Bound 3.2 in Lemma} and \eqref{Equation: Variance Upper Bound 3.3 in Lemma}}

We now conclude the proof of Lemma \ref{Lemma: Variance Upper Bound 3} by establishing
\eqref{Equation: Variance Upper Bound 3.2 in Lemma} and \eqref{Equation: Variance Upper Bound 3.3 in Lemma}.
We separate the analysis of the sum on the left-hand sides of
\eqref{Equation: Variance Upper Bound 3.2 in Lemma} and \eqref{Equation: Variance Upper Bound 3.3 in Lemma}
into two parts, namely, the terms $u,v\in\ms V$ such that $\msf d(u,v)>\ka^{-1}t^{-1/\al}$, and those such that
$\msf d(u,v)\leq\ka^{-1}t^{-1/\al}$.

We first consider the terms such that $\msf d(u,v)>\ka^{-1}t^{-1/\al}$. For these, we have the sequence of upper bounds
\begin{align*}
&\sum_{\substack{u,v\in\ms V\\\msf d(u,v)>\ka^{-1}t^{-1/\al}}}\frac{\mr e^{-(\ka t^{1/\al}\msf d(0,u))^{\min\{\al,1\}}-(\ka t^{1/\al}\msf d(0,v))^{\min\{\al,1\}}}}{
\big(\msf d(u,v)+1\big)^{\be}}\\
&\leq\sum_{\substack{u,v\in\ms V\\\msf d(u,v)>\ka^{-1}t^{-1/\al}}}\frac{\mr e^{-(\ka t^{1/\al}\msf d(0,u))^{\min\{\al,1\}}-(\ka t^{1/\al}\msf d(0,v))^{\min\{\al,1\}}}}{
\msf d(u,v)^{\be}}\\
&<\ka^\be t^{\be/\al}\sum_{\substack{u,v\in\ms V\\\msf d(u,v)>\ka^{-1}t^{-1/\al}}}\mr e^{-(\ka t^{1/\al}\msf d(0,u))^{\min\{\al,1\}}-(\ka t^{1/\al}\msf d(0,v))^{\min\{\al,1\}}}\\
&\leq\ka^\be t^{\be/\al}\left(\sum_{v\in\ms V}\mr e^{-(\ka t^{1/\al}\msf d(0,v))^{\min\{\al,1\}}}\right)^2.
\end{align*}
At this point, by replicating the arguments in Section \ref{Section: Proof of 3.1}, we get that there exists
a constant $C>0$ that only depends on $\al$, $d$, and $\mf c$, and such that
\begin{align}
\label{Equation: 3.2 Proof 1}
\limsup_{t\to0}t^{(2d-\be)/\al}\sum_{\substack{u,v\in\ms V\\\msf d(u,v)>\ka^{-1}t^{-1/\al}}}\frac{\mr e^{-(\ka t^{1/\al}\msf d(0,u))^{\min\{\al,1\}}-(\ka t^{1/\al}\msf d(0,v))^{\min\{\al,1\}}}}{
\big(\msf d(u,v)+1\big)^{\be}}\leq C\ka^{-2d+\be}
\end{align}
if $0<\be<d$; and
\begin{align}
\label{Equation: 3.3 Proof 1}
\lim_{t\to0}t^{d/\al}\sum_{\substack{u,v\in\ms V\\\msf d(u,v)>\ka^{-1}t^{-1/\al}}}\frac{\mr e^{-(\ka t^{1/\al}\msf d(0,u))^{\min\{\al,1\}}-(\ka t^{1/\al}\msf d(0,v))^{\min\{\al,1\}}}}{
\big(\msf d(u,v)+1\big)^{\be}}=0
\end{align}
if $\be>d$.

We now consider the terms such that $\msf d(u,v)\leq\ka^{-1}t^{-1/\al}$. For those terms,
we can reformulate the summands as follows:
\begin{align}
\label{Equation: 3.2 Proof 2}
&\sum_{\substack{u,v\in\ms V\\\msf d(u,v)\leq \ka^{-1}t^{-1/\al}}}\frac{\mr e^{-(\ka t^{1/\al}\msf d(0,u))^{\min\{\al,1\}}-(\ka t^{1/\al}\msf d(0,v))^{\min\{\al,1\}}}}{
\big(\msf d(u,v)+1\big)^{\be}}\\
\nonumber
&=\sum_{u\in\ms V}\mr e^{-(\ka t^{1/\al}\msf d(0,u))^{\min\{\al,1\}}}
\left(\sum_{\substack{v\in\ms V\\\msf d(u,v)\leq \ka^{-1}t^{-1/\al}}}\frac{\mr e^{-(\ka t^{1/\al}\msf d(0,v))^{\min\{\al,1\}}}}{
\big(\msf d(u,v)+1\big)^{\be}}\right)\\
\nonumber
&=\sum_{u\in\ms V}\mr e^{-(\ka t^{1/\al}\msf d(0,u))^{\min\{\al,1\}}}
\left(\sum_{\substack{v\in\ms V\\\msf d(u,v)\leq \ka^{-1}t^{-1/\al}}}\frac{\mr e^{-(\ka t^{1/\al}(\msf d(u,v)+\msf d(0,v)-\msf d(u,v)))^{\min\{\al,1\}}}}{
\big(\msf d(u,v)+1\big)^{\be}}\right).
\end{align}
For every every $u,v\in\ms V$ such that $\msf d(u,v)\leq \ka^{-1}t^{-1/\al}$, the fact that $\msf d(0,v)\geq0$ gives the upper bound
$\mr e^{-(\ka t^{1/\al}(\msf d(0,v)-\msf d(u,v)))^{\min\{\al,1\}}}\leq\mr e$. Putting this into the above equation, we then obtain that
\begin{align*}
\eqref{Equation: 3.2 Proof 2}&\leq\mr e\sum_{u\in\ms V}\mr e^{-(\ka t^{1/\al}\msf d(0,u))^{\min\{\al,1\}}}
\left(\sum_{\substack{v\in\ms V\\\msf d(u,v)\leq \ka^{-1}t^{-1/\al}}}\frac{\mr e^{-(\ka t^{1/\al}\msf d(u,v))^{\min\{\al,1\}}}}{
\big(\msf d(u,v)+1\big)^{\be}}\right)\\
&\leq\mr e\sum_{u\in\ms V}\mr e^{-(\ka t^{1/\al}\msf d(0,u))^{\min\{\al,1\}}}
\left(\sum_{n=0}^{\ka^{-1}t^{-1/\al}}\frac{\msf c_n(u)\,\mr e^{-(\ka t^{1/\al}n)^{\min\{\al,1\}}}}{
\big(n+1\big)^{\be}}\right).
\end{align*}
Thanks to the uniform bound in \eqref{Equation: Coordination Sequence},
we then have that
\begin{align}
\label{Equation: 3.2 Proof 3}
\nonumber
\eqref{Equation: 3.2 Proof 2}&\leq\mr e \mf c\,\left(\sum_{u\in\ms V}\mr e^{-(\ka t^{1/\al}\msf d(0,u))^{\min\{\al,1\}}}\right)
\left(\sum_{n=0}^{\ka^{-1}t^{-1/\al}}\frac{n^{d-1}\mr e^{-(\ka t^{1/\al}n)^{\min\{\al,1\}}}}{
\big(n+1\big)^{\be}}\right)\\
\nonumber
&\leq \mr e^{1+(\ka t^{1/\al})^{\min\{\al,1\}}} \mf c\left(\sum_{u\in\ms V}\mr e^{-(\ka t^{1/\al}\msf d(0,u))^{\min\{\al,1\}}}\right)\\
&\nonumber\hspace{1.5in}
\cdot\left(\sum_{n\in\mbb N\cup\{0\}}(n+1)^{d-1-\be}\mr e^{-(\ka t^{1/\al}(n+1))^{\min\{\al,1\}}}\right)\\
&=\mr e^{1+o(1)} \mf c\,\left(\sum_{u\in\ms V}\mr e^{-(\ka t^{1/\al}\msf d(0,u))^{\min\{\al,1\}}}\right)
\left(\sum_{n\in \mbb N}n^{d-1-\be}\mr e^{-(\ka t^{1/\al} n)^{\min\{\al,1\}}}\right).
\end{align}

We now analyze the two sums on the right-hand side of \eqref{Equation: 3.2 Proof 3}.
Looking at the first term, the same analysis carried out in Section \ref{Section: Proof of 3.1} implies that
\begin{align*}
\label{Equation: 3.x Proof 3}
\limsup_{t\to0}t^{d/\al}\sum_{u\in\ms V}\mr e^{-(\ka t^{1/\al}\msf d(0,u))^{\min\{\al,1\}}}\leq C\ka^{-d}
\end{align*}
for some $C$ that only depends on $\al$, $d$, and $\mf c$.
Next, the second sum in \eqref{Equation: 3.2 Proof 3} is analyzed differently depending on whether $0<\be<d$ or $\be>d$:
On the one hand, if $\be<d$, then by a Riemann sum we have that
\begin{multline*}
\lim_{t\to0}t^{(d-\be)/\al}\sum_{n\in \mbb N}n^{d-1-\be}\mr e^{-(\ka t^{1/\al} n)^{\min\{\al,1\}}}
=\lim_{t\to0}t^{1/\al}\sum_{n\in t^{1/\al}\mbb N}n^{d-1-\be}\mr e^{-(\ka n)^{\min\{\al,1\}}}\\
=\int_0^\infty x^{d-1-\be}\mr e^{-(\ka x)^{\min\{\al,1\}}}\d x
=\frac{\kappa ^{-d+\be} \Gamma \left(\frac{d-\beta }{\min\{\al,1\}}\right)}{\min\{\al,1\}}.
\end{multline*}
On the other hand, if $\be>d$, then we have by dominated convergence that
\[\lim_{t\to0}\sum_{n\in \mbb N}n^{d-1-\be}\mr e^{-(\ka t^{1/\al} n)^{\min\{\al,1\}}}
=\sum_{n\in \mbb N}n^{d-1-\be};\]
we know that the sum on the right-hand side is convergent since $\be>d$.

Putting these two limits back into \eqref{Equation: 3.2 Proof 3}, we then get that there exists a constant $C>0$
(which only depends on $\al$, $d$, $\be$, and $\mf c$) such that
\[\limsup_{t\to0}t^{(2d-\be)/\al}\sum_{\substack{u,v\in\ms V\\\msf d(u,v)\leq \ka^{-1}t^{-1/\al}}}\frac{\mr e^{-(\ka t^{1/\al}\msf d(0,u))^{\min\{\al,1\}}-(\ka t^{1/\al}\msf d(0,v))^{\min\{\al,1\}}}}{
\big(\msf d(u,v)+1\big)^{\be}}\leq C\ka^{-2d+\be}\]
when $\be<d$, and such that
\[\limsup_{t\to0}t^{d/\al}\sum_{\substack{u,v\in\ms V\\\msf d(u,v)\leq \ka^{-1}t^{-1/\al}}}\frac{\mr e^{-(\ka t^{1/\al}\msf d(0,u))^{\min\{\al,1\}}-(\ka t^{1/\al}\msf d(0,v))^{\min\{\al,1\}}}}{
\big(\msf d(u,v)+1\big)^{\be}}\leq C\ka^{-d}\]
when $\be>d$.
Combining this with \eqref{Equation: 3.2 Proof 1} and \eqref{Equation: 3.3 Proof 1} concludes the proof of \eqref{Equation: Variance Upper Bound 3.2 in Lemma}
and \eqref{Equation: Variance Upper Bound 3.3 in Lemma}.
With this in hand, we have now completed the proof of Lemma \ref{Lemma: Variance Upper Bound 3}.

\section{Spectral Mapping and Multiplicity}
\label{sec: Multiplicity}

A crucial aspect of the proof of Theorem \ref{Theorem: Rigidity}
is the ability to relate exponential linear statistics of the eigenvalue point process
\eqref{Equation: Eigenvalue Point Process} to the trace of $K_t$ via the identities
\begin{align}
\label{Equation: Trace Identity}
\mr{Tr}[K_t]=\sum_{\mu\in\si(K_t)\setminus\{0\}}m_a(\mu,K_t)\,\mu=\sum_{\la\in\si(H)}m_a(\la,H)\,\mr e^{-t\la}\in(0,\infty).
\end{align}
Though we expect that such a result is known (or at least folklore) in the operator theory
community, we were not able to locate any reference that contains all of the precise statements
that we need to prove \eqref{Equation: Trace Identity}.
(This is especially so since the level of generality in this paper allows for non-self-adjoint
operators.)
As such, our purpose in this section
is to provide a general criterion for an identity of the form \eqref{Equation: Trace Identity}
to hold (as well as a few more properties), which we then use in Section \ref{Section: Rigidity}
to wrap up the proof of Theorem \ref{Theorem: Rigidity}.

We begin this section with a definition:

\begin{definition}
\label{Definition: Finite Dimensional}
We say that a linear operator $T$ on $\ell^2_\ms Z(\ms V)$
is finite-dimensional if there exists a finite set $\ms U\subset\ms V$
such that $T(u,v)=0$ whenever $(u,v)\not\in\ms U\times\ms U$.
In particular, if we enumerate the set $\ms U=\{u_1,\ldots,u_{|\ms U|}\}$, then $T$
has the same spectrum as the $|\ms U|\times |\ms U|$ matrix $M_T$ with entries
\begin{align}
\label{Equation: Matrix Representation}
M_T(i,j):=T(u_i,u_j),\qquad 1\leq i,j\leq |\ms U|.
\end{align}
\end{definition}

The result that we prove in this section is as follows:

\begin{proposition}
\label{Proposition: Operator Theory}
Let $(T_t)_{t>0}$ be a strongly continuous semigroup of trace class operators on $\ell^2_\ms Z(\ms V)$
such that $\|T_t\|_{\mr{op}}\leq\mr e^{-\om t}$ for some $\om<0$, and
let $G$ be its infinitesimal generator.
The following holds:
\begin{enumerate}
\item $G$ is closed and densely defined on $\ell^2_\ms Z(\ms V)$.
\item $\si(G)=\si_p(G)$, and $\Re(\la)\geq\om$ for all $\la\in\si(G)$.
\item For every $t>0$, $\si(T_t)\setminus\{0\}=\{\mr e^{-t\la}:\la\in\si(G)\}$.
\end{enumerate}
Moreover, if there exists a sequence of finite-dimensional
operators $(G_n)_{n\in\mbb N}$ such that
\begin{align}
\label{Equation: Resolvent Convergence Assumption}
\lim_{n\to\infty}\|\mf R(z,G_n)-\mf R(z,G)\|_{\mr{op}}=0
\end{align}
for at least one $z\in\mbb C\setminus\si(G)$ and such that
\begin{align}
\label{Equation: Semigroup Convergence Assumption}
\lim_{n\to\infty}\|\mr e^{-t G_n}-T_t\|_{\mr{op}}=0,
\end{align}
then for every $t>0$ and $\mu\in\si(T_t)\setminus\{0\}$,
\begin{align}
\label{Equation: Multiplicity Identity}
m_a(\mu,T_t)=\sum_{\la\in\si(G):~\mr e^{-t\la}=\mu}m_a(\la,G).
\end{align}
\end{proposition}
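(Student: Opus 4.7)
The plan is to derive (1)--(3) from standard results on strongly continuous semigroups of compact operators, and then transfer the multiplicity identity \eqref{Equation: Multiplicity Identity} from the finite-dimensional approximants $G_n$ using continuity of Riesz projections.

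For (1), I would invoke the classical fact that the generator of any $C_0$-semigroup is closed and densely defined. Part (3) is the spectral mapping theorem for eventually compact semigroups: since each $T_t$ is trace class, hence compact, the identity $\si(T_t)\setminus\{0\}=\{\mr e^{-t\la}:\la\in\si(G)\}$ is classical (see, e.g., Engel--Nagel, Chapter IV). Part (2) then follows: compactness of $T_t$ makes $\si(T_t)\setminus\{0\}$ a set of finite-multiplicity eigenvalues accumulating only at $0$, so by (3) the set $\si(G)$ is discrete in every vertical strip and consists of eigenvalues of finite algebraic multiplicity; the bound $|\mr e^{-t\la}|\leq\|T_t\|_{\mr{op}}\leq\mr e^{-\om t}$ forces $\Re(\la)\geq\om$.

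For \eqref{Equation: Multiplicity Identity}, I would first record the finite-dimensional analog: by Jordan decomposition of $M_{G_n}$ together with the functional calculus of matrix exponentials, each generalized eigenspace of $G_n$ at $\la$ embeds into the generalized eigenspace of $\mr e^{-tG_n}$ at $\mr e^{-t\la}$, and these aggregate to the entire generalized eigenspace of $\mr e^{-tG_n}$ at $\mu$ as $\la$ ranges over $\{\la\in\si(G_n):\mr e^{-t\la}=\mu\}$. To pass to the limit, fix $\mu\in\si(T_t)\setminus\{0\}$, set $\La:=\{\la\in\si(G):\mr e^{-t\la}=\mu\}$ (which is finite by (2)), and choose a Jordan curve $\Ga_\mu$ around $\mu$ isolating it from the rest of $\si(T_t)$ and a Jordan curve $\Ga_\La$ enclosing exactly $\La$ and excluding the rest of $\si(G)$. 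The two sides of \eqref{Equation: Multiplicity Identity} are the ranks of the Riesz projections
\[P_{\mu}(T_t):=\frac{1}{2\pi\mr i}\oint_{\Ga_\mu}\mf R(z,T_t)\d z,\qquad Q_{\La}(G):=\frac{1}{2\pi\mr i}\oint_{\Ga_\La}\mf R(z,G)\d z,\]
respectively, with analogous projections for $G_n$ and $\mr e^{-tG_n}$. Once operator-norm convergence of these projections along the approximants is established, the fact that operator-norm limits of finite-rank projections eventually have the same rank as the limit will identify the ranks and yield \eqref{Equation: Multiplicity Identity}.

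The main obstacle is thus verifying operator-norm convergence of the contour integrals, and particularly so for $\Ga_\La$, because \eqref{Equation: Resolvent Convergence Assumption} only provides resolvent norm convergence at a single point $z_0$. For $\Ga_\mu$ the task is easier: \eqref{Equation: Semigroup Convergence Assumption} gives $\|\mr e^{-tG_n}-T_t\|_{\mr{op}}\to 0$, from which uniform resolvent norm convergence on compact subsets of the resolvent set of $T_t$ is immediate via the second resolvent identity. For $\Ga_\La$, I would extend \eqref{Equation: Resolvent Convergence Assumption} to uniform convergence on compact subsets of $\mbb C\setminus\si(G)$ by combining the geometric series expansion $\mf R(z,G_n)=\sum_{k\geq 0}(z-z_0)^k\mf R(z_0,G_n)^{k+1}$, which converges uniformly in $n$ on disks of radius strictly less than $1/\sup_n\|\mf R(z_0,G_n)\|_{\mr{op}}$ (the supremum being finite by \eqref{Equation: Resolvent Convergence Assumption}), with a connectedness argument chaining overlapping disks in the spirit of Kato, \emph{Perturbation Theory}, Thm.~IV.2.25. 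This delivers uniform resolvent norm convergence on $\Ga_\La$, hence Riesz projection convergence, completing the proof.
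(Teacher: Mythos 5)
Your plan is essentially the paper's own approach: reduce to a finite\hyp dimensional statement via Jordan form, and then transfer it to the limit by spectral continuity (Kato's generalized convergence is itself proved via the Riesz\hyp projection/geometric\hyp series chaining argument you describe). The outline of parts (1)--(3) matches the paper's Step~1, and Lemma~\ref{Lemma: Spectral Mapping in Finite Dimensions} is exactly your Jordan\hyp block computation. But there are two genuine gaps in the passage to the limit.

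First, you set $\La:=\{\la\in\si(G):\mr e^{-t\la}=\mu\}$ and assert in passing that $\La$ is finite ``by (2).'' Item (2) only gives $\si(G)=\si_p(G)$ and $\Re(\la)\geq\om$, and the compactness argument you sketch gives discreteness of $\si(G)$ (no finite accumulation points); neither of these rules out the possibility that $\La$ consists of infinitely many points of the form $\la_0+2\pi\mr i k/t$, all on a single unbounded vertical line. This is not a cosmetic omission: you need $\La$ finite in order to even choose the contour $\Ga_\La$, and the paper devotes a separate argument to it (choosing $\eps$ small, using \eqref{Equation: Multiplicities Convergence 1}--\eqref{Equation: Multiplicities Convergence 2} together with the prelimit identity \eqref{Equation: Algebraic Identity Prelimit} to show that $|\La|=\infty$ would force $m_a(\mu,T_t)=\infty$, contradicting that $T_t$ is trace class).

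Second, the final rank identification hides a directionality issue. The finite\hyp dimensional identity \eqref{Equation: Algebraic Identity Prelimit} rewrites $\mr{rank}\,P_\mu(\mr e^{-tG_n})$ as a sum of $m_a(\tilde\la,G_n)$ over \emph{all} $\tilde\la\in\si(G_n)$ with $\mr e^{-t\tilde\la}$ inside $\Ga_\mu$, whereas $\mr{rank}\,Q_\La(G_n)$ only counts those $\tilde\la$ inside $\Ga_\La$. The inclusion ``$\tilde\la$ near $\La$ $\Rightarrow$ $\mr e^{-t\tilde\la}$ near $\mu$'' can be forced by shrinking $\Ga_\La$, but the converse is not automatic: a priori $G_n$ could have eigenvalues far from $\La$ (say with large imaginary part) that nonetheless map near $\mu$ under $z\mapsto\mr e^{-tz}$, so a single contour comparison only yields one inequality. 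The paper's proof addresses this by establishing the two inequalities separately, choosing $\eps$ and $\de$ in a specific order so that the openness of the exponential map gives $B(\mu,\de)\subset\mr e^{-t}\big(\cup_i B(\la_i,\eps)\big)$ for the reverse direction. Your outline would need to reproduce this two\hyp sided argument; the single projection comparison as stated does not close the loop.
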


As a direct consequence of the above proposition, we have that
\[\mr{Tr}[T_t]=\sum_{\mu\in\si(T_t)\setminus\{0\}}m_a(\mu,T_t)\,\mu=\sum_{\la\in\si(G)}m_a(\la,G)\,\mr e^{-t\la}\in\mbb C\]
for all $t>0$, which is precisely the kind of statement that we are looking for.
The remainder of this section is now devoted to the proof of Proposition \ref{Proposition: Operator Theory}.

\subsection{Step 1. Closed Generator and Spectral Mapping}

We begin with the more straightforward aspects of the statement
of Proposition \ref{Proposition: Operator Theory}, namely, items (1)--(3).
Since $(T_t)_{t>0}$ is strongly continuous and $\|T_t\|_{\mr{op}}\leq\mr e^{-\om t}$,
it follows from the Hille-Yosida theorem (e.g., \cite[Chapter II, Corollary 3.6]{EngelNagel})
that $G$ is closed and densely defined on $\ell^2_\ms Z(\ms V)$.
Moreover, $\Re(\la)\geq\om$ for every $\la\in\si(G)$.
Given that the $T_t$ are trace class, we know that $\si(T_t)=\si_p(T_t)$ and that
\[\mr{Tr}[T_t]=\sum_{\mu\in\si(T_t)\setminus\{0\}}m_a(\mu,T_t)\mu\in\mbb C\]
by Lidskii's theorem
(e.g., \cite[Sections 3.6 and 3.12]{Simon}). Next,
by the spectral mapping theorem (e.g., \cite[Chapter IV, (3.7) and (3.16)]{EngelNagel}),
we know that for every $t>0$,
\begin{align}
\label{Equation: Spectral Mapping}
\big\{\mr e^{-t\la}:\la\in\si(G)\big\}\subset\si(T_t)
\qquad\text{and}\qquad
\big\{\mr e^{-t\la}:\la\in\si_p(G)\big\}=\si_p(T_t)\setminus\{0\}.
\end{align}
In particular, $\si(G)=\si_p(G)$, concluding the proof of 
Proposition \ref{Proposition: Operator Theory} (1)--(3).

\subsection{Step 2. Multiplicities in Finite Dimensions}

It now remains to prove \eqref{Equation: Multiplicity Identity}.
Before we prove this result, we first prove the corresponding statement
in finite dimensions, namely:

\begin{lemma}
\label{Lemma: Spectral Mapping in Finite Dimensions}
Let $T$ be a finite-dimensional linear operator on $\ell^2_\ms Z(\ms V)$
and $F:\mbb C\to\mbb C$ be an analytic function.
For every $\mu\in\si\big(F(T)\big)=F\big(\si(T)\big)$, one has
\[m_a\big(\mu,F(T)\big)=\sum_{\la\in\si(T):~F(\la)=\mu}m_a(\la,T).\]
\end{lemma}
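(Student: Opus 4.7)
The plan is to reduce the statement to an identity about the finite matrix $M_T$ and then to verify it via Jordan canonical form. By Definition~\ref{Definition: Finite Dimensional}, $T$ is block diagonal with respect to the decomposition $\ell^2(\ms V)=\ell^2(\ms U)\oplus\ell^2(\ms V\setminus\ms U)$, acting as $M_T\oplus 0$. Every power $T^k$ with $k\geq1$ preserves this decomposition, so inserting the Taylor series of $F$ at the origin yields
\[
F(T)=F(M_T)\oplus F(0)\,I,
\]
with $F(M_T)$ understood in the matrix holomorphic functional calculus. Consequently $\sigma(F(T))=\sigma(F(M_T))\cup\{F(0)\}$ and, for every $\mu\neq F(0)$, the Riesz projection of $F(T)$ at $\mu$ vanishes on the second summand, so $m_a(\mu,F(T))=m_a(\mu,F(M_T))$. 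Likewise $\sigma(T)=\sigma(M_T)\cup\{0\}$ with $m_a(\lambda,T)=m_a(\lambda,M_T)$ for $\lambda\neq0$, while $m_a(0,T)=\infty$ because the second summand contributes an infinite-dimensional eigenspace.

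For the matrix identity, I would write $M_T=SJS^{-1}$ in Jordan canonical form with $J=\bigoplus_i J_{\lambda_i,k_i}$. Since the holomorphic functional calculus commutes with similarity and direct sums, $F(M_T)=S\bigl(\bigoplus_i F(J_{\lambda_i,k_i})\bigr)S^{-1}$, and the Taylor expansion of $F$ at $\lambda_i$ shows that each block $F(J_{\lambda_i,k_i})$ is upper triangular with $F(\lambda_i)$ on the diagonal. Such a block therefore contributes exactly $k_i$ to the algebraic multiplicity of $F(\lambda_i)$ in $F(M_T)$. Combining this with the standard identification of the Riesz-projection definition of $m_a$ in Notation~\ref{def:Multiplicities} with the total size of Jordan blocks, summation over the index $i$ gives
\[
m_a\bigl(\mu,F(M_T)\bigr)=\sum_{\lambda\in\sigma(M_T):\,F(\lambda)=\mu}m_a(\lambda,M_T).
\]

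Finally, I would splice the two steps together. For $\mu\neq F(0)$, the previous two displays yield the claim, once one observes that the eigenvalue $0\in\sigma(T)$ (if absent from $\sigma(M_T)$) satisfies $F(0)\neq\mu$ and hence does not enter the right-hand sum. For $\mu=F(0)$, both sides are $\infty$: on the left because $\ell^2(\ms V\setminus\ms U)$ sits inside the $F(0)$-eigenspace of $F(T)$, and on the right because $\lambda=0$ always lies in $\sigma(T)$ with $m_a(0,T)=\infty$ and satisfies $F(0)=\mu$. I expect the only subtle point to be this final bookkeeping around the eigenvalue $0$, since the infinite-rank ambient space forces it into $\sigma(T)$ with infinite multiplicity regardless of whether it appears in $\sigma(M_T)$; the rest is an elementary Jordan-form calculation.
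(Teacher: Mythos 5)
Your argument is correct and rests on the same core computation as the paper: passing to the Jordan canonical form of $M_T$ and observing that an analytic $F$ turns a $k\times k$ Jordan block at $\la$ into an upper-triangular block with $F(\la)$ repeated $k$ times on the diagonal, so that algebraic multiplicities are carried through as claimed. Where you go beyond the paper is in the reduction from the infinite-dimensional operator $T$ on $\ell^2(\ms V)$ to the matrix $M_T$: the paper dispatches this with a single sentence (``it suffices to prove the result with $T$ replaced by $M_T$''), whereas you explicitly decompose $T=M_T\oplus 0$, note $F(T)=F(M_T)\oplus F(0)I$, and verify that for $\mu\ne F(0)$ the Riesz projections on both sides vanish on $\ell^2(\ms V\setminus\ms U)$, while for $\mu=F(0)$ both sides are infinite because $0\in\si(T)$ with $m_a(0,T)=\infty$. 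This is a genuine improvement in rigor: the paper's definitional remark that ``$T$ has the same spectrum as $M_T$'' is itself slightly imprecise, since $\si(T)=\si(M_T)\cup\{0\}$, and your bookkeeping around the eigenvalue $0$ is exactly what makes the stated identity literally true (if degenerately so) in the $\mu=F(0)$ case.
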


Applying this to the exponential map and the operators $G_n$, we are led to
the fact that for every $n\in\mbb N$, $t>0$, and $\mu\in\si(\mr e^{-tG_n})$ one has
\begin{align}
\label{Equation: Algebraic Identity Prelimit}
m_a(\mu,\mr e^{-t G_n})=\sum_{\la\in\si(G_n):~\mr e^{-t\la}=\mu}m_a(\la,G_n).
\end{align}

\begin{proof}[Proof of Lemma \ref{Lemma: Spectral Mapping in Finite Dimensions}]
It suffices to prove the result with $T$ replaced by $M_T$
and $F(T)$ replaced by $F(M_T)$, where $M_T$ is the
matrix defined in \eqref{Equation: Matrix Representation}.
Let $M_T=PJP^{-1}$ be $M_T$'s Jordan canonical form. That is,
$J$ is the direct sum of $M_T$'s Jordan blocks, and in particular the number of times
any $\la\in\mbb C$ appears on $J$'s diagonal is equal to
$m_a(\la,M_T)$. By the standard analytic functional calculus for matrices,
we know that $F(M_T)=PF(J)P^{-1}$, where $F(J)$ is the direct sum
of $M_T$'s transformed Jordan blocks, wherein any $k\times k$ Jordan block
of the form
\[\left[\begin{array}{ccccc}\la&1\\
&\la&1\\
&&\ddots&\ddots\\
&&&\la&1\end{array}\right]\]
is transformed into the upper triangular matrix
\[\left[\begin{array}{ccccc}F(\la)&F'(\la)&F''(\la)/2&\cdots&F^{(k-1)}(\la)/(k-1)!\\
&F(\la)&F'(\la)&\cdots&F^{(k-2)}(\la)/(k-2)!\\
&&\ddots&\ddots&\vdots\\
&&&\ddots&F'(\la)\\
&&&&F(\la)\end{array}\right].\]
Given that the characteristic polynomial of $F(M_T)$ is the same as that of $F(J)$,
this readily implies the result.
\end{proof}

\subsection{Step 3. Passing to the Limit}

We now complete the proof of Proposition \ref{Proposition: Operator Theory}
by arguing that the identity \eqref{Equation: Algebraic Identity Prelimit} persists
in the large $n$ limit.
Thanks to \eqref{Equation: Resolvent Convergence Assumption}
and \eqref{Equation: Semigroup Convergence Assumption}, we know that we have the convergences
$G_n\to G$ and $\mr e^{-t G_n}\to T_t$ for every $t>0$ in the generalized sense of Kato
(see \cite[Chapter IV, (2.9), (2.20) and p. 206]{Kato} for a definition of convergence in the
generalized sense, and \cite[Chapter IV, Theorems 2.23 a) and 2.25]{Kato} for a proof
that norm-resolvent and norm convergence implies convergence in the generalized sense).
As shown in \cite[Chapter IV, Theorem 3.16]{Kato} (see also \cite[Chapter IV, Section 5]{Kato}
for a discussion specific to the context of isolated eigenvalues),
convergence in the generalized sense implies the following spectral continuity results:

\begin{notation}
In what follows, we use $B(z,r)$ to denote the closed ball in the complex plane centered at $z\in\mbb Z$
and with raduis $r>0$.
\end{notation}

\begin{corollary}
For every $\la\in\si(G)$, if $\eps>0$ is such that $\si(G)\cap B(\la,\eps)=\{\la\}$,
then there exists $N\in\mbb N$ large enough so that
\begin{align}
\label{Equation: Multiplicities Convergence 1}
\sum_{\tilde\la\in\si(G_n)\cap B(\la,\eps)}m_a(\tilde\la,G_n)=m_a(\la,G)
\end{align}
whenever $n\geq N$.

Conversely, for every $t>0$ and $\mu\in\si(T_t)\setminus\{0\}$, if $\eps>0$ is such that $\si(T_t)\cap B(\mu,\eps)=\{\mu\}$,
then there exists $N\in\mbb N$ large enough so that
\begin{align}
\label{Equation: Multiplicities Convergence 2}
\sum_{\tilde\mu\in(\mr e^{-t G_n})\cap B(\mu,\eps)}m_a(\tilde\mu,\mr e^{-t G_n})=m_a(\mu,T_t)
\end{align}
whenever $n\geq N$.
\end{corollary}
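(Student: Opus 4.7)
The strategy is to prove both statements via Riesz spectral projections and the stability of their ranks under norm convergence. Fix an isolated eigenvalue $\la\in\si(G)$ and an $\eps>0$ with $\si(G)\cap B(\la,\eps)=\{\la\}$. Let $\Ga_\la:=\partial B(\la,\eps)$, so $\Ga_\la\subset\mbb C\setminus\si(G)$. The Riesz projection
\[P_\la:=\frac{1}{2\pi\mr i}\oint_{\Ga_\la}\mf R(z,G)\d z\]
is bounded and idempotent, and by Notation~\ref{def:Multiplicities} its range has dimension exactly $m_a(\la,G)$.

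Since $G_n\to G$ in the generalized sense of Kato (by \cite[Chapter IV, Theorem 2.25]{Kato} applied to the assumption \eqref{Equation: Resolvent Convergence Assumption}), the results of \cite[Chapter IV, Theorem 3.16 and Section 5]{Kato} apply: for all $n$ sufficiently large, $\Ga_\la\subset\mbb C\setminus\si(G_n)$, the resolvents $\mf R(z,G_n)$ are uniformly bounded on $\Ga_\la$, and
\[P^n_\la:=\frac{1}{2\pi\mr i}\oint_{\Ga_\la}\mf R(z,G_n)\d z\;\longrightarrow\;P_\la\qquad\text{in operator norm as }n\to\infty.\]
Each $P^n_\la$ is itself a projection onto the sum of the generalized eigenspaces of $G_n$ corresponding to eigenvalues lying inside $\Ga_\la$; since $G_n$ is finite-dimensional, its rank equals
\[\sum_{\tilde\la\in\si(G_n)\cap B(\la,\eps)}m_a(\tilde\la,G_n).\]

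The proof is completed by invoking the standard fact that the rank is continuous at idempotents under norm convergence: if $P^n_\la$ and $P_\la$ are bounded projections with $\|P^n_\la-P_\la\|_{\mr{op}}<1$, then $\dim\mr{rg}(P^n_\la)=\dim\mr{rg}(P_\la)$ (see, e.g., \cite[Chapter I, Lemma 4.10]{Kato}). Thus for all $n$ sufficiently large we obtain the equality \eqref{Equation: Multiplicities Convergence 1}. The second assertion \eqref{Equation: Multiplicities Convergence 2} is proved by exactly the same argument, replacing $G$ by $T_t$ and $G_n$ by $\mr e^{-tG_n}$: the norm convergence in \eqref{Equation: Semigroup Convergence Assumption} implies convergence in the generalized sense, which yields norm convergence of the associated Riesz projections on any small circle $\partial B(\mu,\eps)$ around an isolated nonzero $\mu\in\si(T_t)$.

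The main (minor) technical point to verify is that Kato's theorems, stated for closed operators on Banach spaces, apply verbatim to our setting; both $G$ (which is closed and densely defined by Proposition~\ref{Proposition: Operator Theory}(1)) and the bounded operator $T_t$ fit this framework, so no additional work is required. The only subtlety is that on the $G_n$ side there may be several eigenvalues clustering inside $B(\la,\eps)$ for small $n$ — this is precisely why the statement is phrased as a sum of multiplicities rather than an equality of individual multiplicities, and why the contour-integral approach (which automatically collects all such eigenvalues) is the natural tool.
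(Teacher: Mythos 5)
Your proof is correct and follows essentially the same route as the paper: both rest on Kato's spectral stability results for operators converging in the generalized sense (in fact the paper cites precisely \cite[Chapter IV, Theorem 3.16]{Kato} and treats the corollary as an immediate consequence, whereas you unpack the underlying Riesz-projection-and-rank-stability argument that Kato's theorem encapsulates).
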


We are now ready to prove \eqref{Equation: Matrix Representation}.
We first show that for every $t>0$ and $\mu\in\si(T_t)\setminus\{0\}$, the set $\{\la\in\si(G):~\mr e^{-t\la}=\mu\}$
is finite. Suppose by contradiction that this is not the case. Then, for any integer $M>0$, we can
find at least $M$ distinct eigenvalues $\la_1,\ldots,\la_M\in\si(G)$ such that $\mr e^{-t\la_i}=\mu$.
By taking a small enough $\eps>0$ and large enough $N\in\mbb N$, a combination of
\eqref{Equation: Algebraic Identity Prelimit} and \eqref{Equation: Multiplicities Convergence 2} yields
\begin{align}
\label{Equation: Algebraic Identity Prelimit Finite 1}
m_a(\mu,T_t)=\sum_{\tilde\mu\in\si(\mr e^{-t G_N})\cap B(\mu,\eps)}m_a(\tilde\mu,\mr e^{-tG_N})
=\sum_{\tilde\la\in\si(G_N):~\mr e^{-t\tilde\la}\in B(\mu,\eps)}m_a(\tilde\la,G_N).
\end{align}
Since $z\mapsto\mr e^{-t z}$ is continuous, we can take $\de>0$ small enough so that
\begin{enumerate}
\item if $\tilde\la\in B(\la_i,\de)$ for some $1\leq i\leq M$, then $\mr e^{-t\tilde\la}\in B(\mu,\eps)$; and
\item $\si(G)\cap B(\la_i,\de)=\{\la_i\}$ for every $1\leq i\leq M$.
\end{enumerate}
Thus, up to increasing the value of $N$ if necessary, an application of \eqref{Equation: Multiplicities Convergence 1}
to the right-hand side of \eqref{Equation: Algebraic Identity Prelimit Finite 1} then gives
\begin{align}
\label{Equation: Algebraic Identity Prelimit Finite 2}
m_a(\mu,T_t)\geq\sum_{i=1}^{M}\sum_{\tilde\la\in\si(G_N)\cap B(\la_i,\de)}m_a(\tilde\la,G_N)
=\sum_{i=1}^Mm_a(\la_i,G)\geq M.
\end{align}
Since $M$ was arbitrary, this implies that $m_a(\mu,T_t)=\infty$. Since $T_t$ is trace class
this cannot be the case, hence
we conclude that $\{\la\in\si(G):~\mr e^{-t\la}=\mu\}$ is finite.

By repeating the argument leading up to \eqref{Equation: Algebraic Identity Prelimit Finite 2},
but this time letting $M$ be equal to the number of eigenvalues in the set
$\{\la\in\si(G):~\mr e^{-t\la}=\mu\}$, we obtain that
\[m_a(\mu,T_t)\geq\sum_{\la\in\si(G):~\mr e^{-t\la}=\mu}m_a(\la,G).\]
We now proceed to prove the reverse inequality. Recall that $\{\la\in\si(G):~\mr e^{-t\la}=\mu\}$ contains finitely many elements. Denote them by $\lambda_1, \ldots , \lambda_M$ for some $M\in \mathbb{N}$.
Thanks to \eqref{Equation: Multiplicities Convergence 1}, we can find
a small enough $\eps>0$ and large enough $N\in\mbb N$ such that
\begin{align*}
 \sum_{i=1}^{M} m_{a}(\lambda_i, G) &=\sum_{\tilde{\lambda} \in \cup^{M}_{i=1}\sigma(G_N)\cap B(\lambda_i, \eps)}m_{a}(\tilde{\lambda}, G_N) = \sum_{\tilde{\lambda} \in \sigma(G_N)\cap \big(\cup^{M}_{i=1} B(\lambda_i, \eps)\big)}m_{a}(\tilde{\lambda}, G_N).
\end{align*}
Then, by \eqref{Equation: Algebraic Identity Prelimit}, one has
\begin{align}
\sum_{\tilde{\lambda} \in \sigma(G_N)\cap \big(\cup^{M}_{i=1} B(\lambda_i, \eps)\big)}m_{a}(\tilde{\lambda}, G_N) = \sum_{\substack{\tilde{\mu}\in \sigma(e^{-tG_N})\\\tilde{\mu}\in e^{-t}(\cup_{i=1}^{M}B(\lambda_i,\eps))}}m(\tilde{\mu}, e^{-tG_N}),\label{eq:RevIneq2}
\end{align}
where we use $\mr e^{-t}(B)$ to denote the image of a set $B\subset\mbb C$ through the exponential map $z\mapsto\mr e^{-tz}$.
Since the exponential map is open and $\mr e^{-t\la_i}=\mu$ for all $1\leq i\leq M$, we can find a small enough $\de>0$
such that $B(\mu,\delta)\subset e^{-t}(\cup_{i=1}^{M}B(\lambda_i,\eps))$ and $\si(T_t)\cap B(\mu,\de)=\{\mu\}$.
As a result we get 
\begin{align}
 \sum_{i=1}^{M} m_{a}(\lambda_i, G)\geq\text{r.h.s. of \eqref{eq:RevIneq2}}\geq \sum_{\tilde{\mu}\in \sigma(e^{-tG_N})\cap B(\mu,\delta)} m_a(\tilde{\mu}, e^{-tG_N}).
\end{align}
At this point, up to increasing $N$ if necessary an application of
\eqref{Equation: Multiplicities Convergence 2} then yields
\[ \sum_{i=1}^{M} m_{a}(\lambda_i, G)\geq\sum_{\tilde{\mu}\in \sigma(e^{-tG_N})\cap B(\mu,\delta)} m_a(\tilde{\mu}, e^{-tG_N})
=m_a(\mu,T_t),\]
thus concluding the proof of \eqref{Equation: Multiplicity Identity}
and Proposition \ref{Proposition: Operator Theory}.

\section{Proof of Theorem \ref{Theorem: Rigidity}}
\label{Section: Rigidity}

In this section, we prove Theorem~\ref{Theorem: Rigidity}.
We suppose throughout that Assumptions~\ref{Assumption: Graph} and~\ref{Assumption: Potential and Noise} hold.
We begin with a notation:

\begin{notation}
Throughout this proof,
we denote $X$'s transition semigroup by
\[\Pi_t(u,v)=\mbf P^u[X(t)=v],\qquad t\geq0,~u,v\in\ms V.\]
\end{notation}

\subsection{Step 1. Boundedness}
\label{Section: Boundedness}

Our first step in the proof is to show that, almost surely, $K_t$ is a bounded linear operator
on $\ell^2_\ms Z(\ms V)$ with $\|K_t\|_{\mr{op}}\leq\mr e^{\om t}$
for every $t>0$ for some $\om<0$. As is typical in Schr\"odinger semigroup theory,
this relies on controlling the minimum of the random potential $V+\xi$. To this end, we have the following
result:

\begin{lemma}
\label{Lemma: Bounded Below V Plus xi}
Define the random variable
\begin{align}
\label{Equation: Omega Zero}
\om_0:=\inf_{v\in\ms V}\big(V(v)+\xi(v)\big).
\end{align}
$\om_0>-\infty$ almost surely.
\end{lemma}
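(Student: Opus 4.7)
The plan is to combine the subexponential tail bound on each $\xi(v)$ (which follows from the moment bound \eqref{Equation: Exponential Moments}) with the polynomial growth of the number of vertices at distance $n$ (from \eqref{Equation: Coordination Sequence}) and the fact that $V(v)$ grows like $\msf d(0,v)^\al$ to dominate any localized downward fluctuation of $\xi$. Since $\mr e^{-V(v)-\xi(v)}$ is interpreted as $0$ whenever $v\in\ms Z$, it suffices to work with the infimum over $v\in\ms V\setminus\ms Z$, where $V(v)$ is finite.

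First, I would translate \eqref{Equation: Exponential Moments} into a uniform exponential tail bound. Expanding the moment generating function gives
\[\sup_{v\in\ms V}\mbf E\big[\mr e^{\theta|\xi(v)|}\big]\leq\sum_{p=0}^\infty\frac{\theta^p}{p!}\,p!\mf m^p=\frac1{1-\theta\mf m}\]
for any $0<\theta<1/\mf m$. Taking, say, $\theta=1/(2\mf m)$ and applying Markov's inequality, I obtain a constant $\mf a>0$ such that $\mbf P[\xi(v)\leq-s]\leq 2\mr e^{-\mf a s}$ uniformly in $v\in\ms V$ and $s>0$.

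Next, using \eqref{Equation: Potential Growth}, I fix any $R$ large enough that $V(v)\geq\msf d(0,v)^\al$ for all $v$ with $\msf d(0,v)>R$, and set $\ms V_R:=\{v\in\ms V\setminus\ms Z:\msf d(0,v)\leq R\}$. Since $\ms G$ has uniformly bounded degrees, $\ms V_R$ is finite, so the inner minimum $\om_R:=\min_{v\in\ms V_R}(V(v)+\xi(v))$ is a real-valued random variable (and in particular a.s.\ finite). For vertices with $\msf d(0,v)=n>R$, a union bound based on \eqref{Equation: Coordination Sequence} and the tail bound above yields, for any $N>0$,
\[\mbf P\Big[\exists v\text{ with }\msf d(0,v)>R\text{ and }V(v)+\xi(v)\leq-N\Big]
\leq\sum_{n>R}\mf c\,n^{d-1}\cdot 2\mr e^{-\mf a(N+n^\al)}
=2\mr e^{-\mf a N}\sum_{n>R}\mf c\,n^{d-1}\mr e^{-\mf a n^\al}.\]
The sum over $n$ converges (since $\al>0$), so the right-hand side tends to $0$ as $N\to\infty$. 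Combining with the finiteness of $\om_R$, I conclude that $\om_0=\min\{\om_R,\inf_{\msf d(0,v)>R}(V(v)+\xi(v))\}>-\infty$ almost surely.

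The only mild obstacle is ensuring the tail bound on $\xi(v)$ is uniform in $v$, but this is immediate from the uniform moment bound in Assumption~\ref{Assumption: Potential and Noise}; no independence or covariance structure is needed here, only the marginal tails and the polynomial volume growth against the superlinear potential growth.
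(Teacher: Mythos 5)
Your argument is correct and uses the same ingredients as the paper's proof: the uniform exponential tail bound derived from \eqref{Equation: Exponential Moments}, a union bound driven by the polynomial volume growth \eqref{Equation: Coordination Sequence}, and the superlinear potential growth \eqref{Equation: Potential Growth}. The one structural difference is that the paper factors the argument through the intermediate statement
\begin{align*}
\liminf_{n\to\infty}\inf_{v\in\ms V:\,\msf d(0,v)\leq n}\frac{\xi(v)}{\log n}>-\infty\qquad\text{a.s.,}
\end{align*}
proved via Borel--Cantelli over the sequence of balls $\{\msf d(0,v)\leq n\}$, and then plays this logarithmic control of $\xi$'s downward fluctuations against the polynomial growth of $V$; you instead fold $V$ and $\xi$ together and bound the infimum over the single annulus $\{\msf d(0,v)>R\}$ directly. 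Both are valid. The paper's detour has the minor practical advantage that the logarithmic bound on $\xi$ is reused later in the paper (in the proof that $K_t$ is Hilbert--Schmidt in Section \ref{Section: Rigidity}), whereas your version proves the lemma but does not produce that quantitative by-product.
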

\begin{proof}
Thanks to \eqref{Equation: Potential Growth}, it suffices to prove that
\begin{align}
\label{Equation: Log n Borel-Cantelli Bound}
\liminf_{n\to\infty}\left(\inf_{v\in\ms V:~\msf d(0,v)\leq n}\frac{\xi(v)}{\log n}\right)>-\infty\qquad\text{almost surely}.
\end{align}
By a union bound and Markov's inequality, for every $\theta,\la>0$,
\[\mbf P\Big(\inf_{v\in\ms V:~\msf d(0,v)\leq n}\xi(v)\leq-\la\Big)
\leq\sum_{v\in\ms V:~\msf d(0,v)\leq n}\mr e^{-\theta\la}\mbf E\big[\mr e^{-\theta\xi(v)}\big].\]
On the one hand, thanks to \eqref{Equation: Coordination Sequence}, we have that
\[|\{v\in\ms V:\msf d(0,v)\leq n\}|\leq\mf c\sum_{m=1}^n m^{d-1}\leq\mf c+\mf c\int_1^nx^{d-1}\d x\leq Cn^d\]
for some constant $C>0$. On the other hand, thanks to the moment bound \eqref{Equation: Exponential Moments},
there exists a $\theta>0$ small enough so that
\[\sup_{v\in\ms V}\mbf E\big[\mr e^{-\theta\xi(v)}\big]<\infty.\]
Combining these two observations, we conclude that there exists $\tilde C,\theta>0$ such that
\[\mbf P\Big(\inf_{v\in\ms V:~\msf d(0,v)\leq n}\xi(v)\leq-\la\Big)\leq \tilde C n^d\mr e^{-\theta\la},\qquad\la>0.\]
If we take $\la=\la(n)=c\log n$ for large enough $c>0$,
then $\sum_{n\in\mbb N}\tilde C n^d\mr e^{-\theta\la(n)}<\infty$;
hence \eqref{Equation: Log n Borel-Cantelli Bound} holds by the
Borel-Cantelli lemma.
\end{proof}

As a direct application of Lemma \ref{Lemma: Bounded Below V Plus xi},
we have the inequality $K_t(u,v)\leq\mr e^{-\om_0t}\Pi_t(u,v)$ for every $u,v\in\ms V$,
where we take $\om_0$ as in \eqref{Equation: Omega Zero}.
In particular, $\|K_t\|_{\mr{op}}\leq\mr e^{-\om_0t}\|\Pi_t\|_{\mr{op}}$.
Given that $\om_0>-\infty$ almost surely by Lemma \ref{Lemma: Bounded Below V Plus xi},
it suffices to prove that $\Pi_t$ is bounded with $\|\Pi_t\|_{\mr{op}}\leq \mr e^{-t\om_1}$
for some constant $\om_1\leq 0$. We now prove this.

Note that for every $f\in\ell^2(\ms V)$, we have by Jensen's inequality that
\[\|\Pi_tf\|_{2}^2=\sum_{v\in\ms V}\mbf E^v\big[f\big(X(t)\big)\big]^2\leq\sum_{v\in\ms V}\mbf E^v\big[f\big(X(t)\big)^2\big]
=\sum_{u,v\in\ms V}\Pi_t(v,u)f(u)^2,\]
from which we conclude that
\[\|\Pi_t\|_{\mr{op}}\leq\sqrt{\sup_{u\in\ms V}\sum_{v\in\ms V}\Pi_t(v,u)}.\]
If we define the matrix
\[H_X(u,v):=\begin{cases}
-q(u)\Pi(u,v)&\text{if }u\neq v\\
q(u)&\text{if }u=v
\end{cases},\qquad u,v\in\ms V\]
(i.e., the Markov generator of $X$), then we can write
\[\sum_{v\in\ms V}\Pi_t(v,u)=\sum_{v\in\ms V}\sum_{n=0}^\infty\frac{(-t)^nH_X^n(v,u)}{n!}
\leq\sum_{n=0}^\infty\frac{t^n}{n!}\sum_{v\in\ms V}|H_X^n(v,u)|.\]
Noting that
\[\sup_{u,v\in\ms V}|H^n_X(u,v)|\leq\|H^n_X\|_{\mr{op}}\leq\|H_X\|^n_{\mr{op}},\]
for every $u,v\in\ms V$, we have the bound
\[|H_X^n(v,u)|\leq \|H_X\|^n_{\mr{op}}\mbf 1_{\{\msf d(u,v)\leq n\}}.\]
By \eqref{Equation: Coordination Sequence}, for any $u\in\ms V$,
the number of $v\in\ms V$ such that $(u,v)$ is an edge is bounded by $\mf c$.
Thus, the number of $v\in\ms V$ such that $\msf d(u,v)\leq n$ is crudely bounded by $\mf c^n$.
Consequently,
\[\|\Pi_t\|_{\mr{op}}^2\leq\sup_{u\in\ms V}\sum_{v\in\ms V}\Pi_t(v,u)\leq \sum_{n=0}^\infty\frac{(t\mf c\|H_X\|_{\mr{op}})^n}{n!}=\mr e^{\mf c\|H_X\|_{\mr{op}}t}.\]
Thus, it now suffices to prove that $\|H_X\|_{\mr{op}}<\infty$.

Recall that, by assumption, $\mf q:= \mathrm{sup}_{u\in \ms V} q(u)<\infty$. For every $f\in\ell^2(\ms V)$,
\[\|H_Xf\|_{2}^2\leq\mf q^2\sum_{u\in\ms V}\left(\sum_{v\in\ms V}\mbf 1_{\{(u,v)\in\ms E\}}f(v)\right)^2
\leq\mf q^22^{\mf c}\sum_{u,v\in\ms V}\mbf 1_{\{(u,v)\in\ms E\}}f(v)^2,\]
where the last inequality comes from the fact that
\[(x_1+\cdots+x_{\mf c})^2\leq2^{\mf c}(x_1^2+\cdots+x_{\mf c}^2),\qquad x_i\in\mbb R,\]
and that, by \eqref{Equation: Coordination Sequence},
for every $v\in\ms V$ there are at most $\mf c$ vertices $u$
such that $(u,v)\in\ms E$.
Using once again this last observation, we have that
\[\sum_{u,v\in\ms V}\mbf 1_{\{(u,v)\in\ms E\}}f(v)^2\leq\mf c\|f\|_2^2,\]
from which we conclude that
$\|H_X\|_{\mr{op}}^2\leq\mf q^22^{\mf c}\mf c,$
as desired.

\subsection{Step 2. Continuity of the Semigroup}

We now prove the almost-sure strong continuity and semigroup property.
Since $X$ is Markov and local time is additive, the semigroup property is trivial. We now prove strong continuity.
Let $C_{0,\ms Z}(\ms V)$ denote the set of functions $f:\ms V\to\mbb R$ that are finitely supported
on $\ms V\setminus\ms Z$.
Since $C_{0,\ms Z}(\ms V)$ is dense in $\ell^2_\ms Z(\ms V)$ and a semigroup of bounded linear
operators is strongly continuous if and only if it is weakly continuous
(e.g., \cite[Chapter I, Theorem 5.8]{EngelNagel}), it suffices to prove that
$\langle f,K_tg-g\rangle\to0$ as $t\to0$ for every $f,g\in C_{0,\ms Z}(\ms V)$.
For every $g\in C_{0,\ms Z}(\ms V)$, we know that
\[\lim_{t\to0}g\big(X(t)\big)\mr e^{-\langle L_t,V+\xi\rangle}=g\big(X(0)\big)\mbf 1_{\{X(0)\not\in\ms Z\}}=g\big(X(0)\big)\qquad\text{almost surely}.\]
By the definition of $\om_0$, it follows that $\langle L_t,V+\xi\rangle\geq \om_0 t$ which implies that
\[\big|g\big(X(t)\big)\mr e^{-\langle L_t,V+\xi\rangle}\big|\leq\|g\|_{\ell^\infty}\mr e^{-\om_0t}.\] Since the right-hand side of this inequality is independent of $X$, it follows from dominated
convergence that
\[\lim_{t\to0}K_tg(v)=\lim_{t\to0}\mbf E^v\left[g\big(X(t)\big)\mr e^{-\langle L_t,V+\xi\rangle}\right]=g(v)\qquad\text{almost surely}\]
for every $v\in\ms V$. Finally, given that for every $v\in\ms V$, we have 
\[\big|f(v)\big(K_tg(v)-g(v)\big)\big|\leq\|f\|_{\ell^\infty}\|g\|_{\ell^\infty}(\mr e^{-\om_0t}+1)\mbf 1_{\{f(v)\neq0\}},\]
which is summable in $v$ whenever $f\in C_{0,\ms Z}(\ms V)$,
we obtain $\langle f,K_tg-g\rangle\to0$ as $t\to0$ by dominated convergence.

\subsection{Step 3. Trace Class}

By the semigroup property, for every $t>0$, we can write
$K_t$ as the product $K_{t/2}K_{t/2}$. Thus, given that
the product of any two Hilbert-Schmidt operators is trace class
(e.g., \cite[Theorem 3.7.4]{Simon}),
it suffices to prove that, almost surely, $K_t$ is Hilbert-Schmidt
for all $t>0$, that is,
\[\sum_{u,v\in\ms V}K_t(u,v)^2<\infty.\]
By \eqref{Equation: Log n Borel-Cantelli Bound},
there exists finite random variables $\ka,\mu>0$ that only depend on $\xi$ such that
\[V(v)+\xi(v)\geq\big(\ka\msf d(0,v)\big)^\al-\mu,\qquad v\in\ms V\]
almost surely. Therefore, it suffices to prove the result with $K_t$ replaced by the kernel
\[\tilde K_t(u,v):=\mr e^{\mu t}\mbf E^u\left[\mr e^{-\langle L_t,(\ka\msf d(0,\cdot))^\al\rangle}\mbf 1_{\{X(t)=v\}}\right],\qquad u,v\in\ms V.\]
By Jensen's inequality,
\begin{align*}
\sum_{u,v\in\ms V}\tilde K_t(u,v)^2
&\leq\mr e^{2\mu t}\sum_{u,v\in\ms V}\mbf E^u\left[\mr e^{-2\langle L_t,(\ka\msf d(0,\cdot))^\al\rangle}\mbf 1_{\{X(t)=v\}}\right]\\
&\leq\mr e^{2\mu t}\sum_{u\in\ms V}\mbf E^u\left[\mr e^{-2\langle L_t,(\ka\msf d(0,\cdot))^\al\rangle}\right].
\end{align*}
At this point, the same argument used in \eqref{Equation: Tail Bound}, \eqref{eq:VBound}, and \eqref{Equation: Holder in Exponential Moment}
implies that there exists some finite constant $C_{\ka,t}>0$ (which depends on $\ka$ and $t$) such that
\[\sum_{u,v\in\ms V}\tilde K_t(u,v)^2\leq C_{\ka,t}\mr e^{2\mu t}\sum_{u\in\ms V}\mr e^{-2t(\ka\msf d(0,u))^\al}.\]
Then, writing the above sum as
\[\sum_{u\in\ms V}\mr e^{-2t(\ka\msf d(0,u))^\al}=\sum_{n\in\mbb N}\msf c_n(0)\mr e^{-2t(\ka n)^\al},\]
this is easily seen to be finite for all $t>0$ by \eqref{Equation: Coordination Sequence}.

\subsection{Step 4. Infinitesimal Generator}

We now prove the properties of the generator $H$,
except for number rigidity of its spectrum, which is relegated to the next (and final) step of the proof.
That $K_t$'s generator
is of the form \eqref{Equation: Schrodinger Generator} follows
from the straightforward computation that for every $u,v\in\ms V\setminus\ms Z$,
\[\lim_{t\to0}\frac{\mbf 1_{\{u=v\}}-K_t(u,v)}{t}=H(u,v)\qquad\text{almost surely}\]
(indeed, recall that by definition of the process $X$, $\Pi_t(u,v)=q(u)\Pi(u,v)t+o(t)$ as $t\to0$
whenever $u\neq v$, and that $K_t(u,v)=0$ if $u\in\ms Z$ or $v\in\ms Z$).

Almost surely, $(K_t)_{t>0}$ is a strongly continuous semigroup of trace class operators and $\|K_t\|_{\mr{op}}\leq\mr e^{-\om t}$.
Therefore, by Proposition \ref{Proposition: Operator Theory} (1)--(3), the following holds almost surely:
\begin{enumerate}
\item $H$ is closed and densely defined on $\ell^2_\ms Z(\ms V)$.
\item $\si(H)=\si_p(H)$, and $\Re(\la)\geq\om$ for all $\la\in\si(H)$.
\item For every $t>0$, $\si(K_t)\setminus\{0\}=\{\mr e^{-t\la}:\la\in\si(H)\}$.
\end{enumerate}
It now remains to establish the trace identity \eqref{Equation: Trace Identity}, which is crucial
in our proof of rigidity. The fact that $\mr{Tr}[K_t]$
is a positive real number follows from the fact that
\[\mr{Tr}[K_t]=\sum_{v\in\ms V}K_t(v,v)\]
and that $K_t(u,v)\in[0,\infty)$ for all $u,v\in\ms V$. To prove the remainder of \eqref{Equation: Trace Identity},
as per Proposition \ref{Proposition: Operator Theory}, we need to find a sequence of finite-dimensional
operators that converge to $H$ and $K_t$ in the sense of \eqref{Equation: Resolvent Convergence Assumption}
and \eqref{Equation: Semigroup Convergence Assumption}.

To this end, for every $n\in\mbb N$, let us denote the subset
\[\ms V_n:=\{v\in\ms V:\msf d(0,v)\leq n\}\subset\ms V.\]
Given that $\ms G$ has uniformly bounded degrees, this must be finite. Thus,
the operators
\[H_n(u,v):=H(u,v)\mbf 1_{\{(u,v)\in\ms V_n\}},\qquad u,v\in\ms V\]
are finite-dimensional in the sense of Definition \ref{Definition: Finite Dimensional}.
More specifically, $H_n$ is the restriction of $H$ to the set $\ms V_n$
with Dirichlet boundary on $\ms V\setminus\ms V_n$. In particular,
if for every $n\in\mbb N$ we denote the hitting time
\[\tau_n:= \inf_{t\geq0}\big\{t\geq 0: X(t)\not\in\ms V_n\big\},\]
Then $\mr e^{-tH_n}$ is the integral operator on $\ell^2_\ms Z(\ms V)$ with kernel
\begin{align}
\label{Equation: Finite Kernel}
\mr e^{-tH_n}(u,v)=\mbf E^u\left[\mr e^{-\langle L_t,V+\xi\rangle}\mbf 1_{\{X(t)=v\}}\mbf 1_{\{\tau_n>t\}}\right].
\end{align}
The proof of \eqref{Equation: Trace Identity} is now a matter of establishing the following result:

\begin{lemma}
Almost surely, it holds that
\begin{align}
\label{Equation: Resolvent Convergence Assumption 2}
\lim_{n\to\infty}\|\mf R(z,H_n)-\mf R(z,H)\|_{\mr{op}}=0
\end{align}
for every $z\in\mbb C$ such that $\Re(z)<\om$ and
\begin{align}
\label{Equation: Semigroup Convergence Assumption 2}
\lim_{n\to\infty}\|\mr e^{-t G_n}-K_t\|_{\mr{op}}=0
\end{align}
for every $t>0$.
\end{lemma}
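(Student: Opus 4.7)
The plan is to prove \eqref{Equation: Semigroup Convergence Assumption 2} first, by Hilbert--Schmidt dominated convergence, and then to deduce \eqref{Equation: Resolvent Convergence Assumption 2} from it via the Laplace transform representation of the resolvent.

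First I would record the pointwise domination $0\leq\mr e^{-tH_n}(u,v)\leq K_t(u,v)$ for all $u,v\in\ms V$, which is immediate from the Feynman--Kac representation \eqref{Equation: Finite Kernel} and the bound $\mbf 1_{\{\tau_n>t\}}\leq1$. Because both kernels are nonnegative, this kernel domination promotes to an operator-norm domination $\|\mr e^{-tH_n}\|_{\mr{op}}\leq\|K_t\|_{\mr{op}}\leq\mr e^{-\om t}$, using the standard fact that for two nonnegative kernels $A\leq B$ on $\ell^2$ one has $\|A\|_{\mr{op}}\leq\|B\|_{\mr{op}}$ (apply $|Af|\leq A|f|\leq B|f|$). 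This uniform bound on $\|\mr e^{-tH_n}\|_{\mr{op}}$ is what makes the later Laplace-transform argument go through.

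Next, for each fixed $u\in\ms V$, the hitting time $\tau_n\to\infty$ as $n\to\infty$ $\mbf P^u$-almost surely: since $X$'s jump rates are uniformly bounded by $\mf q$, the number of jumps in $[0,t]$ is dominated by a Poisson variable (as in \eqref{Equation: Tail Bound}), hence finite a.s., so the process travels a bounded graph distance in any finite time window. Dominated convergence inside the Feynman--Kac expectation (with majorant $\mr e^{-\om_0 t}$, which is finite by Lemma \ref{Lemma: Bounded Below V Plus xi}) yields $\mr e^{-tH_n}(u,v)\to K_t(u,v)$ pointwise. Combining this with the domination $(K_t(u,v)-\mr e^{-tH_n}(u,v))^2\leq K_t(u,v)^2$ and the Hilbert--Schmidt bound $\sum_{u,v}K_t(u,v)^2<\infty$ established in Step 3, a second dominated-convergence argument on the counting measure on $\ms V\times\ms V$ gives
\[\|K_t-\mr e^{-tH_n}\|_{\mr{op}}^2\leq\|K_t-\mr e^{-tH_n}\|_{\mr{HS}}^2=\sum_{u,v\in\ms V}\big(K_t(u,v)-\mr e^{-tH_n}(u,v)\big)^2\longrightarrow 0,\]
which is \eqref{Equation: Semigroup Convergence Assumption 2}.

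For \eqref{Equation: Resolvent Convergence Assumption 2}, I would use the standard Laplace representation of the resolvent for the generator of a semigroup with exponentially bounded norm: when $\Re(z)<\om$,
\[\mf R(z,H)=-\int_0^\infty\mr e^{zt}K_t\d t,\qquad \mf R(z,H_n)=-\int_0^\infty\mr e^{zt}\mr e^{-tH_n}\d t,\]
so that
\[\|\mf R(z,H_n)-\mf R(z,H)\|_{\mr{op}}\leq\int_0^\infty\mr e^{\Re(z)t}\big\|K_t-\mr e^{-tH_n}\big\|_{\mr{op}}\d t.\]
The integrand is bounded by the integrable function $2\mr e^{(\Re(z)-\om)t}$ thanks to the uniform-in-$n$ semigroup bound above, and tends to $0$ pointwise in $t$ by \eqref{Equation: Semigroup Convergence Assumption 2}, so dominated convergence in $t$ closes the argument. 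The only real obstacle is the uniform semigroup bound $\|\mr e^{-tH_n}\|_{\mr{op}}\leq\mr e^{-\om t}$, but this is precisely what the nonnegative-kernel domination delivers without any further work; everything else is routine.
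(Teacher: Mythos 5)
Your proof is correct and takes essentially the same route as the paper: establish Hilbert--Schmidt convergence of the semigroups by dominated convergence, then pass to the resolvents via the Laplace-transform representation. Your treatment of the semigroup step is marginally cleaner than the paper's (which bounds $(K_t-\mr e^{-tH_n})(u,v)^2$ by $\mbf E^u[\mr e^{-2\langle L_t,V+\xi\rangle}\mbf 1_{\{X(t)=v\}}]\,\mbf P^u[\tau_n\leq t]$ via H\"older and then applies DCT in $u$; you instead dominate directly by $K_t(u,v)^2$ and use pointwise kernel convergence), but the underlying input is identical, namely $\tau_n\to\infty$ almost surely via the Poisson tail bound \eqref{Equation: Tail Bound}. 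One small slip: with the paper's convention $\mf R(z,T)=(T-z)^{-1}$, the Laplace formula is $\mf R(z,H)=\int_0^\infty\mr e^{zt}K_t\d t$ for $\Re(z)<\om$ (no minus sign), but this does not affect the norm estimate that closes the argument.
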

\begin{proof}
Given that $0\leq\mr e^{-tH_n}(u,v)\leq K_t(u,v)$ for all $u,v\in\ms V$,
it is easy to see that $\|\mr e^{-tH_n}\|_{\mr{op}}\leq\|K_t\|_{\mr{op}}\leq\mr e^{-\om t}$
for all $t>0$ almost surely. In particular, any $z\in\mbb C$ such that $\Re(z)<\om$
is in the resolvent set of $H_n$ and $H$ for all $n$. Consequently, it follows
from \cite[Chapter II, Theorem 1.10]{EngelNagel} that
\[\|\mf R(z,H_n)-\mf R(z,H)\|_{\mr{op}}=\left\|\int_0^\infty\mr e^{tz}(\mr e^{-t G_n}-K_t)\d t\right\|_{\mr{op}}
\leq\int_0^\infty\mr e^{t z}\|\mr e^{-t G_n}-K_t\|_{\mr{op}}\d t,\]
where the last inequality follows from \cite[Chapter II, Theorem 4 (ii)]{DU77}.
Given that
\[\int_0^\infty\mr e^{t z}\|\mr e^{-t G_n}-K_t\|_{\mr{op}}\d t\leq\int_0^\infty\mr e^{t z}\big(\|\mr e^{-t G_n}\|_{\mr{op}}+\|K_t\|_{\mr{op}}\big)\d t
\leq2\int_0^\infty\mr e^{t(z-\om)}\d t<\infty\]
whenever $\Re(z)<\om$, we get that \eqref{Equation: Resolvent Convergence Assumption 2} is
a consequence of \eqref{Equation: Semigroup Convergence Assumption 2} by an application of the dominated convergence theorem.

Let us then prove \eqref{Equation: Semigroup Convergence Assumption 2}.
Since the Hilbert-Schmidt norm dominates the operator norm, it suffices to prove that
\begin{align}
\label{Equation: H-S Convergence}
\sum_{u,v\in\ms V}\big(\mr e^{-tG_n}(u,v)-K_t(u,v)\big)^2=\sum_{u,v\in\ms V}\mbf E^u\left[\mr e^{-\langle L_t,V+\xi\rangle}\mbf 1_{\{X(t)=v\}}\mbf 1_{\{\tau_n\leq t\}}\right]^2
\end{align}
vanishes as $n\to\infty$ for all $t>0$ almost surely. By H\"older's inequality,
the right-hand side of \eqref{Equation: H-S Convergence} is bounded above by
\[\sum_{u,v\in\ms V}\mbf E^u\left[\mr e^{-2\langle L_t,V+\xi\rangle}\mbf 1_{\{X(t)=v\}}\right]\mbf P^u[\tau_n\leq t].\]
By mimicking our proof that $K_t$ is trace class, we know that
\[\sum_{u,v\in\ms V}\mbf E^u\left[\mr e^{-2\langle L_t,V+\xi\rangle}\mbf 1_{\{X(t)=v\}}\right]<\infty\]
for every $t>0$ almost surely. Thus, by dominated convergence, it suffices to prove that
\[\lim_{n\to\infty}\mbf P^u[\tau_n\leq t]=0\]
for every $u\in\ms V$ and $t>0$. Noting that
\[\mbf P^u\left[\max_{0\leq s\leq t} \msf d\big(0,X(s)\big)> n\right]
\leq\mbf P^u\left[\max_{0\leq s\leq t} \msf d\big(u,X(s)\big)> n-\msf d(0,u)\right]\]
for all $n\in\mbb N$ by the triangle inequality,
this follows directly from
the tail bound \eqref{Equation: Tail Bound}.
\end{proof}

\subsection{Step 5. Rigidity}

It now only remains to prove that the point process
\eqref{Equation: Eigenvalue Point Process}
is number rigid in the sense of Definition \ref{Definition: Rigidity}. The proof of
this amounts to a minor modification of the argument in \cite[Theorem 6.1]{GP17} (see also
\cite[Proposition 2.2]{GGL20}).

Let $B\subset\mbb C$ be a Borel set such that
$B\subset(-\infty,\de]+\mr i[-\tilde \de,\tilde \de]$
for some $\de,\tilde\de>0$.
Thanks to the trace identity \eqref{Equation: Trace Identity},
almost surely,
we can write 
\[\mc X_H(B)=\sum_{\la\in\si(H)\cap B}m_a(\la,H)\]
as the sum of the following three terms:
\begin{align}
\label{Equation: Rigidity 1}
&\sum_{\la\in\si(H)}m_a(\la,H)\,\mr e^{-t\la}-\mbf E\left[\sum_{\la\in\si(H)}m_a(\la,H)\,\mr e^{-t\la}\right]=\mr{Tr}[K_t]-\mbf E\big[\mr{Tr}[K_t]\big],\\
\label{Equation: Rigidity 2}
&\sum_{\la\in\si(H)\cap B}m_a(\la,H)\left(1-\mr e^{-t\la}\right),\\
\label{Equation: Rigidity 3}
&\mbf E\left[\sum_{\la\in\si(H)}m_a(\la,H)\,\mr e^{-t\la}\right]
-\sum_{\la\in\si(H)\setminus B}m_a(\la,H)\,\mr e^{-t\la}.
\end{align}
Since we choose the exponent $\al$ in the same way as
Theorem \ref{Theorem: Upper}, \eqref{Equation: Rigidity 1} converges to zero as $t\to0$
almost surely along a subsequence.
Next, we have that \eqref{Equation: Rigidity 2} is bounded above
in absolute value by
\[\mc X_H(B)\sup_{\ze\in[\om,\de]+\mr i[\al,\be]}|1-\mr e^{-t\ze}|,\]
where we recall that $\om$ is the random lower bound on the real part
of the points in $\mc X_H$.
Since $\mc X_H$ is real-bounded below and $B\subset(-\infty,\de]+\mr i[-\tilde \de,\tilde \de]$,
$\mc X_H(B)<\infty$ almost surely. Thus, \eqref{Equation: Rigidity 2}
converges to zero almost surely as $t\to0$.
Thus, $\mc X_H(B)$ is the almost sure limit of \eqref{Equation: Rigidity 3}
as $t\to0$, along a subsequence.
Given that \eqref{Equation: Rigidity 3}
is measurable with respect
to the configuration of points outside of $B$ for every $t$ and that the almost-sure limit of measurable functions is measurable (assuming the sigma algebra is complete),
we conclude that $\mc X_H(B)$ is measurable with respect
to the configuration outside of $B$. This then concludes the proof
of number rigidity, and thus of Theorem \ref{Theorem: Rigidity}.

\begin{remark}
\label{Remark: Mechanism}
Referring back to the point raised in Section \ref{Section: Mechanism},
we see that the function denoted $\mc N_B$ therein satisfies the relation
\begin{align}
\label{Equation: Understanding of NB}
\mc N_B\big(\si(H)\setminus B\big)=\lim_{n\to\infty}\left(\mbf E\left[\sum_{\la\in\si(H)}m_a(\la,H)\,\mr e^{-t_n\la}\right]
-\sum_{\la\in\si(H)\setminus B}m_a(\la,H)\,\mr e^{-t_n\la}\right)
\end{align}
with probability one, where $(t_n)_{n\in\mbb N}$ is a sparse enough
sequence that vanishes in the large $n$ limit. In particular, understanding the
precise form of $\mc N_B$ relies, among other things, on understanding
how the divergences of the two terms inside the limit on the right-hand side of
\eqref{Equation: Understanding of NB} somehow cancel out as $n\to\infty$.
\end{remark}

\section{Proof of Theorem \ref{Theorem: Lower}}
\label{sec: Proof of Lower}

\subsection{Step 1. General Lower Bound}

We begin by providing a lower bound for $\mbf{Var}\big[\mr{Tr}[K_t]\big]$
in the general setting of the statement of
Theorem \ref{Theorem: Lower}. This bound will then be shown to remain positive
as $t\to0$ in the cases labelled (1)--(3).

Recalling that $\ga$ is the positive definite covariance function of $\xi$,
if we denote the semi-inner-product
\[\langle f,g\rangle_\ga:=\sum_{u,v\in\mbb Z^d}f(u)\ga(u-v)g(v),\qquad f,g:\mbb Z^d\to\mbb R,\]
then our assumption that $\ga$ is nonnegative implies that $\langle f,g\rangle_\ga\geq0$
whenever $f$ and $g$ are nonnegative. In particular, we have that
\begin{align}
\label{Equation: Lower Bound - Covariance}
\mbf{Cov}_\xi\big[\mr e^{-\langle L^u_t,\xi\rangle},\mr e^{-\langle\tilde L^v_t,\xi\rangle}\big]
=\mr e^{\frac12\langle L_t^u,L_t^u\rangle_\ga+\frac12\langle\tilde L_t^v,\tilde L_t^v\rangle_\ga}\left(\mr e^{\langle L_t^u,\tilde L_t^v\rangle_\ga}-1\right)\geq0.
\end{align}
For every $u,v\in\mbb Z^d$ and $t>0$, denote the event
$J_t(u,v):=\{L^u_t=t\mbf 1_u\text{ and }\tilde L^v_t=t\mbf 1_v\}$.
Clearly, $J_t(u,v)\subset\{X^u(t)=u,\tilde X^v(t)=v\}$, and
by independence of $X^u$ and $\tilde X^v$,
\begin{align}
\label{Equation: Lower Bound - J_t Event}
\inf_{u,v\in\mbb Z^d}\mbf P[J_t(u,v)]=\inf_{v\in\mbb Z^d}\mbf P^v[X(s)=v\text{ for every }s\leq t]^2
\geq\mr e^{-2t}.
\end{align}

We now combine \eqref{Equation: Lower Bound - Covariance}
and \eqref{Equation: Lower Bound - J_t Event} to lower bound the variance of $\mr{Tr}[K_t]$:
By Proposition \ref{Proposition: Variance Formula}, we may write 
\begin{align}
\mbf{Var}\big[\mr{Tr}[K_t]\big]&\geq\sum_{u,v\in\mbb Z^d}\mbf E\Big[\mr e^{-\langle L^u_t+\tilde L^v_t,V\rangle}
\mr e^{\frac12\langle L_t^u,L_t^u\rangle_\ga+\frac12\langle\tilde L_t^v,\tilde L_t^v\rangle_\ga}\left(\mr e^{\langle L_t^u,\tilde L_t^v\rangle_\ga}-1\right)
\mbf 1_{J_t(u,v)}\Big]\nonumber\\
&=\sum_{u,v\in\mbb Z^d}\mr e^{-tV(u)-tV(v)}\mr e^{t^2\ga(0)}\left(\mr e^{t^2\ga(u-v)}-1\right)\mbf P[J_t(u,v)]\nonumber\\
&\geq\mr e^{-2t+t^2\ga(0)}\sum_{u,v\in\mbb Z^d}\mr e^{-tV(u)-tV(v)}\left(\mr e^{t^2\ga(u-v)}-1\right)\nonumber\\
&=\mr e^{-2t+t^2\ga(0)}\sum_{u,v\in\mbb Z^d}\mr e^{-t\msf d(0,u)^\de-t\msf d(0,v)^\de}\left(\mr e^{t^2\ga(u-v)}-1\right),\label{eq:VarLowLast}
\end{align}
where the first line comes from \eqref{Equation: Lower Bound - Covariance} and
the fact that $\mbf E[Y]\geq\mbf E[Y\mbf 1_E]$ for any nonnegative random variable $Y$
and event $E$, the second line comes from the definition of the event $J_t(u,v)$,
the third line comes from \eqref{Equation: Lower Bound - J_t Event},
and the last line comes from the assumption on $V$ stated in Theorem \ref{Theorem: Lower}.
As $\mr e^{-2t+t^2\ga(0)}\to1$ as $t\to0$, we obtain our general lower bound:
\begin{align}
\label{Equation: General Lower Bound}
\liminf_{t\to0}\mbf{Var}\big[\mr{Tr}[K_t]\big]\geq\liminf_{t\to0}\sum_{u,v\in\mbb Z^d}\mr e^{-t\msf d(0,u)^\de-t\msf d(0,v)^\de}\left(\mr e^{t^2\ga(u-v)}-1\right).
\end{align}
We now prove that the right-hand side of \eqref{Equation: General Lower Bound} is positive in cases (1)--(3).

\subsection{Step 2. Three Examples}

Suppose first that $\de\leq d/2$ and $\ga(v)=\mbf 1_{\{v=0\}}$.
On the integer lattice $\mbb Z^d$, it is easy to see that there exists a
constant $C>0$ such that $\msf c_n(0)\geq Cn^{d-1}$.
Therefore,
by an application of \eqref{Equation: General Lower Bound}, followed by
the inequality $\mr e^{x}-1\geq x$ for all $x\geq0$ and a Riemann sum, we have that
\begin{multline*}
\liminf_{t\to0}\mbf{Var}\big[\mr{Tr}[K_t]\big]\geq\liminf_{t\to0}\left(\mr e^{t^2}-1\right)\sum_{v\in\mbb Z^d}\mr e^{-2t\msf d(0,v)^\de}
\geq\liminf_{t\to0}t^2\sum_{n\in\mbb N\cup\{0\}}\msf c_n(0)\mr e^{-2tn^\de}\\
\geq C \liminf_{t\to0}t^{2-d/\de}t^{1/\de}\sum_{n\in t^{1/\de}\mbb N\cup\{0\}}n^{d-1}\mr e^{-2n}
\geq C \int_0^\infty x^{d-1}\mr e^{-2x}\d x>0.
\end{multline*}
Next, suppose that $\de\leq d-\be/2$ and that
$\ga(v)\geq\mc L\big(\msf d(0,v)+1\big)^{-\be}$ for some $0<\be<d$ and $\mc L>0$. Then,
\eqref{Equation: General Lower Bound}, the triangle inequality,
and the same arguments as in the previous case yield
\begin{align*}
&\liminf_{t\to0}\mbf{Var}\big[\mr{Tr}[K_t]\big]\\
&\geq\liminf_{t\to0}\sum_{u,v\in\mbb Z^d}\mr e^{-t\msf d(0,u)^\de-t\msf d(0,v)^\de}
\left(\mr e^{\mc Lt^2(\msf d(u,v)+1)^{-\be}}-1\right)\\
&\geq \mc L \liminf_{t\to0}t^{2}\sum_{u,v\in\mbb Z^d}\mr e^{-t\msf d(0,u)^\de-t\msf d(0,v)^\de}
\big(\msf d(0,u)+\msf d(0,v)+1\big)^{-\be}\\
&=\mc L \liminf_{t\to0}t^{2}\sum_{m,n\in\mbb N\cup\{0\}}\msf c_m(0)\msf c_n(0)\,\mr e^{-tm^\de-tn^\de}(m+n+1)^{-\be}\\
&\geq\mc L C^2\liminf_{t\to0}t^{2-2(d-1)/\de+\be/\de}\sum_{m,n\in t^{1/\de}\mbb N\cup\{0\}}(mn)^{d-1}\mr e^{-m^\de-n^\de}(m+n+t^\de)^{-\be}\\
&=\mc L C^2\liminf_{t\to0}t^{2-2(d-\be/2)/\de}\int_0^\infty\int_0^\infty\frac{(xy)^{d-1}}{(x+y)^\be}\mr e^{-x^\de-y^\de}\d x\dd y>0.
\end{align*}
Finally, suppose that $\de\leq d$ and $\inf_{v\in\mbb Z^d}\ga(v)>\mc L>0$. In this case we obtain that
\begin{multline*}
\liminf_{t\to0}\mbf{Var}\big[\mr{Tr}[K_t]\big]\geq\liminf_{t\to0}\left(\mr e^{\mc Lt^2}-1\right)\sum_{u,v\in\mbb Z^d}\mr e^{-t\msf d(0,u)^\de-t\msf d(0,v)^\de}\\
\geq\mc L C^2\liminf_{t\to0}t^2\left(\sum_{n\in\mbb N}n^{d-1}\mr e^{-2tn^\de}\right)^2
=\mc L C^2\liminf_{t\to0}t^{2-2d/\de}\left(\int_0^\infty x^{d-1}\mr e^{-2x}\d x\right)^2>0,
\end{multline*}
thus concluding the proof.

\bibliographystyle{plain}
\bibliography{Bibliography}
\end{document}